\tikzstyle{background}=[rectangle,fill=gray!10, inner sep=0.1cm, rounded corners=0mm]
\newcommand{\N}{\mathbb N}
\newcommand{\wB}{\Box^{\mathsf{w}}}
\newcommand{\wU}{\until^{\mathsf{w}}}
\newcommand{\wF}{\fut^{\mathsf{w}}}
\newcommand{\until}{\:\mathsf{U}}
\newcommand{\since}{\:\mathsf{S}}
\newcommand{\R}{\:\mathbb{R}}
\newcommand{\fut}{\Diamond}
\newcommand{\past}{\mbox{$\Diamond\hspace{-0.27cm}-$}}
\mathchardef\mhyphen="2D
\mathchardef\mhyph="2D
\newcommand{\nex}{\mathsf{O}}
\newcommand{\mtlfull}{\mathsf{MTL}[\until_I, \since_I]}
\newcommand{\mtlu}{\mathsf{MTL}[\until_I,\since_I]}
\newcommand{\mtl}{\mathsf{MTL}[\until_I]}
\newcommand{\mtlfutpw}{\mathsf{MTL}[\fut_I]}
\newcommand{\mtlfutp}{\mathsf{MTL}[\fut_I, \past_I]}
\newcommand{\mtlsns}{\mathsf{MTL}[\until_I,\since_{np}]}
\newcommand{\mtluns}{\mathsf{MTL}[\until_{np},\since_I]}
\newcommand{\mitlfp}{\mathsf{MTL}[\fut_{np},\past_{np}]}
\newcommand{\gmtlfull}{\mathsf{MTL}[\until_I, \since_I]}
\newcommand{\oomit}[1]{}
\newtheorem{theorem}{Theorem}
\newtheorem{lemma}{Lemma}
\tikzstyle{nloc}=[draw, text badly centered, rectangle, rounded corners, minimum size=2em,inner sep=0.5em]
\tikzstyle{loc}=[draw,rectangle,minimum size=1.4em,inner sep=0em]
\tikzstyle{traNP}=[-latex, rounded corners]
\tikzstyle{traNP2}=[-latex, dashed, rounded corners]
\begin{document}
\title{Partially Punctual Metric Temporal Logic is Decidable}
\author{\IEEEauthorblockN{Khushraj Madnani}
 \IEEEauthorblockA{IIT Bombay}
  \and
  \IEEEauthorblockN{Shankara Narayanan Krishna}
  \IEEEauthorblockA{IIT Bombay}
  \and
  \IEEEauthorblockN{Paritosh K Pandya}
  \IEEEauthorblockA{TIFR}
}

\maketitle

Metric Temporal Logic $\mathsf{MTL}[\until_I,\since_I]$ is one of the most studied real time logics. It exhibits considerable diversity 
in expressiveness and decidability properties based on the permitted set of modalities and the nature of time interval constraints $I$. 
Henzinger et al., in their seminal paper showed that the non-punctual fragment of $\mathsf{MTL}$ called 
$\mathsf{MITL}$ is decidable. 
In this paper, we sharpen this decidability result by showing that the partially punctual fragment of $\mathsf{MTL}$ (denoted $\mathsf{PMTL}$) is decidable over strictly monotonic finite point wise time. In this fragment, we allow either punctual future modalities, or punctual past modalities, but never both together. We give two satisfiability preserving reductions from $\mathsf{PMTL}$ to the decidable logic $\mathsf{MTL}[\until_I]$. The first reduction uses simple projections, while the second reduction uses a novel technique of temporal projections with oversampling. We study the trade-off between the two reductions: while the second reduction allows the introduction of extra action points in the underlying model,  the equisatisfiable $\mathsf{MTL}[\until_I]$ formula obtained is exponentially succinct than the 
one obtained via the first reduction, where no oversampling of the underlying model is needed. 
We also show that $\mathsf{PMTL}$ is strictly more expressive than the fragments $\mathsf{MTL}[\until_I,\since]$ and $\mathsf{MTL}[\until,\since_I]$. 
\section{Introduction}
Metric Temporal Logic $\mathsf{MTL}$ is a well established logic useful for specifying quantitative properties of real time systems. The main modalities of  $\mathsf{MTL}$ are $\until_I$ (read ``until $I$'') and 
$\since_I$ (read ``since $I$''), where $I$ is a time interval with end points in $\mathbb{N}$. These 
formulae are interpreted over timed behaviours or timed words.  
A formula $a \until_{[2,3]} b$ holds at a position $i$ of a timed word iff there is a position $j$ 
strictly in the future of $i$ where $b$ holds, and 
at all intermediate positions between $i$ and $j$, $a$ holds good; moreover, the difference 
in the time stamps of $i$ and $j$ must lie in the interval [2,3].  
Similarly, $a \since_{[2,3]} b$ holds good at a point $i$ iff there is a position $j$ strictly 
in the past of $i$ where $b$ holds, and at all intermediate positions between $i$ and $j$ $a$ holds; 
further, the difference in the time stamps between $i$ and $j$ lie in the interval [2,3]. 
The intervals $I$ can be bounded 
of the form $\langle l, u \rangle$, or  
 unbounded of the form $\langle l, \infty)$, with $l, u \in \mathbb{N}$, and $\langle$ represents 
left closed or left open, while $\rangle$ represents right closed or right open intervals. The unary modalities 
$\fut_I$ (read ``fut $I$'') and $\past_I$ (read ``past $I$'') are special cases of until and since: 
$\fut_I a=true \until_I a$ and $\past_I a=true \since_I a$.

 The satisfiability question for various fragments of $\mathsf{MTL}$ has evoked lot of interest and work 
 over the past years. In their seminal paper, Alur and Henzinger showed that the satisfiability of $\mathsf{MTL}[\until_I, \since_I]$ is  undecidable, while the satisfiability of the ``non-punctual'' fragment $\mathsf{MITL}$ of  $\mathsf{MTL}[\until_I, \since_I]$ is decidable. As the name suggests, the non-punctual fragment disallows punctual intervals $I$: 
 these are intervals of the form $[t,t]$. The satisfiability of the future only fragment of $\mathsf{MTL}$, viz., 
 $\mathsf{MTL}[\until_I]$ was open for a long time, till Ouaknine and Worrell \cite{OuaknineW05} showed its decidability via  a reduction to 1-clock  alternating timed automata. 
 Even though the logic $\mathsf{MTL}[\until_I, \since]$ is more expressive than $\mtl$, 
 it was shown to be decidable \cite{deepak08} by an equisatisfiable reduction to $\mtl$. The 
 decidability of the unary fragment $\mathsf{MTL}[\fut_I, \past_I]$ has remained open for a long time, 
 it was recently shown undecidable \cite{our}.  The only fragment whose decidability is unknown 
 is thus, the ``partially punctual fragment'' of $\mathsf{MTL}$, where 
 we allow punctualities only in the future or in the past modalities, but never in both. 
 The main result of this paper is the decidability of the partially punctual fragment of $\mathsf{MTL}$ 
 for finite strictly monotonic timed words; our results can be adapted to work for 
 weakly monotonic finite words.
     

\section{Metric Temporal Logic}
  Let $\Sigma$ be a finite set of propositions. A finite timed word over $\Sigma$ is a tuple
$\rho = (\sigma,\tau)$ where $\sigma$ and $\tau$ are sequences $\sigma_1\sigma_2\ldots\sigma_n$ and  $\tau_1\tau_2\ldots\tau_n$ respectively, with $\sigma_i \in 2^{\Sigma}-\emptyset$,  and $t_i \in \R_{\geq 0}$
 for $1 \leq i \leq n$.   Let  $dom(\rho)$ be the set of positions $\{1,2,\ldots,n\}$ in the timed word.
Let $\Sigma=\{a,b\}$. An example of a timed word is $(\{a,b\},0.3)(\{b\}, 0.7)(\{a\},1.1)$.
$\rho$ is strictly monotonic iff $t_i < t_{i+1}$ for all $i,i+1 \in dom(\rho)$. 
Otherwise, it is weakly monotonic.
 Given $\Sigma$,  the formulae of $\mathsf{MTL}$ are built from $\Sigma$  using boolean connectives and 
time constrained versions of the modalities $\until$ and  $\since$ as follows: \\
$\varphi::=a (\in \Sigma)~|true~|\varphi \wedge \varphi~|~\neg \varphi~|
~\varphi \until_I \varphi~|~\varphi \since_I \varphi$\\
where  $I$ is an open, half-open or closed interval 
with end points in $\N \cup \{\infty\}$.    

Formulae of $\mathsf{MTL}$
are interpreted over timed words 
over a chosen set of propositions. 
Let $\varphi$ be an $\mathsf{MTL}$ formula.
If $\varphi$ is interpreted over timed words over  
$\Delta$, then we say that $\varphi$ is interpreted over $\Delta$.
Note that this is different from saying $\varphi$ is built from 
a set of propositions $\Sigma$: this just means that the propositions 
in $\varphi$ are taken from $\Sigma$.

\label{point}
Given a finite  timed  word $\rho$, and an $\mathsf{MTL}$ formula $\varphi$, in the pointwise semantics, 
the temporal connectives of $\varphi$ quantify over a finite set of positions 
in $\rho$.  For an alphabet $\Sigma$, a timed word $\rho=(\sigma, \tau)$, a position 
$i \in dom(\rho)$, and an $\mathsf{MTL}$ formula $\varphi$, the satisfaction of $\varphi$ at a position $i$ 
of $\rho$ is denoted $(\rho, i) \models \varphi$, and is defined as follows:

\noindent $\rho, i \models a$  $\leftrightarrow$  $a \in \sigma_{i}$\\
$\rho,i  \models \neg \varphi$ $\leftrightarrow$  $\rho,i \nvDash  \varphi$\\
$\rho,i \models \varphi_{1} \wedge \varphi_{2}$   $\leftrightarrow$ 
$\rho,i \models \varphi_{1}$ 
and $\rho,i\ \models\ \varphi_{2}$\\
$\rho,i\ \models\ \varphi_{1} \until_{I} \varphi_{2}$  $\leftrightarrow$  $\exists j > i$, 
$\rho,j\ \models\ \varphi_{2}, t_{j} - t_{i} \in I$,\\
$~~~~~~~~~~~~~~~~~~~~~~~~~~~~~$ and  $\rho,k\ \models\ \varphi_{1}$ $\forall$ $i< k <j$\\
$\rho,i\ \models\ \varphi_{1} \since_{I} \varphi_{2}$  $\leftrightarrow$  $\exists\ j < i$,  
 $\rho,j\ \models\ \varphi_{2}$, $t_{i} - t_{j} \in I$,  \\
$~~~~~~~~~~~~~~~~~~~~~~~~~~~~~$ and  $\rho,k\ \models\ \varphi_{1}$ $\forall$  $j<k < i$

\noindent $\rho$ satisfies $\varphi$ denoted $\rho \models \varphi$ 
iff $\rho,1 \models \varphi$. Let $L(\varphi)=\{\rho \mid \rho, 1 \models \varphi\}$. 
The set of all timed words over $\Sigma$ is denoted $T\Sigma^*$.

 A non-punctual interval  
has the form  $\langle a, b \rangle$ with $a \neq b$. 
We denote by $\mathsf{MTL}[\until_I,\since_{np}]$ the class of 
$\mathsf{MTL}$ formulae with non-punctual  
past modalities. Similarly, $\mathsf{MTL}[\until_{np},\since_I]$ is the class of 
$\mathsf{MTL}$ formulae with non-punctual
future modalities. The class of partially punctual 
$\mathsf{MTL}$ formulae, 
 $\mathsf{PMTL}$ consists of all formulae 
 with non-punctual future or non-punctual past.  
 $\mathsf{PMTL}=\mathsf{MITL} \cup \mathsf{MTL}[\until_{np},\since_I] \cup \mathsf{MTL}[\until_I, \since_{np}]$.

 Additional temporal connectives are defined in the standard way: 
we have the constrained future and past eventuality operators $\fut_I a \equiv true \until_I a$ and 
$\past_I a \equiv true \since_I a$, and their duals  
$\Box_I a \equiv \neg \fut_I \neg a$, 
$\boxminus_I a \equiv \neg \past_I \neg a$. Weak versions of  operators 
are defined as : $\wF a=a \vee \fut a, 
\wB a= a \wedge \Box a$, $a \wU b= b \vee [a \wedge (a \until b)]$.


 \section{Temporal Projections}
\label{remove-past}
In this section, we discuss the notion of ``temporal projections'' that are central to this paper. We discuss two 
kinds of temporal projections: simple projections, and oversampling projections. 
 
\subsection{Simple Extensions and Projections}
\label{simple}
 \noindent {\it \underline{$(\Sigma,X)$-simple extensions}}: 
 Let $\Sigma,X$ be finite sets of propositions such that $\Sigma \cap X=\emptyset$. 
 A $(\Sigma,X)$-simple extension is a timed word $\rho$ over $X \cup \Sigma$ such that at any point $i \in dom(\rho)$, $\sigma_i \cap \Sigma \ne \emptyset$. 
    For $\Sigma=\{a,b\}, X=\{c,d\}$,  
   $(\{a\},0.2)(\{b,c,d\},0.3)(\{b,d\},1.1)$ is a  $(\Sigma,X)$-simple extension. However,  
    $(\{a\},0.2)(\{c,d\},0.3)(\{b,d\},1.1)$ is not a $(\Sigma,X)$-simple extension for 
    the same choice of $\Sigma,X$, since for the position $i=2$, $\{c,d\} \cap \Sigma = \emptyset$. \\
\noindent{\it \underline{Simple Projections}}:
Consider a $(\Sigma,X)$-simple extension $\rho$. 
We define the {\it simple projection} of $\rho$ with respect to $X$, denoted $\rho \setminus X$ as the word obtained by 
 erasing the symbols of $X$ from each $\sigma_i$.  Note that $dom(\rho)=dom(\rho \setminus X)$.
For example, if $\Sigma=\{a,c\}$, $X=\{b\}$, and 
$\rho=(\{a,b,c\},0.2)(\{b,c\},1)(\{c\},1.3)$, then 
$\rho \setminus X=(\{a,c\},0.2)(\{c\},1)(\{c\},1.3)$. 
$\rho \setminus X$ is thus, a timed word over $\Sigma$.
If the underlying word $\rho$ is {\it not} a $(\Sigma,X)$-simple extension, then 
the simple projection  of $\rho$ with respect to $X$ is {\it undefined}.

\noindent{\it{\underline{Equisatisfiability modulo Simple Projections}}}: 
Given $\mathsf{MTL}$ formulae $\psi$ and $\phi$, we say that $\phi$ is equisatisfiable to $\psi$ 
{\it modulo simple projections} iff there exist disjoint sets  $\Sigma,X$ such that 
\begin{enumerate}
\item $\phi$ is interpreted over $\Sigma$, and $\psi$ is interpreted over $\Sigma \cup X$,
    \item For any timed word $\rho$ over $\Sigma \cup X$, 
     $(\rho \models \psi) \rightarrow$ \\ $\rho$ is a  $(\Sigma,X)$-simple extension and $\rho \setminus X \models \phi,$
 \item For any timed word $\rho$ over $\Sigma$ such that  $\rho \models \phi$,  $\exists$  a $(\Sigma,X)$-simple extension $\rho'$ such that $\rho' \models \psi$, and \\
 $\rho' \setminus X = \rho$.
\end{enumerate}
We denote by $\phi=\exists X. \psi$, 
the fact that $\phi$ is equisatisfiable to $\psi$ modulo simple projections. \\
\noindent{\it \underline{Extended Normal Form(ENF)}}: 
Given a formula $\varphi$ built from $\Sigma' \supseteq \Sigma$, 
the extended normal form of $\varphi$ with respect to $\Sigma$ denoted $ENF_{\Sigma}(\varphi)$  
 is the formula $\varphi \wedge \Box (\bigvee \Sigma)$. 

\begin{lemma}[Boolean Closure Lemma]
\label {lem:boolclosedequis}
Let $\varphi_1, \varphi_2$ be formulae built from  $\Sigma$. Let $\psi_1,\psi_2$ be formulae built 
from $\Sigma\cup X_1$ and $\Sigma \cup X_2$ respectively. 
Let $\Sigma_i=\Sigma \cup X_i$ for $i=1,2$, and let $X_1 \cap X_2 = \emptyset$. Then,
  $(\varphi_1 = \exists X_1. \psi_1$ and  $\varphi_2 = \exists  X_2. \psi_2)$ $\rightarrow$ $\varphi_1 \wedge \varphi_2 = \exists (X_1 \cup X_2). (\psi_1 \wedge \psi_2)$.
\end{lemma}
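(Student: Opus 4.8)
The plan is to check, for the pair $(\varphi_1\wedge\varphi_2,\ \psi_1\wedge\psi_2)$ together with the auxiliary alphabet $X:=X_1\cup X_2$ (which is disjoint from $\Sigma$, since each $X_i$ is), the three conditions defining ``$\varphi_1\wedge\varphi_2=\exists X.(\psi_1\wedge\psi_2)$''. Condition~(1) is immediate: $\varphi_1\wedge\varphi_2$ is built from, hence interpreted over, $\Sigma$, and $\psi_1\wedge\psi_2$ is interpreted over $\Sigma\cup X_1\cup X_2$. The workhorse for the other two conditions is an elementary \emph{insensitivity} fact: if $\theta$ is built from $\Delta$, $\Delta'\cap\Delta=\emptyset$, and $\rho$ is a timed word over $\Delta\cup\Delta'$ for which $\rho\setminus\Delta'$ is defined, then $\rho\models\theta$ iff $\rho\setminus\Delta'\models\theta$. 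This follows by a routine structural induction on $\theta$, using that simple projection leaves $dom(\rho)$ and the timestamp sequence untouched and that the letters of $\Delta'$ never occur in $\theta$. I will also use that $\psi_1$ (resp.\ $\psi_2$) contains no letter of $X_2$ (resp.\ $X_1$).

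For condition~(3), I would start from a timed word $\rho$ over $\Sigma$ with $\rho\models\varphi_1\wedge\varphi_2$, so $\rho\models\varphi_1$ and $\rho\models\varphi_2$. Condition~(3) of the hypotheses supplies a $(\Sigma,X_1)$-simple extension $\rho_1$ with $\rho_1\models\psi_1$, $\rho_1\setminus X_1=\rho$, and a $(\Sigma,X_2)$-simple extension $\rho_2$ with $\rho_2\models\psi_2$, $\rho_2\setminus X_2=\rho$. Since simple projection preserves the domain and the timestamps, $\rho,\rho_1,\rho_2$ share the same domain and the same $\tau$, and at each position the $\Sigma$-part of $\rho_1$ and of $\rho_2$ both equal $\sigma_i$. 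I then \emph{merge} them into the word $\rho'$ over $\Sigma\cup X_1\cup X_2$ with label $\sigma'_i:=\sigma^{(1)}_i\cup\sigma^{(2)}_i$ (equivalently $\sigma_i\cup(\sigma^{(1)}_i\cap X_1)\cup(\sigma^{(2)}_i\cap X_2)$) and the common timestamps. A short computation using $X_1\cap X_2=\emptyset$ gives $\rho'\setminus X_2=\rho_1$, $\rho'\setminus X_1=\rho_2$, and $\rho'\setminus(X_1\cup X_2)=\rho$; in particular every position of $\rho'$ carries a $\Sigma$-letter, so $\rho'$ is a $(\Sigma,X_1\cup X_2)$-simple extension. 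Applying insensitivity to $\psi_1$ (via $\rho'\setminus X_2=\rho_1$) and to $\psi_2$ (via $\rho'\setminus X_1=\rho_2$) yields $\rho'\models\psi_1$ and $\rho'\models\psi_2$, hence $\rho'\models\psi_1\wedge\psi_2$, which is condition~(3).

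For condition~(2), I would take $\rho$ over $\Sigma\cup X_1\cup X_2$ with $\rho\models\psi_1\wedge\psi_2$ and must show that $\rho$ is a $(\Sigma,X_1\cup X_2)$-simple extension and that $\rho\setminus(X_1\cup X_2)\models\varphi_1\wedge\varphi_2$. The substantive part is the simple-extension claim, i.e.\ that every position of $\rho$ carries a $\Sigma$-letter; here I would use that $\psi_1$ and $\psi_2$ genuinely force $\Sigma$ at every position of any model (regardless of the ambient alphabet), which holds in the intended setting where $\psi_1,\psi_2$ are taken in the extended normal form $ENF_\Sigma$, i.e.\ carry the conjunct $\Box(\bigvee\Sigma)$. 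Once $\rho$ is known to be a simple extension over $\Sigma$, the word $\rho\setminus X_2$ is a well-defined $(\Sigma,X_1)$-simple extension, and by insensitivity $\rho\setminus X_2\models\psi_1$; condition~(2) of the hypothesis then gives $(\rho\setminus X_2)\setminus X_1=\rho\setminus(X_1\cup X_2)\models\varphi_1$. Symmetrically, projecting $X_1$ away first gives $\rho\setminus(X_1\cup X_2)\models\varphi_2$, and therefore $\rho\setminus(X_1\cup X_2)\models\varphi_1\wedge\varphi_2$.

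The step I expect to be the main obstacle is precisely the simple-extension claim inside condition~(2): passing from ``each $\psi_i$ forces a $\Sigma$-letter everywhere on $\Sigma\cup X_i$-words'' to the same statement for $\psi_1\wedge\psi_2$ on $\Sigma\cup X_1\cup X_2$-words, and thereby ensuring that the simple projections $\rho\setminus X_1$ and $\rho\setminus X_2$ are defined. This is the only place where the disjointness $X_1\cap X_2=\emptyset$ is genuinely needed --- otherwise a $\Sigma$-free position could be ``covered'' only through $X_1$ from the viewpoint of $\psi_1$ and only through $X_2$ from the viewpoint of $\psi_2$ --- and it is the one point where the exact form of the hypotheses (the $ENF_\Sigma$ discipline, via the soundness clause) matters. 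Everything else --- the merge in condition~(3) and the repeated appeals to the insensitivity fact --- is routine bookkeeping.
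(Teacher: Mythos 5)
Your proposal is correct and follows essentially the same route as the paper's own proof in Appendix~A: for condition~(3) you merge the two witnesses $\rho_1,\rho_2$ position-by-position (the paper calls this the composition $\zeta'_1\oplus\zeta'_2$), and for condition~(2) you project away $X_2$ (resp.\ $X_1$), invoke insensitivity of $\psi_1$ (resp.\ $\psi_2$) to letters it does not mention, and then apply the soundness clause of each hypothesis. The one place where you go beyond the paper is instructive: the paper simply asserts ``$\zeta'\models\psi_1\rightarrow\zeta'\setminus X_2\models\psi_1$'' and that $\zeta'\setminus X_2$ is a $(\Sigma,X_1)$-simple extension, whereas you correctly isolate this as the delicate step --- for $\rho\setminus X_2$ to even be a legal timed word, no position of $\rho$ may be labelled only by $X_2$-letters, and the hypothesis $\varphi_1=\exists X_1.\psi_1$ only constrains models of $\psi_1$ over $\Sigma\cup X_1$, not over the enlarged alphabet. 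Your observation that this requires the $\psi_i$ to force $\bigvee\Sigma$ at every position (the $ENF_\Sigma$ discipline) is a genuine strengthening of the hypotheses as literally stated, but it matches how the lemma is actually deployed in the paper, and without some such assumption the statement admits counterexamples (e.g.\ $\psi_1=\wB a\vee\fut(\neg a\wedge\neg p\wedge\neg c)$ over $\Sigma=\{a,c\}$, $X_1=\{p\}$, which is vacuously well-behaved over $\Sigma\cup X_1$ but not over $\Sigma\cup X_1\cup X_2$). So your write-up is, if anything, more careful than the paper's at the only non-routine point.
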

\begin{proof}
The proof can be found in Appendix \ref{proofs-1}.
\end{proof}

\subsection{ Flattening} 
\label{sec-flat}
Let $\varphi \in \mtlfull$ built from  
 $\Sigma$.
Given any sub-formula $\psi_i$ of $\varphi$, 
and a fresh symbol $b_i \notin \Sigma$, $T_i=\wB(\psi_i  \leftrightarrow b_i)$ 
is called a {\it temporal definition} and  $b_i$ is called a {\it witness}. 
Let $\psi=\varphi[b_i /\psi_i]$ be the formula 
obtained by replacing all occurrences of $\psi_i$ in $\varphi$, with the witness $b_i$.    
Flattening is done recursively until we have replaced all future/past modalities of interest
with witness variables, obtaining $\varphi_{flat}=\psi \wedge T$, where $T$ is the conjunction 
of all temporal definitions. Let $W$ be the set 
of all witness propositions.
For example, consider the formula $\varphi=a \until_{[0,3]}(c \since(\past_{[0,1]}d))$. 
Replacing the $\since, \past$ modalities with witness propositions 
$w_1$ and $w_2$ we get $\psi=a \until_{[0,3]}w_1$, 
along with the temporal definitions 
$T_1=\wB(w_1 \leftrightarrow (c \since w_2))$ and  
  $T_2=\wB(w_2 \leftrightarrow 
  \past_{[0,1]}d)$. Hence, $\varphi_{flat}=\psi \wedge T_1 \wedge T_2$ is obtained 
  by flattening the $\since,\past$ modalities from $\varphi$.  
   Here $W=\{w_1,w_2\}$.  Note that $\varphi_{flat}$ is a formula 
   built from $\Sigma \cup W$.
   
 Given  a timed word $\rho$ over $\Sigma$, 
 flattening marks precisely positions in $\rho$ satisfying $\psi_i$ with witnesses $b_i$. This marked word $\rho'$ 
 over $\Sigma \cup W$  satisfies $\varphi_{flat}$ iff  
$\rho \models \varphi$.   
    Hence, we have  
    $\varphi = \exists W. ENF_{\Sigma}(\varphi_{flat})$.
$ENF_{\Sigma}(\varphi_{flat})$ ensures that any timed word $\rho'$ over $\Sigma \cup W$ 
that satisfies $\varphi_{flat}$ is indeed a $(\Sigma,W)$-simple extension. 
   $L(ENF_{\Sigma}(\varphi_{flat}))$ is the set of all those $(\Sigma,W)$-simple extensions 
   $\rho'$ satisfying $\varphi_{flat}$ such that $\rho' \setminus W=L(\varphi)$.

      \subsection{Oversampled Behaviours and Projections}
      \label{oversample}

 \noindent {\it \underline{$(\Sigma,X)$-oversampled behaviours}}: 
 Let $\Sigma,X$ be finite sets of propositions such that $\Sigma \cap X=\emptyset$. 
 A $(\Sigma,X)$-oversampled behaviour is a timed word $\rho'=(\sigma', \tau')$ over $X \cup \Sigma$, such that 
 $\sigma'_1 \cap \Sigma \neq \emptyset$ and  $\sigma'_{|dom(\rho')|} \cap \Sigma \neq \emptyset$.
   For $\Sigma=\{a,b\}, X=\{c,d\}$,  
   $(\{a\},0.2)(\{c,d\},0.3)(\{a,b\},,0.7)(\{b,d\},1.1)$ 
 is a  $(\Sigma,X)$ oversampled behaviour, while 
   $(\{a\},0.2)(\{c,d\},0.3)(\{c\},1.1)$ is not.  
      If $\rho$ is  a $(\Sigma,X)$-oversampled behaviour, 
     then points $i$ where $\bigvee \Sigma$ is not true are called 
     {\it non-action points}.  Hence, in any  $(\Sigma,X)$-oversampled behaviour, 
     the first as well as the last points are action points.\\ 
\noindent {\it \underline{Oversampled Projections}}:      
     Given a $(\Sigma,X)$-oversampled behaviour $\rho'=(\sigma',\tau')$, we define 
the {\it oversampled projection} of $\rho'$ with respect to $\Sigma$, 
 denoted $\rho' \downarrow X$ as the timed word obtained 
  by deleting points $i$ for which $\sigma'_i \cap \Sigma=\emptyset$, and then 
    erasing the symbols of $X$  from the remaining points $j$ ($\sigma'_j \cap \Sigma \neq \emptyset$).
  The result of oversampling,   $\rho$=$\rho'\downarrow X$ is a timed word over $\Sigma$.  
If  $\rho=\rho'\downarrow X$, there exists a strictly increasing function 
$f:\{1,2,\dots,n\} \rightarrow \{1,2,\dots,m\}$ such that $n=|dom(\rho)|$, $m=|dom(\rho')|$, and  
\begin{itemize}
\item $f(1)=1$, $\sigma_1=\sigma'_1 \cap \Sigma$, $\tau_1=\tau'_1$, and 
\item $f(n)=m$, $\sigma_n=\sigma'_m \cap \Sigma$, $\tau_n=\tau'_m$, and 
 \item For $1 \leq i \leq n-1$, $f(i)=j$ and $f(i+1)=k$ iff 
 \begin{itemize}
 \item $\sigma_i=\sigma'_j \cap \Sigma$,  
  and $\tau_i=\tau'_j$,
 \item $\sigma_{i+1}=\sigma'_k \cap \Sigma$, 
 and   $\tau_{i+1}=\tau'_k$, 
 \item For all $j < l < k$, $\sigma'_l \cap \Sigma=\emptyset$. 
 \end{itemize}
 \end{itemize}
 
For ${\rho'=(\{a\},0.2)(\{a,c\},0.7)(\{c\},0.9)(\{b,d\},1.1)}$, a $(\Sigma,X)$-oversampled behaviour for ${\Sigma=\{a,b\}, X=\{c,d\}}$,  
we have  $\rho' \downarrow  X=(\{a\},0.2)(\{a\},0.7)(\{b\},1.1)$. We have ${f:\{1,2,3\} \rightarrow \{1,2,3,4\}}$ with 
${f(1)=1}, {f(2)=2}$, and ${f(3)=4}$.

\noindent{\it \underline{Equisatisfiability modulo Oversampled Projections}}:
Given $\mathsf{MTL}$ formulae $\psi$ and $\phi$, we say that $\phi$ is equisatisfiable to $\psi$ 
{\it modulo oversampled projections} iff there exist disjoint sets $X, \Sigma$ such that 
\begin{enumerate}
\item $\phi$ is interpreted over $\Sigma$, and $\psi$ over $\Sigma \cup X$,
    \item For any $(\Sigma,X)$-oversampled behaviour $\rho'$,\\ 
     $\rho' \models \psi \rightarrow \rho' \downarrow X  \models \phi$
 \item For any timed word $\rho$ over $\Sigma$ such that  $\rho \models \phi$,  there exists  a $(\Sigma,X)$-oversampled behaviour  $\rho'$  such that $\rho' \models \psi$, and 
 $\rho'\downarrow X  = \rho$. 
\end{enumerate}
We denote by 
$\phi=\exists  \downarrow X. \psi$
 the fact that $\phi$ is equisatisfiable to $\psi$ modulo  oversampled projections. 
The above conditions establish the existence of {\it some} $(\Sigma,X)$-oversampled behaviour $\rho'$ 
corresponding to $\rho$ that satisfies $\psi$, when $\rho$ satisfies $\varphi$.
If condition 3 above holds for all possible $(\Sigma,X)$-oversampled behaviours, i.e, 
\begin{itemize}
\item if for any timed word $\rho$ over $\Sigma$ such that  $\rho \models \varphi$, all $(\Sigma,X)$-oversampled behaviours  $\rho'$  for which \\$\rho'\downarrow X = \rho$ satisfy $\psi$,
\end{itemize}
then we say that $\varphi$ and $\psi$ are {\it equivalent modulo oversampled projections} and denote it by 
$\varphi= \forall \downarrow. \psi$

\noindent{\it \underline{Oversampled Normal Form}} (ONF): 
Let $\psi$ be a formula built from $\Sigma \cup X$.  
Let $act$ denote $\bigvee \Sigma$. 
The {\it oversampled normal form} with respect to $\Sigma$ 
of $\psi$ denoted $ONF_\Sigma(\psi)$ is obtained by  replacing recursively 
\begin{itemize}
\item all subformulae of the form $a \in \Sigma$ by $a \wedge act$,
\item all subformulae 
of the form $\phi_i \until_I \phi_j$  with  \\ $(act \rightarrow ONF_{\Sigma}(\phi_i)) \until_I (ONF_{\Sigma}(\phi_j)\wedge act)$,
\item all subformulae of the form  $\phi_i \since_I \phi_j$ with \\ $(act \rightarrow ONF_{\Sigma}(\phi_i)) \since_I(ONF_{\Sigma}(\phi_j) \wedge act)$.
\item all subformulae of the form $\Box_I \phi$ with \\ $\Box_I(act \rightarrow ONF_{\Sigma}(\phi))$, and 
all subformulae of the form $\fut_I \phi$ with 
$\fut_I(\phi \wedge act)$.
      \end{itemize}  
and conjuncting the resultant formulae with 
$act \wedge (\Box \bot \rightarrow act)$.

 Let  $\psi=
 \varphi_1 \until_I (\varphi_2 \wedge \Box \varphi_3)$, 
and $\zeta_i{=}ONF_{\Sigma}(\varphi_i)$ for $i{=}1,2,3$. \\
Then 
  $ONF_{\Sigma}(\psi){=}
  (act {\rightarrow} \zeta_1) \until_I (act \wedge [\zeta_2 \wedge \Box (act {\rightarrow} \zeta_3)])
 \wedge \\  act \wedge (\Box \bot {\rightarrow} act)$
  where 
 $act$ denotes $\bigvee \Sigma$. 
Proofs of Lemmas \ref{lem:gen}, \ref{onf1}  and \ref{lem:boolclosedequis-2}
can be found in Appendices \ref{proofs-2}, \ref{proofs:onf1} and  
\ref{proofs-3}.
  \begin{lemma}[Oversampling Closure Lemma]
\label{lem:gen}
Let $\varphi$ be a formula built from $\Sigma$. 
Then $\varphi= \forall \downarrow. ONF_\Sigma(\varphi)$. 
\end{lemma}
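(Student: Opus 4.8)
The plan is to establish the two directions of equisatisfiability modulo oversampled projections, strengthened to the $\forall\downarrow$ version, by a simultaneous structural induction on $\varphi$ proving the following invariant: for every $(\Sigma,X)$-oversampled behaviour $\rho'$ with oversampled projection $\rho=\rho'\downarrow X$ and associated strictly increasing map $f$, and for every position $i\in dom(\rho)$, we have $(\rho,i)\models\varphi \iff (\rho',f(i))\models ONF_\Sigma(\varphi)$. The key observation making this work is that $f(i)$ always lands on an action point (a position where $act=\bigvee\Sigma$ holds), and conversely every action point of $\rho'$ is $f(i)$ for a unique $i$; the non-action points are exactly those strictly between consecutive $f(i)$'s, and they carry no symbol of $\Sigma$. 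First I would check the base cases: $\varphi=true$ is immediate, and $\varphi=a\in\Sigma$ follows because $ONF_\Sigma(a)=a\wedge act$ and, since $a\in\sigma_i$ iff $a\in\sigma'_{f(i)}\cap\Sigma$ and $f(i)$ is an action point, the two sides agree. The Boolean cases $\neg$, $\wedge$ are routine since $ONF_\Sigma$ commutes with them.

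The heart of the argument is the modal case, say $\varphi=\varphi_1\until_I\varphi_2$, for which $ONF_\Sigma(\varphi)=(act\rightarrow\zeta_1)\until_I(\zeta_2\wedge act)$ with $\zeta_k=ONF_\Sigma(\varphi_k)$. For the forward direction, suppose $(\rho,i)\models\varphi_1\until_I\varphi_2$ witnessed by some $j>i$ with $\tau_j-\tau_i\in I$, $(\rho,j)\models\varphi_2$, and $(\rho,k)\models\varphi_1$ for all $i<k<j$. Take $f(j)$ as the witness in $\rho'$: it is an action point, $\tau'_{f(j)}-\tau'_{f(i)}=\tau_j-\tau_i\in I$ by the timestamp-preservation clauses of the definition of $f$, and $(\rho',f(j))\models\zeta_2\wedge act$ by the induction hypothesis. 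For an intermediate position $l$ with $f(i)<l<f(j)$: if $l$ is a non-action point then $act$ fails there so $(\rho',l)\models act\rightarrow\zeta_1$ vacuously; if $l$ is an action point then $l=f(k)$ for some $i<k<j$, and the induction hypothesis gives $(\rho',l)\models\zeta_1$, hence $(\rho',l)\models act\rightarrow\zeta_1$. The converse direction is symmetric: a witness $l$ for $ONF_\Sigma(\varphi)$ at $f(i)$ satisfies $\zeta_2\wedge act$, so $l$ is an action point, hence $l=f(j)$ for some $j>i$; pull $j$ back as the witness, using that every intermediate $f(k)$ with $i<k<j$ lies strictly between $f(i)$ and $l$, is an action point, so $act\rightarrow\zeta_1$ forces $\zeta_1$ there, whence $(\rho,k)\models\varphi_1$ by induction. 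The $\since_I$ case is the exact mirror image, and the unary $\fut_I$, $\Box_I$ abbreviations are handled either as special instances of the binary cases or by the explicit $ONF$ clauses given.

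Finally I would handle the conjoined formula $act\wedge(\Box\bot\rightarrow act)$: the conjunct $act$ holds at position $f(1)=1$ of $\rho'$ because the first point of an oversampled behaviour is an action point; $\Box\bot$ holds at a position iff it is the last position, and the last position $m=f(n)$ of $\rho'$ is also an action point by definition, so $\Box\bot\rightarrow act$ holds everywhere relevant — more carefully, one checks $act\wedge(\Box\bot\rightarrow act)$ at every position of $\rho'$, which is where the requirement that both endpoints of an oversampled behaviour are action points is used. Instantiating the position-wise equivalence at $i=1$ (using $f(1)=1$) then yields $\rho\models\varphi \iff \rho'\models ONF_\Sigma(\varphi)$; since $\rho'$ was an \emph{arbitrary} oversampled behaviour projecting to $\rho$, and conversely given any $\rho\models\varphi$ one may always produce at least the trivial oversampled behaviour $\rho$ itself (with $X$-symbols absent), all three clauses of equisatisfiability modulo oversampled projections hold, and in fact the $\forall\downarrow$ strengthening holds. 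The main obstacle I anticipate is purely bookkeeping: keeping the correspondence between positions of $\rho$ and action points of $\rho'$ straight across the nested recursive applications of $ONF_\Sigma$ inside the modal subformulae, and in particular verifying that the $act\rightarrow(\cdot)$ guard on the left argument of $\until_I/\since_I$ is exactly what is needed so that non-action points never obstruct a witness — this is the one place where an off-by-one in the treatment of intermediate points would break the proof.
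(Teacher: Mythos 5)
Your proposal is correct and follows essentially the same route as the paper's proof: structural induction on $\varphi$ using the strictly increasing map $f$ between positions of $\rho$ and action points of $\rho'$, with the $act\rightarrow(\cdot)$ guard absorbing non-action points in the left argument of $\until_I/\since_I$ and the $(\cdot)\wedge act$ conjunct forcing witnesses onto action points. If anything, your version is slightly more complete, since you state the position-wise biconditional invariant explicitly and carry out the converse direction that the paper dismisses as ``similar''.
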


\begin{lemma}
\label{onf1}
Let $\varphi$ be a formula built from $\Sigma$ and let 
 $\zeta=ONF_{\Sigma}(\varphi)$. Then,  
 $\zeta=\forall \downarrow \zeta$.
\end{lemma}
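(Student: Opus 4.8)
The statement is that if $\zeta = ONF_\Sigma(\varphi)$ for $\varphi$ built from $\Sigma$, then $\zeta = \forall\downarrow\,\zeta$; that is, $ONF_\Sigma$ applied to an already-oversampled-normal-form formula yields something stable under the ``for all oversampled behaviours'' relation. The natural strategy is to reduce this to Lemma~\ref{lem:gen} (the Oversampling Closure Lemma) by showing that $ONF_\Sigma$ is essentially idempotent: more precisely, that $ONF_\Sigma(\zeta)$ and $\zeta$ define the same timed language over $\Sigma\cup X$ when restricted to $(\Sigma,X)$-oversampled behaviours, and also agree on the ``pure'' $\Sigma$-models used in Lemma~\ref{lem:gen}. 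Since Lemma~\ref{lem:gen} already gives $\varphi = \forall\downarrow\,ONF_\Sigma(\varphi) = \forall\downarrow\,\zeta$, and the $\forall\downarrow$ relation is about oversampled behaviours projecting back to $\Sigma$-models of $\varphi$, I would like to transfer that directly to $\zeta$.

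First I would unfold the definition of $\forall\downarrow$: we must show (i) for every $(\Sigma,X)$-oversampled behaviour $\rho'$, $\rho'\models\zeta \Rightarrow \rho'\downarrow X \models \zeta$, and (ii) for every timed word $\rho$ over $\Sigma$ with $\rho\models\zeta$, \emph{every} $(\Sigma,X)$-oversampled behaviour $\rho'$ with $\rho'\downarrow X = \rho$ satisfies $\zeta$. The key technical step is a ``satisfaction transfer'' lemma proved by structural induction on $\varphi$: for any $(\Sigma,X)$-oversampled behaviour $\rho'$ with $\rho = \rho'\downarrow X$ and $f$ the associated strictly increasing embedding of positions, and for any action point $f(i)$ of $\rho'$, we have $(\rho', f(i)) \models ONF_\Sigma(\psi) \iff (\rho, i)\models \psi$ for every subformula $\psi$ of $\varphi$. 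The base case uses that $ONF_\Sigma(a) = a\wedge act$ and that action points are exactly where $act$ holds; the boolean cases are immediate; the $\until_I$ and $\past_I$/$\since_I$ cases use that the guard $act\rightarrow ONF_\Sigma(\phi_i)$ makes the ``all intermediate points'' condition vacuous at non-action points, so that the witness and all relevant intermediate points can be pulled back along $f$, and conversely any $\until_I$/$\since_I$ witness in $\rho$ lifts to the corresponding action point in $\rho'$ with the time constraint preserved since $f$ preserves timestamps at action points. The conjunct $act \wedge (\Box\bot \rightarrow act)$ ensures the first and last points of $\rho'$ are action points, matching the definition of oversampled behaviour. This is really the same induction that underlies Lemma~\ref{lem:gen}, just stated so it can be reused.

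Given that transfer lemma, the proof of Lemma~\ref{onf1} is short. For (i): if $\rho'\models\zeta$, then since $\zeta = ONF_\Sigma(\varphi)$, the transfer lemma at position $1$ gives $(\rho'\downarrow X, 1)\models\varphi$, i.e.\ $\rho'\downarrow X \models\varphi$; but $\rho'\downarrow X$ is a timed word over $\Sigma$, and applying the transfer lemma again in the trivial oversampling where $\rho = \rho'\downarrow X$ with the identity embedding shows $\rho'\downarrow X \models ONF_\Sigma(\varphi) = \zeta$ (note $ONF_\Sigma$ restricted to $\Sigma$-words collapses $act$ to $\top$ at every point, so this direction is just the observation that $\zeta$ and $\varphi$ are equivalent over $\Sigma$-words). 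For (ii): if $\rho$ is over $\Sigma$ and $\rho\models\zeta$, then as just noted $\rho\models\varphi$, and then the transfer lemma applied to an arbitrary $(\Sigma,X)$-oversampled behaviour $\rho'$ with $\rho'\downarrow X = \rho$ gives $(\rho', 1)\models ONF_\Sigma(\varphi) = \zeta$, which is exactly what $\forall\downarrow$ requires.

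\textbf{Main obstacle.} The delicate point is the inductive step for the constrained modalities, specifically checking that introducing \emph{new} non-action points between two action points $f(i)$ and $f(i+1)$ in $\rho'$ cannot spuriously satisfy or falsify $ONF_\Sigma(\phi_i \until_I \phi_j)$ at an action point. The guard $act\rightarrow ONF_\Sigma(\phi_i)$ handles the ``until'' obligation at non-action intermediate points, but one must also verify that the \emph{witness} itself is forced to be an action point (via the conjunct $ONF_\Sigma(\phi_j)\wedge act$) so it descends along $f$, and that the timestamp interval $I$ is respected — which holds precisely because $f$ preserves timestamps on action points. A secondary subtlety is making sure the induction is set up on subformulae of the \emph{original} $\varphi$ rather than on $ONF_\Sigma(\varphi)$, so that the recursive structure of $ONF_\Sigma$ lines up cleanly; stating the transfer lemma over subformulae of $\varphi$ and invoking it at the top level is the cleanest way to avoid circularity.
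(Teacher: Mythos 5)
Your proposal is correct and follows essentially the same route as the paper: the paper's proof is the one-liner ``follows from Lemma~\ref{lem:gen} and the equivalence of $\zeta$ and $ONF_{\Sigma}(\zeta)$,'' i.e.\ it too reduces everything to the Oversampling Closure Lemma plus a trivial bridging equivalence. The only cosmetic difference is the bridge you use --- equivalence of $\zeta$ and $\varphi$ over pure $\Sigma$-words (where $act$ holds everywhere) rather than idempotence of $ONF_{\Sigma}$ --- and your explicit re-derivation of the satisfaction-transfer induction, which the paper simply cites as Lemma~\ref{lem:gen}.
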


\begin{lemma}
\label {lem:boolclosedequis-2}
Consider  formulae $\varphi_1, \varphi_2$ built from $\Sigma$.   
Let $\psi_1, \psi_2$ be formulae built from $\Sigma \cup X_1$ and $\Sigma \cup X_2$ respectively.
Let $X=X_1 \cup X_2$, $\Sigma_i=\Sigma \cup X_i$ for $i=1,2$, and  $X_1 \cap X_2=\emptyset$.\\
Let $\zeta_1= ONF_{\Sigma_1}(\psi_1)$ and $\zeta_2= ONF_{\Sigma_2}(\psi_2)$. Then, \\ 
  $\varphi_1 = \exists \downarrow X_1. \zeta_1$ and  
  $\varphi_2 = \exists \downarrow X_2. \zeta_2 \rightarrow$\\
  $\varphi_1 \wedge \varphi_2 = \exists \downarrow X. 
  (\zeta_1 \wedge \zeta_2)$.
  \end{lemma}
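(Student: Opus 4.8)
The plan is to prove the claim by composing the two hypotheses through the oversampled projection machinery, using Lemma~\ref{lem:gen} and Lemma~\ref{onf1} to handle the interaction between the disjoint auxiliary alphabets $X_1$ and $X_2$. The key subtlety is that $\zeta_1$ is a formula over $\Sigma_1 = \Sigma \cup X_1$, so its ``action'' predicate is $\bigvee \Sigma_1$, whereas in the conjunction $\zeta_1 \wedge \zeta_2$ we will be interpreting things over $\Sigma \cup X$ and projecting down to $\Sigma$; the non-action points of the combined behaviour (those carrying only symbols from $X = X_1 \cup X_2$) must be reconciled with each $\zeta_i$'s own notion of non-action point. This is exactly what the ``$\forall \downarrow$'' (equivalence modulo oversampled projections) statements in Lemmas~\ref{lem:gen} and \ref{onf1} are designed to absorb: since $\zeta_i = ONF_{\Sigma_i}(\psi_i)$ and $\zeta_i = \forall\downarrow\, \zeta_i$, adding or removing non-action points (with respect to $\Sigma_i$) does not affect the truth of $\zeta_i$.

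For the forward direction (condition~2 of equisatisfiability modulo oversampled projections), I would take an arbitrary $(\Sigma, X)$-oversampled behaviour $\rho'$ over $\Sigma \cup X$ with $\rho' \models \zeta_1 \wedge \zeta_2$. First I observe that $\rho'$ is in particular a $(\Sigma_1, X_2 \setminus X_1)$-oversampled behaviour, i.e. an oversampled behaviour relative to the larger action alphabet $\Sigma_1$ obtained by keeping $X_1$ as ``real'' letters --- here one needs that the first and last points of $\rho'$ carry a symbol of $\Sigma$, hence of $\Sigma_1$. Then, using Lemma~\ref{onf1} applied to $\psi_1$ (so $\zeta_1 = \forall\downarrow\,\zeta_1$), the projection $\rho'\!\downarrow (X \setminus X_1) = \rho'\!\downarrow X_2$ (which is a timed word over $\Sigma_1$) still satisfies $\zeta_1$; and since $\varphi_1 = \exists\downarrow X_1.\,\zeta_1$, projecting this further by $X_1$ gives $\rho'\!\downarrow X \models \varphi_1$. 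The symmetric argument with the roles of $1$ and $2$ swapped gives $\rho'\!\downarrow X \models \varphi_2$, so $\rho'\!\downarrow X \models \varphi_1 \wedge \varphi_2$. The only care needed is to check that the two projection orders commute, i.e. $(\rho'\!\downarrow X_2)\!\downarrow X_1 = \rho'\!\downarrow X = (\rho'\!\downarrow X_1)\!\downarrow X_2$, which is immediate from the definition of oversampled projection (it deletes exactly the points with no $\Sigma$-symbol and erases $X$-symbols, independently of the grouping).

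For the backward direction (condition~3), suppose $\rho$ is a timed word over $\Sigma$ with $\rho \models \varphi_1 \wedge \varphi_2$. From $\varphi_1 = \exists\downarrow X_1.\,\zeta_1$ we get a $(\Sigma, X_1)$-oversampled behaviour $\rho_1'$ over $\Sigma \cup X_1$ with $\rho_1' \models \zeta_1$ and $\rho_1'\!\downarrow X_1 = \rho$; symmetrically a $(\Sigma, X_2)$-oversampled behaviour $\rho_2'$ over $\Sigma \cup X_2$ with $\rho_2' \models \zeta_2$ and $\rho_2'\!\downarrow X_2 = \rho$. The task is to merge $\rho_1'$ and $\rho_2'$ into a single $(\Sigma, X)$-oversampled behaviour $\rho'$ over $\Sigma \cup X$ that projects to $\rho$ and satisfies both $\zeta_1$ and $\zeta_2$. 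The natural construction is to take the ``superposition'': the action points are the points of $\rho$, at which we put the union of the $\Sigma \cup X_1$-label from $\rho_1'$ and the $\Sigma \cup X_2$-label from $\rho_2'$ (these agree on $\Sigma$, and $X_1 \cap X_2 = \emptyset$); interleaved between consecutive action points we put all the non-action points of $\rho_1'$ (labelled only with $X_1$) and all the non-action points of $\rho_2'$ (labelled only with $X_2$) that lie in that time gap, in timestamp order --- if strict monotonicity must be preserved, one perturbs the $X_1$- and $X_2$-non-action points slightly so they do not collide, which is legal because no metric constraint of $\zeta_1$ sees the $X_2$-points and vice versa. Then $\rho'\!\downarrow X_2$ is a word over $\Sigma \cup X_1$ obtained from $\rho_1'$ by possibly inserting extra non-action points (the former $X_2$-points, now empty of $\Sigma$) and shifting, so by $\zeta_1 = \forall\downarrow\,\zeta_1$ (Lemma~\ref{onf1}) we still have $\rho'\!\downarrow X_2 \models \zeta_1$; since $\zeta_1$ is an $ONF_{\Sigma_1}$ formula, its truth is preserved under inserting non-action points --- and therefore $\rho' \models \zeta_1$ as well, again by Lemma~\ref{onf1} read in the other direction. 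Symmetrically $\rho' \models \zeta_2$, so $\rho' \models \zeta_1 \wedge \zeta_2$ and $\rho'\!\downarrow X = \rho$, as required.

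I expect the main obstacle to be precisely this merging step in the backward direction: one must argue carefully that superposing two independently-oversampled behaviours yields something that each $\zeta_i$ still accepts, and this is where the $\forall\downarrow$ strength of $ONF$ (Lemmas~\ref{lem:gen} and~\ref{onf1}) is essential --- without it, inserting the $X_2$-points into $\rho_1'$ could in principle disturb the truth of $\zeta_1$, since an $\until_I$ modality in the original $\psi_1$ would quantify over the newly-inserted points. The $ONF$ rewriting guards every modality by $act = \bigvee\Sigma$, so the inserted non-action points are ``transparent'' to $\zeta_1$; making this precise (probably by an induction on the structure of $\psi_1$, or by directly invoking Lemma~\ref{onf1} on $\rho_1'$ and the insertion map) is the technical heart of the proof. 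A secondary, more routine obstacle is bookkeeping the timestamps under strict monotonicity when the two sets of non-action points happen to share a timestamp; this is handled by an arbitrarily small perturbation that respects all integer-endpoint interval constraints.
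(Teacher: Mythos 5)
Your proposal is correct and follows essentially the same route as the paper: both directions rest on forming the superposition (the paper's $\boxplus$-composition of the two oversampled behaviours) and on invoking the $\forall\downarrow$ property of $ONF$ formulae (Lemma~\ref{onf1}) so that each $\zeta_i$ is blind to the other alphabet's non-action points. The only cosmetic slip is that $\rho'\downarrow X_2$ equals $\rho_1'$ exactly (the pure-$X_2$ points are deleted by the projection, not inserted), but this does not affect the argument.
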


 \begin{lemma}
\label{lem:flatgen}
Let $\varphi \in \mtlfull$ be built from $\Sigma$, and $W$ be the set of witness variables obtained 
while flattening $\varphi$. 
Then $\varphi=\exists \downarrow W. ONF_{\Sigma}(\varphi_{flat})$.
\end{lemma}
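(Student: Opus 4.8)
<br>

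The plan is to combine the two facts already established about flattening and about the oversampled normal form. Recall from Section~\ref{sec-flat} that flattening gives $\varphi = \exists W.\, ENF_{\Sigma}(\varphi_{flat})$, i.e.\ equisatisfiability modulo \emph{simple} projections: every model of $\varphi$ over $\Sigma$ lifts to a $(\Sigma,W)$-simple extension satisfying $ENF_{\Sigma}(\varphi_{flat})$, and conversely. The target statement replaces $ENF_{\Sigma}(\varphi_{flat})$ by $ONF_{\Sigma}(\varphi_{flat})$ and the simple projection $\setminus W$ by the oversampled projection $\downarrow W$. So the first step is to observe that $\varphi_{flat}$ is itself a formula built from $\Sigma \cup W$, hence by Lemma~\ref{lem:gen} (Oversampling Closure) applied with the alphabet $\Sigma \cup W$ in place of $\Sigma$, we get $\varphi_{flat} = \forall\!\downarrow.\, ONF_{\Sigma \cup W}(\varphi_{flat})$. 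But here we want $ONF$ taken with respect to $\Sigma$, not $\Sigma \cup W$, so a little care is needed about which propositions count as ``action'' propositions.

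The cleaner route, which I would take, is to prove the two directions of equisatisfiability modulo oversampled projections directly, reusing the flattening correctness argument. For the forward direction: let $\rho'$ be any $(\Sigma,W)$-oversampled behaviour with $\rho' \models ONF_{\Sigma}(\varphi_{flat})$. Since $ONF_{\Sigma}$ conjuncts in $act \wedge (\Box\bot \rightarrow act)$ and relativises every modality to $act = \bigvee\Sigma$, Lemma~\ref{lem:gen}/Lemma~\ref{onf1} tells us that the $act$-labelled points of $\rho'$, read as a word over $\Sigma \cup W$, satisfy $\varphi_{flat}$; that word is exactly $\rho' \downarrow W$ \emph{with the $W$-labels retained}, call it $\rho''$, a $(\Sigma,W)$-simple extension. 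Then flattening correctness gives $\rho'' \setminus W \models \varphi$, and $\rho'' \setminus W = \rho' \downarrow W$. For the backward direction: given $\rho \models \varphi$ over $\Sigma$, flattening produces a $(\Sigma,W)$-simple extension $\rho''$ with $\rho'' \models \varphi_{flat}$ and $\rho'' \setminus W = \rho$; viewing $\rho''$ as a (degenerate) $(\Sigma,W)$-oversampled behaviour in which every point is an action point, Lemma~\ref{lem:gen} gives $\rho'' \models ONF_{\Sigma}(\varphi_{flat})$, and $\rho'' \downarrow W = \rho$ since there are no non-action points to delete. This establishes all three clauses of ``$\varphi = \exists\!\downarrow W.\, ONF_{\Sigma}(\varphi_{flat})$''.

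The main obstacle I anticipate is the bookkeeping in the forward direction: $\rho' \downarrow W$ deletes the non-action points, and I must argue that the resulting time-stamp-and-label sequence, once the $W$-propositions are \emph{not} erased, is a genuine $(\Sigma,W)$-simple extension to which the flattening lemma applies verbatim, and that satisfaction of $\varphi_{flat}$ is preserved under this restriction. This is precisely the content of the $ONF$ relativisation together with Lemma~\ref{onf1} ($\zeta = \forall\!\downarrow \zeta$), which says the truth of an $ONF$ formula at action points is insensitive to the presence or absence of non-action points; I would cite that lemma rather than re-prove it, and the remaining work is to chase the definitions of $\downarrow$, $\setminus$, and $ENF/ONF$ to line up the words. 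A secondary subtlety is that $\varphi_{flat} = \psi \wedge T$ with $T$ a conjunction of temporal definitions $\wB(\psi_i \leftrightarrow b_i)$; since each temporal definition and $\psi$ are formulae over $\Sigma \cup W$, Lemma~\ref{lem:gen} applies to $ONF_{\Sigma}$ of the whole conjunction in one shot, so no separate appeal to the Boolean/Oversampling closure lemmas for conjunction (Lemma~\ref{lem:boolclosedequis-2}) is strictly needed here — though one could alternatively assemble the result conjunct-by-conjunct using that lemma, which is the structure I would fall back on if the direct argument gets unwieldy.
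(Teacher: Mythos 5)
Your proof is correct and is essentially the argument the paper intends: the paper states Lemma \ref{lem:flatgen} without an explicit proof, as the immediate combination of flattening correctness ($\varphi=\exists W.\, ENF_{\Sigma}(\varphi_{flat})$) with the Oversampling Closure Lemma \ref{lem:gen} (and Lemma \ref{onf1}), which is exactly what you assemble. You also correctly identify the one point requiring care --- that Lemma \ref{lem:gen} must be read for a formula built from $\Sigma\cup W$ with $ONF$ relativised to $\Sigma$ and a projection that deletes only the non-action points while retaining the $W$-labels at action points --- and your device of the intermediate $(\Sigma,W)$-simple extension $\rho''$ is the right way to make that precise.
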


\begin{figure*}[th]
 \begin{center}
 \begin{picture}(45,9)(20,-6)
 \thicklines
 \thicklines
    \drawline[AHnb=0,ATnb=0](-15,0)(105,0)
    \drawline[AHnb=0,ATnb=0](-15,-2)(-15,2)
    \drawline[AHnb=0,ATnb=0](105,-2)(105,2)
    \put(-5,2){\tiny{$a$}}
    \put (-5,-4){$\tau_{first_a}$}
    \put (-15,-4	){\tiny{0}}
     \drawline[AHnb=0,ATnb=0](-5,1)(-5,-1)
    \put(50,2){\tiny{$a$}}
    \put (50,-4){$\tau_{last_a}$}
    \put (82,-4){\tiny{$\tau_{last_a}+u$}}
      \drawline[AHnb=0,ATnb=0](50,1)(50,-1)
    \drawline[AHnb=0,ATnb=0](-15,4.5)(15,4.5)
    \drawline[AHnb=0,ATnb=0](-15,5.5)(-15,3.5)
    \put(14.6,4){)}
    \drawline[AHnb=1,ATnb=0](85.5,4.5)(105,4.5)
    \drawline[AHnb=0,ATnb=0](85.5,5.5)(85.5,3.5)
    \drawline[AHnb=0,dash={0.25}0	](85.5,3.5)(85.5,-5)
   \drawline[AHnb=0,dash={0.25}0	](14.6,4)(14.6,-5)
    \put(13,-4){\tiny{$\tau_{first_a}+l$}}
 \put(92,6){$\neg \past_{[l,u)} a$}
    \put(0,6){$\neg \past_{[l,u)} a$}
  
    \end{picture}
 \caption{Cases (a) and (b) of Lemma \ref{lemmapast} : $\neg \past_{[l,u)}a$ holds in $[0, \tau_{first_a}+l)$ and $[\tau_{last_a}+u,\infty)$}
 \label{fig:case12}
 \end{center}
 \end{figure*}
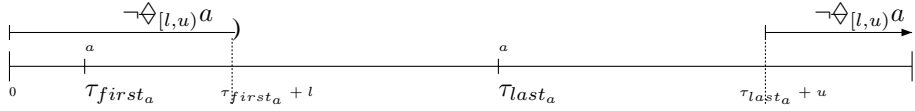
 \begin{figure*}[th]
   \begin{center}
   \begin{picture}(45,15)(20,-9)
   \thicklines
    \drawline[AHnb=0,ATnb=0](-15,0)(105,0)
    \drawline[AHnb=0,ATnb=0](-15,-2)(-15,2)
    \drawline[AHnb=0,ATnb=0](105,-2)(105,2)
    \put(-6.5,3){\tiny $a $}
    \put (-5,-4){$\tau_j$}
     \drawline[AHnb=0,ATnb=0](-5,-2)(-5,2)
    \put(13.5,3){\tiny $a $}
    \put (15,-4){$\tau_k$}
     \drawline[AHnb=0,ATnb=0](15,-2)(15,2)
    \drawline[AHnb=0,ATnb=0](-5,-6.5)(15,-6.5)
    \drawline[AHnb=0,ATnb=0](-5,-5)(-5,-8)
     \drawline[AHnb=0,ATnb=0](15,-5)(15,-8)
    \drawline[AHnb=0,dash={0.25}0	](55.5,-6)(70.5,-6)
     \put(-3,-9){\tiny{$[u-l,u)$}}
    \put(55,-6.5){[}
    \put(52.5,-8){\tiny{$\tau_j+l$}} 
    \put(67.5,-8){\tiny{$\tau_j+u$}} 
    \put(70,-6.5){)}
    \drawline[AHnb=0,dash={0.25}0	](55.5,0)(55.5,-6.5)
    \drawline[AHnb=0,ATnb=0	](70.5,7)(70.5,-6.5)
    \drawline[AHnb=0,ATnb=0](70.5,7)(75.5,7)
    \drawline[AHnb=0,dash={0.25}0](71.5,7)(71.5,0)
    \drawline[AHnb=0,dash={0.25}0](72.5,7)(72.5,0)
    \drawline[AHnb=0,dash={0.25}0](73.5,7)(73.5,0)
    \drawline[AHnb=0,dash={0.25}0](74.5,7)(74.5,0)
    \put(75,-9){[}
    \put(90,-9){)}
     \put(75,-11){\tiny{$\tau_k+l$}}
     \put(90,-11){\tiny{$\tau_k+u$}}
     \drawline[AHnb=0,ATnb=0	](75.5,7)(75.5,-9)
     \drawline[AHnb=0,ATnb=0	](90.8,0)(90.8,-9) 
    \drawline[AHnb=0,dash={0.25}0	](75.5,-8.8)(90.5,-8.8)

  
   \end{picture}
   \caption{Case (c) Lemma \ref{lemmapast}:   $\neg \past_{[l,u)} a$ holds in shaded region}
   \label{fig:case3}
   \end{center}
   \end{figure*}
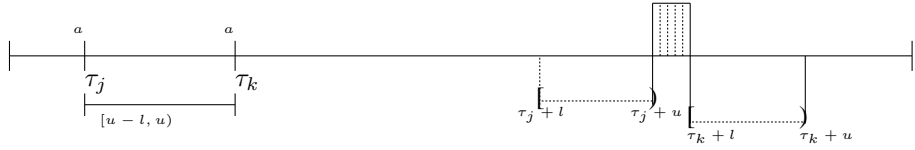

\section{ Decidability of $\mtlsns$}
In this section, we show that the class $\mtlsns$ is decidable, by giving a satisfiability preserving 
reduction  to $\mtl$. Given a timed word $\rho$, and a non-singular past modality 
of the form $\psi=\past_{\langle l, u \rangle}\varphi$, Lemma \ref{lemmapast} 
establishes a relationship between time stamps of the points in $\rho$  where $\psi$ holds and 
the time stamps of points where $\varphi$ holds in $\rho$ with respect to $l,u$.

\begin{lemma}
\label{lemmapast}
 Given a timed word $\rho=(\sigma, \tau)$ and  a point $i \in dom(\rho)$.
 Let $first_{\alpha}$ and $last_{\alpha}$ denote respectively the first and last occurrences of $\alpha \in \Sigma$ in $\rho$. 
  $\rho, i \models \neg (\past_{ \langle l, u \rangle }\alpha)$ iff 
 \begin{itemize}
  \item[(a)] $\tau_i \sim_1 \tau_{first_{\alpha}} + l$, where
  $\sim_1$ is $<$ when $\langle$ is $[$, and $\sim_1$ is $\leq$ when $\langle$ is $($, or
     \item[(b)] $\tau_i \sim_2 \tau_{last_{\alpha}} + u$, where 
     $\sim_2$ is $>$ when $\rangle$ is $]$, and $\sim_2$ is $\geq$ when $\rangle$ is $)$,or 
  \item[(c)] $\tau_i \in \langle \tau_j + u,\tau_k + l \rangle$ for all points $j,k  (j<k )$ where $\alpha$ holds consecutively (that is there does not exist any point $z$, $j<z<k$ where $\alpha$ holds). 
  Note that in this case $\tau_j + u \le \tau_k + l$.
 \end{itemize}
\end{lemma}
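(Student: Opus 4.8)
The plan is to prove both directions of the ``iff'' by unfolding the pointwise semantics of $\past_{\langle l,u\rangle}$ and reasoning about the arithmetic of time stamps relative to the occurrences of $\alpha$.

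\medskip

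\noindent\textbf{($\Leftarrow$) Each of (a), (b), (c) implies $\rho,i\models\neg\past_{\langle l,u\rangle}\alpha$.}
First I would recall that $\rho,i\models\past_{\langle l,u\rangle}\alpha$ means there is some $j<i$ with $\alpha\in\sigma_j$ and $\tau_i-\tau_j\in\langle l,u\rangle$; so to establish $\neg\past_{\langle l,u\rangle}\alpha$ I must show every earlier $\alpha$-point $j$ has $\tau_i-\tau_j\notin\langle l,u\rangle$. In case (a), since $\tau_j\ge\tau_{first_\alpha}$ for every $\alpha$-point $j\le i$, the hypothesis $\tau_i\sim_1\tau_{first_\alpha}+l$ forces $\tau_i-\tau_j\le\tau_i-\tau_{first_\alpha}\sim_1 l$, i.e. the gap is too small to lie in $\langle l,u\rangle$ (the matching of $\sim_1$ with whether $\langle$ is open/closed is exactly what makes ``too small'' correct: if the interval is left-closed we only need $<l$, if left-open we may allow $\le l$). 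Case (b) is symmetric using $\tau_j\le\tau_{last_\alpha}$: the gap $\tau_i-\tau_j\ge\tau_i-\tau_{last_\alpha}\sim_2 u$ is too large. Case (c) is the interesting one: here $\tau_i$ lies in $\langle\tau_j+u,\tau_k+l\rangle$ for \emph{every} consecutive pair of $\alpha$-points $j<k$ (and also, implicitly, before $\tau_{first_\alpha}+l$ and after $\tau_{last_\alpha}+u$ if those boundary conditions are folded in, but the clean statement is that (a),(b),(c) are the three alternatives). Given an arbitrary earlier $\alpha$-point $m$, either $m$ is the last $\alpha$-point before $i$, in which case $\tau_i-\tau_m\ge u$ (too large, by the analogue of (b) restricted to $m=last$ among points $<i$), or there is a next $\alpha$-point $m'$ with $m<m'$; then $\tau_i\in\langle\tau_m+u,\tau_{m'}+l\rangle$ gives $\tau_i-\tau_m\ge u$ again, so $\tau_i-\tau_m\notin\langle l,u\rangle$ regardless. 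In every case no witness exists, so $\neg\past_{\langle l,u\rangle}\alpha$ holds. I would also note the ``$\tau_j+u\le\tau_k+l$'' remark is automatic: if it failed the stated interval would be empty, so the condition would be vacuously unsatisfiable, which is why consecutive $\alpha$-points must be close enough.

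\medskip

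\noindent\textbf{($\Rightarrow$) $\rho,i\models\neg\past_{\langle l,u\rangle}\alpha$ implies (a) or (b) or (c).}
Here I assume no earlier $\alpha$-point $j$ has $\tau_i-\tau_j\in\langle l,u\rangle$ and do a case split on where $\tau_i$ sits. If $\alpha$ never occurs before $i$ at all, then vacuously (a) holds once we interpret $first_\alpha$ appropriately (or this degenerate subcase is handled separately). Otherwise let $j^\star$ be the \emph{latest} $\alpha$-point with $\tau_i-\tau_{j^\star}\ge u$ (if any) and $k^\star$ the \emph{earliest} $\alpha$-point with $\tau_i-\tau_{k^\star}\le l$ (if any). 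If there is no such $j^\star$ --- every earlier $\alpha$-point, in particular $first_\alpha$, has $\tau_i-\tau_j<u$, and additionally the no-witness hypothesis rules out the interval --- then one checks $\tau_i<\tau_{first_\alpha}+l$ (respectively $\le$), giving (a). Symmetrically, if there is no $k^\star$, then $\tau_i>\tau_{last_\alpha}+u$ (respectively $\ge$), giving (b). In the remaining case both $j^\star$ and $k^\star$ exist; the key observation is that between $j^\star$ and $k^\star$ there can be \emph{no} $\alpha$-point, because such a point $z$ would have $\tau_i-\tau_z$ strictly between the ``$\le l$'' and ``$\ge u$'' thresholds --- i.e. $\tau_i-\tau_z\in\langle l,u\rangle$ --- contradicting $\neg\past$. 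Hence $j^\star$ and $k^\star$ are consecutive $\alpha$-points, and $\tau_i$ satisfies $\tau_i\ge\tau_{j^\star}+u$ and $\tau_i\le\tau_{k^\star}+l$ with the appropriate strict/non-strict orientation coming from the endpoint type of $\langle l,u\rangle$; moreover for any \emph{other} consecutive pair $(m,m')$ of $\alpha$-points, either both lie at or before $j^\star$ (then $\tau_i\ge\tau_{m'}+u\ge\dots$, so $\tau_i$ is above the left end of $\langle\tau_m+u,\tau_{m'}+l\rangle$ and, since $m'\le j^\star$ and nothing forces it below, one verifies membership) or both lie at or after $k^\star$ (symmetric), so (c) holds for every consecutive pair. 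This finishes the case analysis.

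\medskip

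\noindent\textbf{Main obstacle.}
The routine part is the inequality bookkeeping; the delicate part is getting the open/closed endpoint conventions exactly right in every branch (the alignment of $\sim_1$ with $[$ vs.\ $($, of $\sim_2$ with $]$ vs.\ $)$, and of the interval $\langle\tau_j+u,\tau_k+l\rangle$ in (c) inheriting its open/closed ends from $\langle l,u\rangle$), and making sure the case split in the $(\Rightarrow)$ direction is exhaustive --- in particular handling the boundary subcases where $j^\star$ or $k^\star$ fails to exist, and the degenerate subcase where $\alpha$ does not occur before $i$, so that these collapse cleanly into (a) or (b). I expect the bulk of the write-up to be a careful statement of which inequality is strict in which of the four bracket-type combinations, which is why Figures~\ref{fig:case12} and~\ref{fig:case3} are there to anchor the intuition.
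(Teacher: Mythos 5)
Your overall architecture is sound and in fact more direct than the paper's: the paper proves both implications by contraposition, establishing $\past_{[l,u)}\alpha \Leftrightarrow \neg(a)\wedge\neg(b)\wedge\neg(c)$, whereas you argue each disjunct forward and explicitly construct the witnessing pair $(j^\star,k^\star)$. But there is a concrete error that traces back to how you read clause (c). Despite the ``for all'' in the statement, (c) must be read existentially --- ``there \emph{exist} consecutive $\alpha$-points $j<k$ with $\tau_i\in\langle\tau_j+u,\tau_k+l\rangle$'' --- and that is how the paper's own proof negates it. Under your universal reading the clause is vacuously true whenever $\alpha$ occurs at most once (which would make the lemma false), and it is unsatisfiable whenever two distinct consecutive pairs both have nonempty gap regions, since $[\tau_j+u,\tau_k+l)$ and $[\tau_k+u,\tau_{k'}+l)$ are disjoint because $\tau_k+l<\tau_k+u$. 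Consequently the last step of your ($\Rightarrow$) direction --- ``for any other consecutive pair $(m,m')$ \dots one verifies membership'' --- cannot be carried out: if $m'\le j^\star$ then $\tau_i\ge\tau_{j^\star}+u\ge\tau_{m'}+u>\tau_{m'}+l$, so $\tau_i$ lies strictly to the right of $\langle\tau_m+u,\tau_{m'}+l\rangle$, not inside it. The fix is to stop once you have produced the single pair $(j^\star,k^\star)$ with $\tau_i\in[\tau_{j^\star}+u,\tau_{k^\star}+l)$; that already discharges (c).

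The same misreading infects your ($\Leftarrow$) argument for case (c): you claim that \emph{every} earlier $\alpha$-point $m$ satisfies $\tau_i-\tau_m\ge u$, in one branch by appealing to ``the analogue of (b)'', which is not available when only the disjunct (c) is assumed. The claim is false for $\alpha$-points at or after the right end of the witnessing pair: if $\tau_i\in[\tau_j+u,\tau_k+l)$ and $m\ge k$ is an $\alpha$-point before $i$, then $\tau_i-\tau_m\le\tau_i-\tau_k<l$, i.e.\ the gap is too \emph{small}, not too large. The correct case split is on the position of $m$ relative to the witnessing pair: $m\le j$ gives $\tau_i-\tau_m\ge u$; $m\ge k$ gives $\tau_i-\tau_m<l$; and $j<m<k$ is impossible by consecutiveness --- either way $\tau_i-\tau_m\notin[l,u)$, so no witness for $\past_{[l,u)}\alpha$ exists. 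With these two repairs (and the endpoint bookkeeping you already flag) your direct argument goes through and is, if anything, cleaner than the paper's contrapositive interval-chasing in the appendix.
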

\begin{proof}
We prove the lemma for intervals of the form $[l,u)$.  The proof can be extended for other type of intervals also.
Assume that $\rho, i \models \past_{[ l, u )}\alpha$. We then show that $\neg(\tau_i < \tau_{first_{\alpha}}+l)$ and 
$\neg(\tau_i \geq  \tau_{last_{\alpha}} +u)$ and $\neg(\tau_i \in [\tau_j+u, \tau_k+l))$ for consecutive points $j,k$ where $\alpha$ holds.
\begin{enumerate}
\item   Let $\tau_i < \tau_{first_{\alpha}}+l$.  
$\rho, i \models \past_{[ l, u )}\alpha$ implies that there is a point $i'$ such that $\tau_{i'} \in (\tau_i-u, \tau_i-l]$, such that $\rho, i' \models \alpha$. Then, $\tau_{i'} \leq \tau_i-l <\tau_{first_{\alpha}}$, contradicting 
that $first_{\alpha}$ is the first point where $\alpha$ holds.   
\item Let $\tau_i \geq \tau_{last_{\alpha}} +u$. Again, $\rho, i \models \past_{[ l, u )}\alpha$ implies that there is a point $i'$ such that 
$\tau_{i'} \in (\tau_i-u, \tau_i-l]$ such that $\rho,i' \models \alpha$. We then have $\tau_{i'} > \tau_i -u \geq \tau_{last_{\alpha}}$, contradicting 
that $last_{\alpha}$ is the last point where $\alpha$ holds. 
\item Assume that there exist consecutive points $j < k$ where $\alpha$ holds. Also, let $\tau_{i} \in [\tau_j+u, \tau_k+l)$.  
$\rho, i \models \past_{[ l, u )}\alpha$ implies that there exists a point $i'$ such that $\tau_{i'} \in (\tau_i-u, \tau_i-l]$ and $\rho,i' \models \alpha$.
Also, $\tau_i-u \in [\tau_j, \tau_k+(l-u))$  and $\tau_i-l \in [\tau_j+(u-l),\tau_k)$. This gives $\tau_j < \tau_{i'} < \tau_k$ contradicting the assumption that 
$j,k$ are consecutive points where $\alpha$ holds. 
\end{enumerate}
The converse can be found in Appendix \ref{proofs-6}.
Figure \ref{fig:case12} illustrates regions for 
cases (a) and (b), while Figure \ref{fig:case3} illustrates the region for case (c). In the rest of the paper, we refer 
to regions in case(a) as Region I, regions in case(b) as Region II and regions in case (c) as Region III.
\end{proof}
In the rest of this section, we show the decidability of $\mathsf{MTL}[\until_I, \past_{np}]$ 
by reducing any formula   $\varphi \in \mathsf{MTL}[\until_I, \past_{np}]$ to a formula 
$\psi \in \mtl$. We have two techniques for this proof: one using oversampling projections, and the other, using simple projections. 


\subsection{Elimination of Past with Oversampled Projections}
\label{os-main}
In this section, given a formula $\varphi$ in $\mathsf{MTL}[\until_I, \past_{np}]$ built from $\Sigma$, we 
synthesize a formula  $\psi \in \mtl$ built from $\Sigma \cup X$ equisatisfiable to $\varphi$ 
{\it modulo oversampled projections},  whose size 
is \emph{linear} in $|\varphi|$. Starting with a timed word   $\rho$ over $\Sigma$, 
we synthesize an $(\Sigma,X)$-oversampled behaviour $\rho'$ 
such that $\rho \models \varphi$ iff $\rho' \models \psi$. 
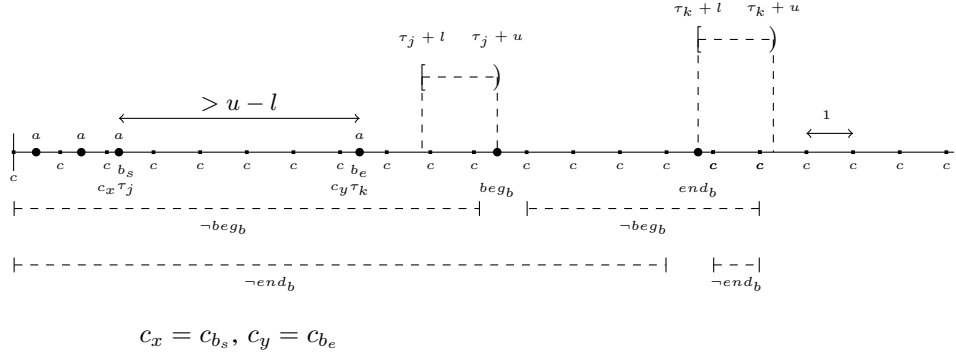
\begin{figure*}[t]
\begin{center}
\tikzstyle{RectObject}=[rectangle,fill=white,draw,line width=0.5mm]
\tikzstyle{line}=[draw]
\tikzstyle{arrow}=[draw, -latex] 
\begin{tikzpicture}
\foreach \x in {10.5}
{
\draw (\x,0) -- (\x+12.5,0);
\node at (\x,-0.34){\tiny $c$};

\node at (\x+0.62*2,-0.5){\tiny $c_x$};
\node at (\x+0.62*7,-0.5){\tiny $c_y$};
\node at (\x+3,-2.5){$c_x = c_{b_s}$, $c_y=c_{b_e}$};
\foreach \y in {0}{
\draw (\x+\y,-0.25)--(\x+\y,0.25);
}
\node[fill = black,draw = black,rectangle,inner sep=0.5 pt,label=below:{}] at (\x,0){};

\foreach \y in {0.62,1.24}
\node[fill = black,draw = black,rectangle,inner sep=0.5 pt,label=below:{\tiny{$c$}}] at (\x+\y,0){};

\foreach \y in {1.86,2.48,...,4.6}
\node[fill = black,draw = black,rectangle,inner sep=0.5 pt,label=below:{\tiny{$c$}}] at (\x+\y,0){};



\foreach \y in {4.96,5.54,...,6.5}
\node[fill = black,draw = black,rectangle,inner sep=0.5 pt,label=below:{\tiny{$c$}}] at (\x+\y,0){};

\foreach \y in {6.82,7.44,...,10.1}
{
\node[fill = black,draw = black,rectangle,inner sep=0.5 pt,label=below:{\tiny{$c$}}] at (\x+\y,0){};
}

\foreach \y in {9.3,9.92,...,12.5}
{
\node[fill = black,draw = black,rectangle,inner sep=0.5 pt,label=below:{\tiny{$c$}}] at (\x+\y,0){};
}
\foreach \y in {0.3,0.9,1.4,4.6}	
\node[fill = black,draw = black,circle,inner sep=1 pt,label=above:{\tiny{$a$}}] at (\x+\y,0){};

\draw[<->](\x+10.54,0.25)--(\x+11.16,0.25);
\node at (\x+10.81,0.5){\tiny {$1$}};

%
 
%
%

\node at (\x+1.5,-0.25){\tiny $b_s$};
\node at (\x+4.6,-0.25){\tiny $b_e$};

 \draw[<->](\x+4.6,0.45)--(\x+1.4,0.45);

\node at (\x+3,0.65) {\small {$> u-l$}};
\node at (\x+1.5,-0.5){\tiny $\tau_j$};
\node at (\x+4.6,-0.5){\tiny $\tau_k$};

\node at (\x+4.6+4.5,1.5) {$[$};	
\node at (\x+4.6+5.5,1.5) {$)$};
\draw[dashed] (\x+4.6+4.5,1.5)--(\x+4.6+5.5,1.5);
\draw[dashed] (\x+4.6+4.5,1.5)--(\x+4.6+4.5,0);
\draw[dashed] (\x+4.6+5.5,1.5)--(\x+4.6+5.5,0);

\node[fill = black,draw = black,circle,inner sep=1pt,label=below:{}] at (\x+0.93+5.5,0){};
\node at (\x+0.93+5.5,-0.5){\tiny $beg_b$};

\node at (\x+4.6+4.5,-0.5){\tiny $end_b$};

\draw[dashed,|-|] (\x,-0.75)--(\x+6.2,-0.75);

\draw[dashed,|-|] (\x+0.62*16,-0.75)--(\x+0.62*11,-0.75);

\node at (\x+9*0.31,-1){\tiny $\neg beg_b$};
\node at (\x+27*0.31,-1){\tiny $\neg beg_b$};

\draw[dashed,|-|] (\x,-1.5)--(\x+14*0.62,-1.5);

\draw[dashed,|-|] (\x+0.62*15,-1.5)--(\x+0.62*16,-1.5);

\node at (\x+11*0.31,-1.75){\tiny $\neg end_b$};
\node at (\x+31*0.31,-1.75){\tiny $\neg end_b$};

\node at (\x+0.93+5.5,1.5){\tiny {$\tau_j+u$}};
\node at (\x+0.93+4.5,1.5){\tiny {$\tau_j+l$}};

\node at (\x+4.6+4.5,1.9){\tiny {$\tau_k+l$}};
\node at (\x+4.6+5.5,1.9){\tiny {$\tau_k+u$}};

\node[fill = black,draw = black,circle,inner sep=1pt,label=below:{}] at (\x+4.6+4.5,0){};

\node at (\x+0.93+4.5,1) {$[$};
\node at (\x+0.93+5.5,1) {$)$};
\draw[dashed] (\x+0.93+4.5,1)--(\x+0.93+5.5,1);
\draw[dashed] (\x+0.93+4.5,1)--(\x+0.93+4.5,0);
\draw[dashed] (\x+0.93+5.5,1)--(\x+0.93+5.5,0);

}		
\end{tikzpicture}
\caption{Marking $[\tau_j+u,\tau_k+l)$ with $\neg b$}
\label{fig:os}
\end{center}
\end{figure*}

 \begin{enumerate}
 \item Start with a formula $\varphi \in \mathsf{MTL}[\until_I, \past_{np}]$ built from $\Sigma$, and a timed word $\rho$  
 over $\Sigma$, 
 \item Flatten $\varphi$ obtaining $\varphi_{flat}$. Let $W$ be the witness propositions used. 
 $\varphi_{flat}$ is a formula built from $\Sigma \cup W$, with $\Sigma \cap W=\emptyset$.
 \item  Let $T=\bigwedge_{i=1}^k T_i$ be the conjunction of all temporal definitions in $\varphi_{flat}$. 
  Each $T_i$ has the form $\wB(b \leftrightarrow \past_{\langle l, u \rangle}a)$, 
 with $l, u \in \R_{\geq 0} \cup \{\infty\}$, and $\bigwedge_{i=1}^k T_i$ is built from
 $\Sigma \cup W$.  
  $\varphi_{flat}=\psi \wedge T$, with $\psi \in \mtl$.
 We know from Lemma \ref{lem:flatgen} that 
   $\varphi=\exists \downarrow W. ONF_{\Sigma}(\varphi_{flat})$. 
  \item For $i=1,2, \dots, k$, let $\Sigma_i=  \Sigma\cup W \cup X_i$, where $X_i$ are 
  a set of fresh propositions, such that $X_i \cap X_j=\emptyset$ for $i \neq j$.   
  Synthesize a formula $\zeta_i=ONF_{\Sigma_i}(\varphi'_i) \in \mtl$ over $\Sigma_i$ 
    such that $ONF_{\Sigma}(T_i) =\exists \downarrow X_i. \zeta_i$.  
    \item Using Lemma \ref{lem:boolclosedequis-2},  
     $\bigwedge_{i=1}^k \zeta_i \in \mtl$ is such that\\  
   $ONF_{\Sigma}( \bigwedge_{i=1}^kT_i)=\exists \downarrow X. \bigwedge_{i=1}^k \zeta_i$, for $X=\bigcup_{i=1}^k X_i$.
        \end{enumerate}
 
 Lemma \ref{remove-pastinf} and 
      Lemma \ref{past-b1} show how to synthesize an equisatisfiable  formula in $\mtl$ 
      corresponding to $ONF_{\Sigma}(T_i)$.
       Lemma \ref{remove-pastinf} shows step 4 for intervals of the form $[l, \infty)$, while 
 Lemma \ref{past-b1} shows step 4 for bounded intervals of the form $[l,u)$. The results of these lemmas 
 can be extended to work for any interval $\langle l, u \rangle$. 
If all the past modalities involved have unbounded intervals, 
then we get an equivalent formula, as shown by Lemma \ref{remove-pastinf}.    
      
\begin{lemma}
\label{remove-pastinf}
Consider a temporal definition 
 $T=\wB[b \leftrightarrow \past_{[l, \infty)}a]$ built from $\Sigma \cup W$. 
 Then we can synthesize a formula $\psi \in \mtl$ built from  $\Sigma \cup W$ 
 equivalent to $ONF_{\Sigma}(T)$.
\end{lemma}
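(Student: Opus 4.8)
The plan is to eliminate the punctual past operator $\past_{[l,\infty)}a$ by observing, via Lemma~\ref{lemmapast} specialized to the interval $[l,\infty)$ (so $u=\infty$, and only Region~I of case~(a) survives since cases (b) and (c) become vacuous), that $\neg\past_{[l,\infty)}a$ holds exactly at points $i$ with $\tau_i < \tau_{first_a}+l$. Equivalently, $\past_{[l,\infty)}a$ holds at $i$ iff $\tau_i \ge \tau_{first_a}+l$, i.e.\ iff $a$ has occurred at some point at least $l$ time units in the past. The key idea is that this is a \emph{monotone} predicate along the word: once true, it stays true. So the witness $b$ should mark a final suffix of the timed word, and the boundary of that suffix is determined solely by the first occurrence of $a$. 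Concretely, I want $b$ to be true at $i$ iff there is a past (or present, handled carefully) witness for $a$ exactly at distance $\ge l$; since $l$ may be $0$, I should treat $\past$ as strict past per the semantics given and be careful whether the first $a$-point itself can make $b$ true.

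The construction I would give is a formula $\psi\in\mtl$ built from $\Sigma\cup W$ (no new propositions $X$ are needed here, which is why the lemma promises \emph{equivalence}, not just equisatisfiability modulo oversampling). The plan: assert $\wB$ of the conjunction of two implications. First, $b \rightarrow \fut_{[?]}\mathrm{true}$ is not quite it; rather I encode ``$b$ holds at $i$ iff some strictly earlier point at time-distance $\ge l$ satisfies $a$'' by the equivalence $b \leftrightarrow (\text{there is a point in the past witnessing } a \text{ at distance} \ge l)$. Using only future modalities (since $\mtl=\mathsf{MTL}[\until_I]$ has no past operators), I rephrase from the viewpoint of the $a$-points: if $a$ holds at position $j$, then every position $i$ with $\tau_i \ge \tau_j + l$ and $i>j$ has $\past_{[l,\infty)}a$ true, hence must carry $b$; and conversely if $b$ holds at $i$ then some such $j<i$ with $\tau_j\le \tau_i - l$ has $a$. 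The first direction is captured by $\wB\bigl(a \rightarrow \wB_{[l,\infty)}(b)\bigr)$ — wait, $\wB_{[l,\infty)}$ looks backward in spirit but is really ``at all future points at distance $\ge l$, $b$ holds,'' i.e.\ $\Box_{[l,\infty)}b$; so the conjunct is $\wB\bigl(a \rightarrow \Box_{[l,\infty)}b\bigr)$. For the converse, ``$b$ at $i$ implies an $a$ at distance $\ge l$ in the past'': I would instead assert the contrapositive forward-looking fact — $\neg b$ must hold up to and including time $\tau_{first_a}+l$ — which I can phrase as: at the first position, $\neg b$ holds and $\neg b$ persists (via an until) until either $a$ has been seen long enough ago or the word ends. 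The cleanest encoding: $\bigl(\Box(a \rightarrow \Box_{[l,\infty)} b)\bigr) \wedge \bigl(\neg b \until_{[0,l]} a\bigr)$-style statements will need care with strict/non-strict and whether any $a$ occurs at all; if no $a$ occurs, $b$ must be globally false, which is forced by the second conjunct degenerating correctly.

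The main obstacle I anticipate is getting the boundary conditions exactly right: (i) the semantics of $\past$ is \emph{strict}, so the first $a$-point $first_a$ itself does not have $a$ in its strict past, and whether $b$ can be true at $first_a$ depends on whether there is a yet-earlier $a$ (there isn't, by definition) — so $b$ is false at and before $first_a$, and also false at every point with $\tau_i < \tau_{first_a}+l$; (ii) when $l=0$ the interval $[0,\infty)$ and strictness interact — $\past_{[0,\infty)}a$ at $i$ just means some $j<i$ has $a$, so $b$ marks exactly the points strictly after the first $a$-point regardless of time, and the time-guarded $\Box_{[l,\infty)}$ collapses to the ordinary $\Box$ modulo strictness which $\mtl$'s strict semantics already provides; (iii) ensuring the synthesized $\psi$ is literally in $\mtl$ (only $\until_I$, hence only future, hence $\Box_I$, $\fut_I$, $\wB$, $\wU$ as derived) and does not sneak in a past operator. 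I would therefore write $\psi$ as a conjunction of $\wB$-guarded future formulae: one clause forcing $b$ at all far-enough-future points after any $a$, one clause forbidding $b$ before the first $a$ has been around for time $l$, and handle the ``no $a$'' case. Finally, I would invoke Lemma~\ref{lem:gen} (the Oversampling Closure Lemma) and Lemma~\ref{onf1} to justify that applying $ONF_\Sigma$ to $T$ and to $\psi$ preserves the relationship, concluding $\psi \equiv ONF_\Sigma(T)$ modulo the required oversampling notion, i.e.\ that $\psi$ is equivalent to $ONF_\Sigma(T)$ as claimed. The verification that $\rho,i\models\psi \Leftrightarrow \rho,i\models \past_{[l,\infty)}a \leftrightarrow b$ on all $(\Sigma,W)$-extensions is then a short case analysis keyed to Region~I of Lemma~\ref{lemmapast}.
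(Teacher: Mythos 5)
Your proposal takes essentially the same route as the paper: the paper's $\psi$ is precisely the conjunction of $\wB[(a\wedge act)\rightarrow\Box_{[l,\infty)}(act\rightarrow b)]$ with a weak-until formula forcing $\neg a\wedge\neg b$ up to the first $a$-point and $\neg b$ throughout its $[0,l)$-future (covering the no-$a$ case by the $\wB$ disjunct), justified exactly by your observation that only Region~I of Lemma~\ref{lemmapast} survives when $u=\infty$. The one boundary slip to fix is that $\neg b$ is forced only on $[0,\tau_{first_a}+l)$, strictly \emph{before} $\tau_{first_a}+l$ --- so your ``$\until_{[0,l]}$'' and ``up to and including $\tau_{first_a}+l$'' should be half-open, since at distance exactly $l$ from $first_a$ the formula $\past_{[l,\infty)}a$ already holds and $b$ must be true there.
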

\begin{proof}
It can be shown that  $[\wB \alpha \vee \{\alpha \wU[(a \wedge act) \wedge \wB_{[0,l)}(act \rightarrow \neg b)] \}]
 \wB[(a \wedge act) \rightarrow \Box_{[l,\infty)}(act \rightarrow b)]\footnote{when $l=0$,  $\alpha \wU[a \wedge act \wedge \neg b]$}$ is equivalent to $ONF_{\Sigma}(T)$, 
 for $\alpha=(act \rightarrow (\neg a \wedge \neg b))$.
Details in Appendix \ref{proofs-4}.
  \end{proof}

\begin{lemma}
 \label{past-b1}
    Consider a temporal definition 
 $T=\wB[b \leftrightarrow \past_{[l, u)}a]$,  built from $\Sigma \cup W$. 
 Then we can synthesize a formula $\psi \in \mtl$ 
 built from  $\Sigma \cup W \cup X$ 
 linear in the size of $ONF_{\Sigma}(T)$, 
  such that $ONF_{\Sigma}(T)= \exists \downarrow X. \psi$. 
\end{lemma}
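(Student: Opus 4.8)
The plan is to exploit Lemma~\ref{lemmapast}: since the target $\psi$ must lie in $\mtl$ it is purely future-oriented and cannot evaluate $\past_{[l,u)}a$ at a position, so I will instead express, using future modalities and a few fresh oversampling propositions $X$, the region characterisation of that lemma. It places the positions where $\past_{[l,u)}a$ \emph{fails} into Region~I ($\tau_i<\tau_{first_a}+l$), Region~II ($\tau_i\ge\tau_{last_a}+u$) and Region~III ($\tau_i\in[\tau_j+u,\tau_k+l)$ for consecutive $a$-points $j<k$), whence $\past_{[l,u)}a$ \emph{holds} exactly on the union of the intervals $[\tau_{i'}+l,\tau_{i'}+u)$ taken over the $a$-points $i'$. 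Accordingly I take $\psi$ to be the conjunction of a \emph{propagation} clause $\wB\big((a\wedge act)\to\Box_{[l,u)}(act\to b)\big)$, where $act=\bigvee(\Sigma\cup W)$, which forces $b$ throughout the $[l,u)$-future of every action point carrying $a$, together with three \emph{blocking} clauses forcing $\neg b$ on Regions~I, II and III. Modulo oversampled normal form these clauses pin down $b\leftrightarrow\past_{[l,u)}a$ at every action point, which is exactly what $ONF_{\Sigma}(T)$ asserts there.

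The three blocking clauses are built as follows. Region~II is directly future-definable: $last_a$ is $(a\wedge act)\wedge\Box(act\to\neg a)$, and the clause is $\wB\big(last_a\to\Box_{[u,\infty)}(act\to\neg b)\big)$. Regions~I and III need the fresh propositions in $X$, each of which I plant at the required time by a \emph{punctual} future modality (legitimate, since $\mtl$ permits punctual intervals). For Region~I, I add a ``seen-$a$'' proposition, absent before $first_a$ and sticky afterwards (an easy future formula), so that its first occurrence is $first_a$; a clause $\future_{[l,l]}$ then plants a marker $x\in X$ at $\tau_{first_a}+l$, and a clause evaluated at position~$1$ forces $\neg b$ at every action point strictly before $x$, i.e.\ throughout Region~I. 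For Region~III, I add $beg_b,end_b\in X$: the left endpoint $j$ of a \emph{relevant} consecutive $a$-pair --- one with $\tau_k-\tau_j>u-l$, a condition expressible with future modalities \emph{precisely because} $l<u$, via $a\wedge\big((\neg a)\until_{(u-l,\infty)}a\big)$ --- is forced to carry $\future_{[u,u]}beg_b$ and, through its successor $a$-point $k$, an $end_b$ at $\tau_k+l$, and $\neg b$ is forced on the action points from each $beg_b$ inclusive to the next $end_b$ exclusive. Because $l<u$, the Region~III intervals of distinct relevant pairs are strictly ordered and disjoint, and the planted $beg_b$- and $end_b$-points are pairwise distinct and alternate strictly along any model, so this pairing is unambiguous; auxiliary clauses (illustrated in Figure~\ref{fig:os}) confine each marker to its intended time, and the cases of zero or one $a$, and of interval shapes other than $[l,u)$, are absorbed by minor variants.

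With $\psi$ the conjunction of all these clauses (guarded by $act$ so as to be in oversampled normal form), the equality $ONF_{\Sigma}(T)=\exists\downarrow X.\,\psi$ is obtained by checking the two conditions for equisatisfiability modulo oversampled projections. For the existence direction, given $\rho\models ONF_{\Sigma}(T)$ over $\Sigma\cup W$ I build $\rho'$ by inserting exactly the points demanded above at the times $\tau_{first_a}+l$, $\tau_{last_a}+u$, $\tau_j+u$ and $\tau_k+l$, merging the new label into an already present action point whenever one of these times is occupied (which is legitimate in an oversampled behaviour); then $\rho'\models\psi$ and $\rho'\downarrow X=\rho$ follow from Lemma~\ref{lemmapast}. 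Conversely, from any oversampled $\rho'\models\psi$, the three blocking clauses together with the alternation argument force $\neg b$ at every Region~I/II/III position of $\rho'\downarrow X$, the propagation clause forces $b$ everywhere else, and Lemma~\ref{lemmapast} then yields $\rho'\downarrow X\models ONF_{\Sigma}(T)$; Lemmas~\ref{lem:gen} and \ref{onf1} supply the normal-form bookkeeping. Finally $\psi$ has a bounded number of clauses, each of bounded size, the only growth being the copied endpoints $l,u$, so $|\psi|$ is linear in $|ONF_{\Sigma}(T)|$.

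The crux will be the simultaneous exactness of the four kinds of clauses: I must check that Regions~I, II and III together cover all and only the positions where $\past_{[l,u)}a$ fails --- this is exactly the point where the converse direction of Lemma~\ref{lemmapast} (deferred to Appendix~\ref{proofs-6}) enters --- and that the freely chosen oversampling points make $\psi$ true while \emph{no} unintended oversampled model of $\psi$ projects to a word violating $T$, i.e.\ that the $beg_b$/$end_b$ alternation pins Region~III down exactly and the confinement clauses really do prevent spurious markers. The one genuinely technical annoyance is $l=0$, since over strictly monotonic time $\future_{[0,0]}$ is unsatisfiable, so the marker $x$ and each $end_b$ must then be attached directly to the relevant $a$-point rather than planted a nonzero distance beyond it; this, with the open/closed endpoint variants coming from the $\sim_1,\sim_2$ alternatives of Lemma~\ref{lemmapast}, is dispatched by a routine case split.
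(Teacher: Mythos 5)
Your overall strategy is the paper's own: flatten to the temporal definition, use Lemma~\ref{lemmapast} to split the $\neg b$-positions into Regions I--III, force $b$ by a propagation clause $\wB((a\wedge act)\to\Box_{[l,u)}(act\to b))$, handle Regions I and II by direct future formulas, and handle Region III by oversampling fresh points $beg_b$ at $\tau_j+u$ and $end_b$ at $\tau_k+l$ (anchored to the pair $j,k$ via the $(\neg a)\until_{(u-l,\infty)}a$ test) and asserting $\neg b$ between them. Up to minor variations (the paper does Region I purely syntactically, without an extra point), this is the same construction.

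There is, however, one concrete gap, and it sits exactly at what you yourself call the crux. Your clause $b_s\to\fut_{[u,u]}beg_b$ only asserts that \emph{some} point at $\tau_j+u$ carries $beg_b$; it does not prevent $beg_b$ or $end_b$ from occurring elsewhere. This matters for soundness (condition 2 of equisatisfiability modulo oversampled projections, which quantifies over \emph{all} oversampled models of $\psi$): a stray $end_b$ placed strictly inside $[\tau_j+u,\tau_k+l)$ causes your clause $beg_b\to(\neg end_b\wedge(act\to\neg b))\until end_b$ to stop early, so action points in the remainder of the Region~III window may carry $b$ without violating any clause of $\psi$, and the projection then falsifies $T$. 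The condition you need --- ``every $beg_b$ lies at distance exactly $u$ after some $b_s$, and every $end_b$ at distance exactly $l$ after some $b_e$'' --- is a \emph{punctual past} constraint, and expressing it in the future-only logic $\mtl$ is the genuinely nontrivial step; ``auxiliary clauses confine each marker to its intended time'' does not supply it. The paper's solution is a second layer of oversampling: insert a point at every integer timestamp (proposition $c$ via $\mathsf{MARK}_c$), mark the integer points $\lfloor\tau_j\rfloor$ and $\lfloor\tau_k\rfloor$ with $c_{b_s}$ and $c_{b_e}$, forbid $beg_b$ throughout $[u,u+1)$ of every integer point \emph{not} marked $c_{b_s}$ (and symmetrically for $end_b$), and combine this unit-window permission with the clauses $\Box_{(u-1,u)}\neg beg_b\wedge\fut_{[u,u+1)}beg_b\wedge\Box_{(u,u+1)}\neg beg_b$ anchored at $b_s$, plus $\wB_{[0,u)}(\neg beg_b\wedge\neg end_b)$ for the prefix, so that the two one-sided restrictions intersect in the single timestamp $\tau_j+u$. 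Without this (or an equivalent device) the reduction is not sound, so you should regard the integer-grid confinement as a required ingredient rather than routine bookkeeping; the rest of your argument, including the correctness induction over consecutive $a$'s and the $l=0$ case split, matches the paper.
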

\begin{proof}
We start with $ONF_{\Sigma}(T)$ and a 
$(\Sigma,W)$ oversampled behaviour $\rho'$. 
Let $dom(\rho')=\{1,2,\dots,n\}$.
If there exists a point $i \in dom(\rho')$ marked $act \land a$,  then 
we want to ensure that  all points $j$ in $dom(\rho')$ marked $act$ such that $\tau'_j \in [\tau'_i+l, \tau'_i+u)$ are marked $b$.  This is enforced by the following formula:
\begin{itemize}
 \item $\mathsf{MARK}_b:  \wB[(a\wedge act) \rightarrow \Box_{[l,u)}(act \rightarrow b)]$
\end{itemize}
$\mathsf{MARK}_b$ enforces the direction $act \rightarrow (\past_{[l,u)} (a \wedge act) \rightarrow b)$ of $ONF_{\Sigma}(T)$.
  Marking points of $\rho'$ with $\neg b$ is considerably more involved. 
  We use Lemma \ref{lemmapast} to characterize the points where $\neg \past_{[l,u)}a$ holds,
 and use this to ensure that such points are marked $\neg b$.  
 Recall that by Lemma \ref{lemmapast}, such points can be classified into three regions. 
 
 Region I  consists of all those points to the left of $\tau_{first_a}+l$.  
 In any model, these points are described by  the formula   
 $\mathsf{MARK}_{first}=  \wB(\neg a \wedge \neg b) \vee (\neg a\wedge  \neg b) \wU (a \wedge \wB_{[0,l)} \neg b)$\footnote{when $l=0$, $\wB[(\neg a \wedge \neg b) \vee [(\neg a\wedge  \neg b) \wU (a \wedge \neg b)]]$}, which  says that there are no $b$'s in $[0, \tau_{first_a}+l)$.  
  Region II consists of all points in $[\tau_{last_a}+u, \infty)$. In any model, 
  these  points are captured by the formula 
  $\mathsf{MARK}_{last}=\wB(\Box \neg a \rightarrow \Box_{[u,\infty)} \neg b)$, which says that 
  there are no $b$'s in $[\tau_{last_a}+u, \infty)$. 

  Let us now discuss how to mark points lying in region III with $\neg b$. Recall that these are the points 
   in $[\tau_j+u, \tau_k+l)$ for any two consecutive points 
$j,k$ such that $a \in \sigma_j, \sigma_k$, but $a \notin \sigma_h, j < h < k$. 
 Consider $j,k$ as two consecutive points where $a$ holds. If $\tau_k-\tau_j \leq u-l$, then 
 clearly, there are no points in $[\tau_j+u, \tau_k+l)$ to be marked $\neg b$. 
   Assume now that $\tau_k-\tau_j > u-l$. We need to mark exactly the points falling in $[\tau_j+u,\tau_k+l)$ 
  with $\neg b$. It is quite possible that, we dont have the points $g,h$ in $dom(\rho')$ such that 
  $\tau_g=\tau_j+u$ and $\tau_h=\tau_k+l$. Here, we use the idea of oversampled projections, to obtain a 
  behaviour $\rho''$ from $\rho'$, by adding extra points to $dom(\rho')$.   
  Corresponding to every pair $j,k$ of consecutive $a$ points, 
  such that $\tau_k-\tau_j > u-l$, 
  we add  points $x,y$ to $dom(\rho')$, such that $\tau_x=\tau_j+u$ and $\tau_y=\tau_k+l$. We mark 
  these new points with fresh propositions $beg_b$ and $end_b$ respectively. We then say that 
  between $beg_b$ and $end_b$, no $b$ can occur. To pindown the points $x,y$ correctly,  
  we mark the points $j,k $ respectively with fresh propositions $b_s$ and $b_s$. 
  
   To summarize the marking scheme, given a $(\Sigma,W)$-oversampled behaviour $\rho'$ satisfying 
    $ONF_{\Sigma}(T)$, where $T=\wB[b \leftrightarrow \past_{[l, u)}a]$,  
   we construct a $(\Sigma \cup W, X)$-oversampled behaviour $\rho''$ from $\rho'$, such that
    \begin{itemize}
        \item $\rho''$ is obtained by introducing extra  points 
        to $dom(\rho')$. These extra points are 
        related to consecutive $a$ points $j,k \in dom(\rho')$,  when $\tau_k-\tau_j > u-l$. 
     For such $j,k \in dom(\rho')$, we add points $x,y$ to $dom(\rho'')$ such that $\tau_x= \tau_j+u$ and $\tau_y=\tau_k+l$. 
    The fresh propositions used so far, consists of symbols $\{b_s,b_e,beg_b,end_b\} \subseteq X$. 
      \item Symbols $b_s$ and $b_e$ represent the ``start'' and ``end'' positions 
  $j,k$. Thus, $b_s$ holds at a point where $a \wedge act$ is true, and 
  where the next consecutive occurrence of $a$ is  
  $> u-l$ distance apart. Similarly, $b_e$ holds at a point where $a \wedge act$ is true, and 
  where the previous occurrence of $a$ is  
  $> u-l$ distance apart.
       Once we mark $\tau_j$ with $b_s$ and $\tau_k$ with $b_e$, the points at $\tau_j+u$ and $\tau_k+l$   
     are marked $beg_b$ and $end_b$ respectively.  
       Once we have the points $beg_b$ and $end_b$ marked, 
     we assert that between any consecutive pair of $beg_b$ and $end_b$, 
     all points of $\rho'$ are marked $\neg b$. 
      \item We need to make sure that the $beg_b$ and $end_b$ occurring in $\rho''$ are legitimate with respect to $b_s$ and $b_e$: 
That is, there must be no ``free occurrence'' of $beg_b$ and $end_b$. 
Any occurrence of $beg_b$ and $end_b$ should witness $b_s$ and $b_e$ at exactly $u$ and $l$ distance in the past respectively. This can be done adding extra points at all integer timestamps and restricting the free occurrences of $beg_b,end_b$ in every unit interval.
              \end{itemize}
  Now we write formulae in $\mtl$ that implement the above, 
  which will hold good on the 
   $(\Sigma \cup W, X)$-oversampled behaviour $\rho''$ from $\rho'$.
 \begin{itemize}
 \item Mark $b_s$ and $b_e$ at points $j$ and $k$: The conjunction of the following two formulae 
 is denoted $\mathsf{MARK}_{j,k}$.\\ 
     $\wB (b_s \leftrightarrow (a\wedge act \wedge (act \rightarrow \neg  a) \until_{(u-l,\infty)}(a\wedge act)))$, \\
   $\wB(b_e \leftrightarrow (a \wedge act \wedge (act \rightarrow \neg a) \since (b_s\wedge act)))$\footnote{$\since$ can be removed from $\mathsf{MTL}[\until_I,\since]$
 obtaining equisatisfiable formula in $\mtl$ modulo simple projections \cite{deepak08}, details in Appendix \ref{since-rem}}
    \item Mark $beg_b$ and $end_b$ appropriately at $\tau_j+u$ and $\tau_k+l$ respectively. The conjunction of the following two formulae 
    is denoted $\mathsf{MARK}_{beg,end}$.\\ 
      $\wB(b_s \leftrightarrow 
      (\wF_{[0,u)} \Box \bot \vee [
      \Box_{(u,u+1)}\neg beg_b \wedge \fut_{[u,u+1)} beg_b \wedge \Box_{(u-1,u)}\neg beg_b]))$,\\   
      $\wB(b_e \leftrightarrow 
      (\wF_{[0,l)} \Box \bot \vee [
      \Box_{(l-1,l)}\neg end_b \wedge \fut_{(l-1,l]} end_b \wedge \Box_{(l,l+1)}\neg end_b]))$\footnote{when $l=0$,
      $\wB([b_e \leftrightarrow end_b] \wedge [b_e \rightarrow \neg b])$}   
    \item Note that the above formula only asserts where $beg_b$ and $end_b$ should occur. We must assert that all other remaining points $beg_b$ and $end_b$  do not occur. This is done as follows: 
 \begin{itemize}
\item  First mark all  integer timestamps with a fresh proposition $c$. The following  formula is denoted $\mathsf{MARK}_c$. \\
  $c \wedge \wB( c \rightarrow  [\fut_{(0,1)} \Box \bot \vee (\Box_{(0,1)} \neg c
 \wedge \fut_{(0,1]} c )]$
  \item We identify the points between $b_s$ and $b_e$ 
  by uniquely marking the closest integral point before $b_s$ with $c_{b_s}$ and 
  and the closest integral point before $b_e$ with $c_{b_e}$. 
  Recall that $b_s$ and $b_e$ were marked at $\tau_j$ and $\tau_k$; 
  thus, $c_{b_s}$ and $c_{b_e}$ get marked respectively at points $\lfloor \tau_j \rfloor$ 
  and $\lfloor \tau_k \rfloor$.
  We then assert that $beg_b$ can occur at a point $t$ 
 only if there is a $c_{b_s}$ in $(t-u-1, t-u]$.  
  Thus, given that $c_{b_s}$ is marked at 
  $\lfloor \tau_j \rfloor$, $beg_b$ is marked only 
  in $[\lfloor \tau_j \rfloor+u, \lfloor \tau_j \rfloor+u+1)$.
  However, by formula $\mathsf{MARK}_{beg,end}$,
  we disallow $beg_b$ in $(\tau_j+u, \tau_j+u+1)$
  and $(\tau_j+u-1, \tau_j+u)$.
   Thus, we obtain a unique marking for $beg_b$.
    In a similar manner, we obtain a unique marking for $end_b$, given $b_e$.
          The conjunction of the following formulae denoted  $\mathsf{MARK}_{c_b}$ 
  marks $c_{b_s}$ and $c_{b_e}$, and 
  controls the marking of $beg_b$ and $end_b$ correctly:\\
  $\wB[c_{b_s} \leftrightarrow (c \wedge \wF_{[0,1)} b_s)] \wedge 
\wB[c_{b_e} \leftrightarrow (c \wedge \wF_{[0,1)} b_e)]$\\
   $\wB[c \wedge \neg c_{b_s} \rightarrow  \wB_{[u,u+1)} \neg beg_b]$\\  
  $\wB[c \wedge \neg c_{b_e} \rightarrow \wB_{[l,l+1)} \neg end_b]$\\
    Note that these formula do not restrict the behavior of $beg_b$ and $end_b$  in the prefix $[0,u]$. At these timepoints $beg_b$ and $end_b$ should not occur. Here we assert that 
    $\wB_{[0,u)}(\neg beg_b \wedge \neg end_b)$
\end{itemize}
  \item   Now that we have precisely placed $beg_b$ and $end_b$, we can assert at all points of $\rho'$ between $beg_b$ and $end_b$,  
   $\neg b$ holds. This formula is denoted 
  $\mathsf{MARK}_{\neg b}$.\\ 
         $\wB\{beg_b \rightarrow (\neg end_b \wedge (act \rightarrow \neg b)) \wU end_b \}$
   \end{itemize}
   Figure \ref{fig:os} illustrates marking of $\neg beg_b$. 
  
  Let $\mathsf{MARK}=\mathsf{MARK}_b \wedge \mathsf{MARK}_{first} \wedge  \mathsf{MARK}_{c} \wedge \mathsf{MARK}_{last} 
  \wedge \mathsf{MARK}_{j,k} \wedge \mathsf{MARK}_{beg,end} \wedge \mathsf{MARK}_{\neg b} \wedge \mathsf{MARK}_{c_b}.$\footnote{when $l=0$, conjunct $\wB([a \wedge \Box_{[0,u)}\neg a \wedge \fut_{[0,u]}a] \rightarrow \neg b)$ to $\mathsf{MARK}$} 
  Let $\Sigma_i=\Sigma \cup W \cup X$, 
  for $X=\{b_e,b_s,beg_b,end_b,c,c_{b_s},c_{b_e}\}$. Then, $\rho''$ is a $(\Sigma \cup W, X)$-oversampled behaviour 
  such that $\rho'' \models ONF_{\Sigma_i}(\mathsf{MARK})$ iff $\rho' \models ONF_{\Sigma}(T)$. 
  That is, $ONF_{\Sigma}(T)= {\exists \downarrow X}. ONF_{\Sigma_i}(\mathsf{MARK})$. 
   A detailed proof of correctness can be seen in Appendix \ref{os-proof}.
    \end{proof}

\begin{theorem}
\label{theo:main}
For every $\varphi \in \mtlsns$ over $\Sigma$, we can construct $\psi_{fut}$ in $\mtl$ 
over $\Sigma' \supseteq \Sigma$ such that $\varphi = \exists \downarrow X.\psi_{fut}$, $X=\Sigma'-\Sigma$. 
\end{theorem}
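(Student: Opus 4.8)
The plan is to assemble Theorem~\ref{theo:main} out of the pieces already developed in this section, essentially by chaining the equisatisfiability-modulo-projection relations and invoking the closure lemmas to glue the per-modality reductions together. First I would observe that a formula $\varphi \in \mtlsns$ is, up to the (harmless) flattening step, a formula of $\mathsf{MTL}[\until_I,\past_{np}]$: the only past modalities are non-punctual. So the proof proceeds exactly along the six-step pipeline laid out just before Lemma~\ref{remove-pastinf}. Flatten $\varphi$ to get $\varphi_{flat} = \psi \wedge \bigwedge_{i=1}^k T_i$ with $\psi \in \mtl$ and each $T_i = \wB(b_i \leftrightarrow \past_{\langle l_i,u_i\rangle} a_i)$, and record via Lemma~\ref{lem:flatgen} that $\varphi = \exists \downarrow W.\, ONF_\Sigma(\varphi_{flat})$, where $W$ is the set of witnesses.

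Next I would handle each temporal definition $T_i$ individually. For each $i$, choose a fresh proposition set $X_i$ (disjoint from all others and from $\Sigma \cup W$) and apply Lemma~\ref{remove-pastinf} when the interval is unbounded, or Lemma~\ref{past-b1} when it is bounded, to obtain $\zeta_i \in \mtl$ over $\Sigma_i = \Sigma \cup W \cup X_i$ with $ONF_\Sigma(T_i) = \exists \downarrow X_i.\, \zeta_i$ (in the unbounded case this is even an equivalence, which is stronger than needed). Here one needs the remark, stated in the excerpt, that these lemmas extend from $[l,u)$ and $[l,\infty)$ to arbitrary interval shapes $\langle l,u\rangle$; I would simply cite that. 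Since $ONF_\Sigma$ distributes over conjunction, $ONF_\Sigma(\varphi_{flat}) = ONF_\Sigma(\psi) \wedge \bigwedge_{i=1}^k ONF_\Sigma(T_i)$, and $ONF_\Sigma(\psi)$ is already in $\mtl$ since $\psi$ is. Applying Lemma~\ref{lem:boolclosedequis-2} repeatedly (taking $\varphi_1,\varphi_2$ as the appropriate partial conjunctions) yields $ONF_\Sigma(\bigwedge_i T_i) = \exists \downarrow X.\, \bigwedge_i \zeta_i$ for $X = \bigcup_i X_i$, and then once more to fold in the $\mtl$ conjunct $ONF_\Sigma(\psi)$, giving $ONF_\Sigma(\varphi_{flat}) = \exists \downarrow X.\, \big(ONF_\Sigma(\psi) \wedge \bigwedge_i \zeta_i\big)$.

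Finally I would compose the two projection relations. We have $\varphi = \exists \downarrow W.\, ONF_\Sigma(\varphi_{flat})$ and $ONF_\Sigma(\varphi_{flat}) = \exists \downarrow X.\, \psi_{fut}$ where $\psi_{fut} := ONF_\Sigma(\psi) \wedge \bigwedge_{i=1}^k \zeta_i \in \mtl$. A transitivity argument on equisatisfiability modulo oversampled projections (adding the $W$-points first, then the $X$-points, using that the two projection functions $f$ compose to a strictly increasing function) gives $\varphi = \exists \downarrow (W \cup X).\, \psi_{fut}$. Setting $\Sigma' = \Sigma \cup W \cup X$ so that $\Sigma' - \Sigma = W \cup X$, and noting $\psi_{fut}$ is built over $\Sigma'$ and lies in $\mtl$ (it is a boolean combination of $\mtl$ formulae), completes the proof.

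The main obstacle is the transitivity/composition step: the excerpt defines $\exists \downarrow X.\,\psi$ for a single pair of disjoint sets, not for iterated projections, so one must check that ``oversample by $W$, then oversample by $X$'' collapses to a single oversampled projection by $W \cup X$. This requires verifying that the composition of the two witnessing strictly-increasing maps $f$ is again a legitimate oversampling map — i.e.\ that non-action points of the intermediate word relative to $\Sigma \cup W$ become non-action points of the final word relative to $\Sigma$, and that the first/last points remain action points — and that the satisfaction transfers in both directions. A secondary, more bookkeeping-level concern is ensuring the disjointness hypotheses of Lemma~\ref{lem:boolclosedequis-2} genuinely hold across all $k$ applications (the $X_i$ are pairwise disjoint and disjoint from $W$ by construction, but one must also confirm $ONF_{\Sigma_i}$ versus $ONF_{\Sigma_j}$ interact correctly), which is routine but must be stated carefully.
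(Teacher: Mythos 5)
There is a genuine gap at the very first step. You assert that a formula of $\mtlsns$ is, up to flattening, a formula of $\mathsf{MTL}[\until_I,\past_{np}]$. That is not so: $\mtlsns=\mathsf{MTL}[\until_I,\since_{np}]$ contains \emph{binary} non-punctual since modalities, and flattening a subformula $\varphi_1 \since_{\langle l,u\rangle}\varphi_2$ produces a temporal definition of the form $\wB\big(b \leftrightarrow a_1 \since_{\langle l,u\rangle} a_2\big)$, not $\wB\big(b \leftrightarrow \past_{\langle l,u\rangle} a\big)$. Lemmas~\ref{remove-pastinf} and~\ref{past-b1} only eliminate temporal definitions of the latter, unary form, so your pipeline as written covers only the sublogic $\mathsf{MTL}[\until_I,\past_{np}]$ and does not reach all of $\mtlsns$. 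The paper's proof of Theorem~\ref{theo:main} is precisely the bridge you are missing: first rewrite each $a \since_{\langle l,u\rangle} b$ as a conjunction of a constrained \emph{unary} past, an \emph{unconstrained} since, and a $\boxminus_{[0,l)}$ condition (e.g.\ $a \since_{[l,r)}b \equiv \past_{[l,r)}b \wedge (a\since b)\wedge \boxminus_{[0,l)}(a \wedge a \since b)$), then eliminate the unconstrained $\since$ by the known equisatisfiable reduction to $\mtl$ \cite{deepak08}, \cite{formats11}, and only then feed the resulting $\mathsf{MTL}[\until_I,\past_{np}]$ formula into the machinery you describe. Note that $\boxminus_{[0,l)}$ is itself a (negated) unary non-punctual past, so it is absorbed by the same lemmas.

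The rest of your write-up is a faithful reconstruction of the six-step reduction in Section~\ref{os-main} (which the paper presents before the theorem rather than inside its proof), and the two obstacles you flag are real: the paper never formally proves transitivity of $\exists\downarrow$, nor does it spell out the disjointness bookkeeping across the $k$ applications of Lemma~\ref{lem:boolclosedequis-2}. But flagging those while omitting the $\since_{np}\to\past_{np}+\since$ decomposition means the argument, as it stands, does not prove the stated theorem; it proves the weaker claim for $\mathsf{MTL}[\until_I,\past_{np}]$.
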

\begin{proof}
 Follows from the fact that 
 $\since_{np}$ can be expressed using $\since$ and 
$\past_{np}$\footnote{For instance, we can write $a \since_{[l, r)}b$ as 
$\past_{[l, r)}b ~\wedge ~(a \since b) \land \boxminus_{[0,l)}(a \wedge a \since b)$, for $r=l+1, \infty$} 
 \cite{deepak08}
and elimination of $\since$ 
 \cite{deepak08}, \cite{formats11}.
\end{proof}

By symmetry, using reflection \cite{formats11}, the satisfiability of 
$\mtluns$ can be reduced to the satisfiability of $\mtlsns$. Hence, the satisfiability of 
 $\mtluns$ is also decidable.

\subsection{Elimination of Past with Simple Projections}
\label{sp-main}
This section is devoted to showing that given any $\varphi \in \mathsf{MTL}[\until_I, \past_{np}]$ built from  $\Sigma$, we can synthesize $\varphi' \in \mtl$ built from $\Sigma'$ such that $\varphi=\exists X. \varphi'$, where 
$X=\Sigma'-\Sigma$. The main steps are similar to the case of oversampling 
projections.
Here are the steps:  
\begin{enumerate}
\item Start with a formula $\varphi \in \mathsf{MTL}[\until_I, \past_{np}]$ built from $\Sigma$, and a timed word $\rho$ over $\Sigma$. 
 Flatten $\varphi$ obtaining $\varphi_{flat}=\psi \wedge \bigwedge_{i=1}^kT_i$. Each $T_i$ 
is a temporal definition of the form $\wB(b_i \leftrightarrow \past_{\langle l, u \rangle}a_i)$, and $\psi \in \mtl$.  
  Let $w_i$ be the fresh witness variable introduced in the temporal definition $T_i$. 
Let $W=\{w_1, \dots,w_n\}$ be the set of all the witness variables.  
\item As discussed in section \ref{sec-flat},  $\varphi=\exists W. ENF_{\Sigma}(\varphi_{flat})$.
\item We now synthesize {\it modulo simple projections}, formulae in $\mtl$ 
equisatisfiable  with  $ENF_{\Sigma}(T_i)$ for $i=1,2, \dots,k$, modulo simple projections.  
  \item Start with $ENF_{\Sigma}(T_1)$, a formula built from $\Sigma \cup W$.
   Let $\Sigma_1=\Sigma \cup W$. 
  We synthesize a formula $\varphi_1 \in \mtl$ built from $\Delta_1=\Sigma \cup W \cup  X_1$
   such that 
$ENF_{\Sigma}(T_1)=\exists X_1 \varphi_1$.
  \item  Repeat step 5 for $ENF_{\Sigma}(T_i)$ for all $2 \leq i \leq k$, obtaining 
 formulae $\varphi_i \in \mtl$ built from some $\Delta_i \supseteq \Sigma_1$ such that  
    $ENF_{\Sigma}(T_i)=\exists X_i.\varphi_i$
in each case. The choice of $\Delta_i$ is such that $(\Delta_i - \Sigma_1) \cap (\Delta_j - \Sigma_1)=\emptyset$ for $i \neq j$.
\item Using Lemma \ref{lem:boolclosedequis}, we obtain 
$ ENF_{\Sigma}(\varphi_{flat})=ENF_{\Sigma}(\psi \wedge \bigwedge_{i=1}^kT_i)=
 \exists X.[\psi \wedge \bigwedge_{i=1}^k \varphi_i]$, where $X=\bigcup_{i=1}^k X_i$. Then we get  
 $\varphi=\exists W. ENF_{\Sigma}(\varphi_{flat})=\exists W.[\exists X.(\psi \wedge \bigwedge_{i=1}^k \varphi_i)]$.
\item Steps 1-7 show that $\psi \wedge \bigwedge_{i=1}^k \varphi_i \in \mtl$ is equisatisfiable to $\varphi$ modulo
simple projections. 
  \end{enumerate}
  
 Lemma \ref{remove-pastinf2} explains how to eliminate temporal definitions 
of the form $\wB[b \leftrightarrow \past_{\langle l, \infty)}(a)]$, while 
Lemma \ref{past-b2} explains how to eliminate temporal definitions of the form 
$\wB[b \leftrightarrow \past_{\langle l, u \rangle}(a)]$.
If all the past modalities involved have unbounded intervals, 
then we get an equivalent formula, as shown by Lemma \ref{remove-pastinf2}.

\begin{lemma}
\label{remove-pastinf2}
Consider the temporal definition $T=\wB[b \leftrightarrow \past_{[l, \infty)}(a)]$ built from $\Sigma \cup W$. 
Then we can synthesize a formula $\psi \in \mtl$ built from $\Sigma \cup W$ equivalent to $T$. 
\end{lemma}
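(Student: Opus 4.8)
The plan is to reuse the construction behind Lemma~\ref{remove-pastinf}, but stripped of the oversampling/ONF machinery: since simple projections neither add nor delete points, there is no need for the $act$ predicate or for any fresh proposition, and we can aim directly at a formula over $\Sigma\cup W$ that is genuinely \emph{equivalent} to $T$, i.e.\ $L(\psi)=L(T)$ over $\Sigma\cup W$. The starting observation is the specialization of Lemma~\ref{lemmapast} to $u=\infty$: Region~II ($\tau_i\sim_2\tau_{last_a}+\infty$) and Region~III ($\tau_i\in\langle\tau_j+\infty,\tau_k+l\rangle$) are both empty, so for every timed word $\rho$ and every $i\in dom(\rho)$ one has $\rho,i\models\neg\past_{[l,\infty)}a$ iff either $a$ never occurs in $\rho$ or $\tau_i<\tau_{first_a}+l$; equivalently, when $a$ does occur, $\rho,i\models\past_{[l,\infty)}a$ iff $\tau_i\geq\tau_{first_a}+l$.

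Accordingly I would set $\psi=\mathsf{MARK}_{\neg b}\wedge\mathsf{MARK}_b$ where, with $\alpha=\neg a\wedge\neg b$,
\[
\mathsf{MARK}_b=\wB\big[a\rightarrow\Box_{[l,\infty)}b\big],\qquad
\mathsf{MARK}_{\neg b}=\wB\alpha\ \vee\ \big(\alpha\wU[a\wedge\wB_{[0,l)}\neg b]\big),
\]
the bracketed formula being replaced by $a\wedge\neg b$ when $l=0$. The conjunct $\mathsf{MARK}_b$ propagates $b$ forward from every $a$-point: applied at a point $j$ with $a\in\sigma_j$ it forces $b$ at every later point at distance $\geq l$ from $j$, which is exactly the ``$\past_{[l,\infty)}a\Rightarrow b$'' half of $T$. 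The conjunct $\mathsf{MARK}_{\neg b}$, read at position~$1$, asserts that $b$ is absent on the whole prefix preceding $\tau_{first_a}+l$ (first disjunct: $a$ never holds, so $b$ is forbidden everywhere; second disjunct: $\neg a\wedge\neg b$ up to $first_a$, and $\wB_{[0,l)}\neg b$ at $first_a$ kills $b$ on $[\tau_{first_a},\tau_{first_a}+l)$), which by the region analysis is exactly the ``$\neg\past_{[l,\infty)}a\Rightarrow\neg b$'' half. Note $\psi$ uses only $\until_I$-definable (future) modalities, so $\psi\in\mtl$, and $|\psi|$ is linear in $|T|$.

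I would then verify $L(\psi)=L(T)$ by the two inclusions. For $L(T)\subseteq L(\psi)$: in a model of $T$, $b$ marks precisely the points satisfying $\past_{[l,\infty)}a$, i.e.\ (when $a$ occurs) exactly the positions at time $\geq\tau_{first_a}+l$; one checks directly that such a word satisfies both $\mathsf{MARK}_b$ and $\mathsf{MARK}_{\neg b}$. For $L(\psi)\subseteq L(T)$: fix $\rho\models\psi$ and $i\in dom(\rho)$; if $\rho,i\models\past_{[l,\infty)}a$, pick a witnessing $a$-point $j<i$ with $\tau_i-\tau_j\geq l$, and $\mathsf{MARK}_b$ at $j$ yields $b\in\sigma_i$; if $\rho,i\models\neg\past_{[l,\infty)}a$, then by Lemma~\ref{lemmapast} either $a$ never occurs or $\tau_i<\tau_{first_a}+l$, and in both cases $\mathsf{MARK}_{\neg b}$ forces $b\notin\sigma_i$. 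Hence $\rho,i\models b\leftrightarrow\past_{[l,\infty)}a$ for all $i$, so $\rho\models T$.

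The only delicate points are the endpoint conventions, which I expect to be routine but needing care: for $l=0$ the interval $[0,l)$ is empty, so $\wB_{[0,l)}\neg b$ degenerates and one instead forbids $b$ at $first_a$ itself (the footnoted variant above); and for a left-open interval $(l,\infty)$ the Region~I condition becomes $\tau_i\leq\tau_{first_a}+l$, which is absorbed by replacing $\Box_{[l,\infty)}$ and $\wB_{[0,l)}$ by their $(l,\infty)$ and $[0,l]$ counterparts. Throughout one must keep track of the strictness of $\past$, of $\Box$, and of $\wU$ — the same bookkeeping as in Lemma~\ref{remove-pastinf} — and I would relegate it to the appendix as is done there.
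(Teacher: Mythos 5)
Your proposal is correct and follows essentially the same route as the paper's proof: split $T$ into the two implications, capture the positive half by $\wB[a\rightarrow\Box_{[l,\infty)}b]$ and the negative half by $\wB(\neg a\wedge\neg b)\vee(\neg a\wedge\neg b)\wU(a\wedge\wB_{[0,l)}\neg b)$, justified by the $u=\infty$ specialization of Lemma~\ref{lemmapast}. The only cosmetic difference is that the paper writes the first conjunct as $\wB(\neg a)\vee\wB[a\rightarrow\Box_{[l,\infty)}b]$, whose first disjunct is logically redundant, so your simplification changes nothing.
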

\begin{proof}
It can be shown that $[\wB(\neg a) \vee 
\wB[a\rightarrow \Box_{[l,\infty)}b]] \wedge [\wB(\neg a\wedge  \neg b) \vee (\neg a\wedge  \neg b)  \wU (a \wedge \wB_{[0,l)} \neg b)]$\footnote{when $l=0$,  $[[\wB(\neg a) \vee 
\wB[a\rightarrow \Box_{[l,\infty)}b]] \wedge [\wB(\neg a\wedge  \neg b) \vee (\neg a\wedge  \neg b)  \wU (a \wedge \neg b)]$} is equivalent to 
$T$. Details can be found in Appendix \ref{proofs-5}.
\end{proof}

\begin{lemma}
 \label{past-b2}
 Consider the temporal definition  $T=\wB[b \leftrightarrow \past_{\langle l,u \rangle}(a)]$ 
 built from $\Sigma \cup W$.  We can synthesize a formula $\psi \in \mtl$ built from $\Sigma \cup W \cup X$   such that  $ENF_{\Sigma}(T) = \exists X. \psi$. 
\end{lemma}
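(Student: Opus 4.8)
The plan is to mimic the oversampling proof of Lemma~\ref{past-b1}, but without adding any new action points, so every marking must land on positions already present in the $(\Sigma,W)$-simple extension $\rho$. First I would keep the easy half: the conjunct $\mathsf{MARK}_b = \wB[a \rightarrow \Box_{\langle l,u\rangle} b]$ enforces the direction $b \leftarrow \past_{\langle l,u\rangle} a$ exactly as before, and Region~I and Region~II are handled verbatim by $\mathsf{MARK}_{first} = \wB(\neg a \wedge \neg b) \vee [(\neg a \wedge \neg b) \wU (a \wedge \wB_{[0,l)} \neg b)]$ and $\mathsf{MARK}_{last} = \wB(\Box \neg a \rightarrow \Box_{[u,\infty)} \neg b)$, since these only talk about existing points. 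The entire difficulty is concentrated in Region~III: for consecutive $a$-points $j<k$ with $\tau_k - \tau_j > u - l$, I must force every existing point with timestamp in $\langle \tau_j+u, \tau_k+l\rangle$ to be marked $\neg b$, and I cannot introduce a marker at $\tau_j+u$ or $\tau_k+l$ if no position sits there.

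The key idea is to push the interval endpoints onto the witnessing $a$-points themselves using bounded $\until$/$\fut$ with non-punctual intervals is not available here --- instead I would express ``$i$ lies strictly after $\tau_j+u$'' and ``$i$ lies strictly before $\tau_k+l$'' by quantifying from $i$ backwards to the $a$-points. Concretely, mark $b_s$ at a point $j$ that is an $a$-point whose next $a$-point is more than $u-l$ away (as in $\mathsf{MARK}_{j,k}$ of Lemma~\ref{past-b1}, which uses only future modalities and so survives intact), and dually $b_e$. Then a point $i$ is in Region~III precisely when, reading the timeline, the most recent $a$-point at or before $\tau_i - u$ carries $b_s$ and the earliest $a$-point at or after $\tau_i + l$ carries $b_e$, with no $a$ strictly between them. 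I would capture ``$\neg\past_{\langle l,u\rangle} a$ via case (c)'' directly as a single $\mtl$ formula over $\Sigma \cup W$: something of the form $\past_{[l,u)} a$ fails while both $\past_{[u,\infty)}(a \wedge b_s)$-type and $\fut$-reaching-$b_e$-type conditions hold, combined so that the $a$-free gap between the $b_s$-point and the $b_e$-point brackets $\tau_i$. Since $b_s,b_e$ are defined by formulas that only need future modalities and simple boolean combinations, and since ``$\neg \past_{\langle l,u\rangle} a$'' is itself a legal non-punctual past modality — wait, $\langle l,u\rangle$ \emph{is} the given non-punctual interval, so $\neg\past_{\langle l,u\rangle}a$ is \emph{not} directly in $\mtl$; this is exactly the formula we are eliminating — so the Region~III condition must be rephrased using only $\until_I$ (with arbitrary, possibly punctual, future intervals) plus already-introduced past-free witnesses, which is where $b_s,b_e$ and their $\until$-based definitions do the work of simulating the past reach.

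Assembling, I would set $X$ to contain $b_s, b_e$ (and possibly a handful of auxiliary witnesses to flatten the nested $\until$'s), take $\psi = \mathsf{MARK}_b \wedge \mathsf{MARK}_{first} \wedge \mathsf{MARK}_{last} \wedge \mathsf{MARK}_{j,k} \wedge \mathsf{MARK}_{\neg b}^{III}$, where $\mathsf{MARK}_{\neg b}^{III}$ says: any point which an $\until$-expressible test certifies to lie in the $a$-free gap $\langle \tau_j+u,\tau_k+l\rangle$ between a $b_s$-point $j$ and a $b_e$-point $k$ is marked $\neg b$. Then I would argue both inclusions: if $\rho' \models ENF_{\Sigma}(\psi)$ then $\rho' \setminus X \models ENF_{\Sigma}(T)$, using Lemma~\ref{lemmapast} to see that $\mathsf{MARK}_b$ plus the three Region formulas pin down $b$ to exactly the set $\{i : \rho', i \models \past_{\langle l,u\rangle} a\}$; conversely, given $\rho \models ENF_{\Sigma}(T)$, decorate it with the forced values of $b_s,b_e$ and check all conjuncts hold, invoking Lemma~\ref{lem:boolclosedequis} when combining with the other $T_i$'s and with $\psi$. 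The main obstacle — and the place where I expect to spend real effort — is verifying that the Region~III membership test can be written purely with future modalities $\until_I$: one must certify ``there is a $b_s$-marked $a$-point exactly at distance $\ge u$ in the past and a $b_e$-marked $a$-point exactly at distance $\ge l$ in the future with no $a$ in between, and $\tau_i \in \langle \tau_j+u,\tau_k+l\rangle$'' without a past operator and without new sampling points, which forces a somewhat delicate encoding (likely marking each point with the ``distance-to-previous-$a$'' type information via additional witnesses defined by $\until_I$) and a careful case analysis on the open/closed nature of $\langle l,u\rangle$ exactly as flagged in the proof of Lemma~\ref{lemmapast}.
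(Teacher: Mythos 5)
Your proposal correctly isolates the difficulty: Regions I and II go through exactly as in the paper, and the whole problem is forcing $\neg b$ on the existing points of $\langle\tau_j+u,\tau_k+l\rangle$ for consecutive $a$-points $j,k$ with $\tau_k-\tau_j>u-l$, using only $\until_I$ and no new sampling points. But you stop precisely where the paper's proof does its real work. Your ``Region III membership test'' --- the most recent $b_s$-marked $a$-point lies at distance $\geq u$ in the past of $i$, the next $b_e$-marked $a$-point at distance $\geq l$ in its future --- is itself a (bounded-below) \emph{past} condition anchored at $i$, which is exactly the kind of formula being eliminated. Phrased instead from $j$ looking forward, you would need something like $\Box_{[u,\,\tau_k+l-\tau_j)}\neg b$, whose upper endpoint $\tau_k+l-\tau_j$ is in general not an integer and hence not a legal $\mathsf{MTL}$ interval constraint. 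You acknowledge this obstacle and gesture at ``distance-to-previous-$a$ witnesses,'' but you never supply a construction, so the proposal as written does not yield a formula in $\mtl$.

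The paper's resolution, which is absent from your sketch, is to over-approximate $[\tau_j+u,\tau_k+l)$ as the intersection $I^1_{j,k}\cap I^2_{j,k}$ of two intervals of \emph{integer} length $d=\lceil\tau_k-\tau_j\rceil+(l-u)$, namely $I^2_{j,k}=[\tau_j+u,\tau_j+u+d)$ anchored at the existing point $j$ and $I^1_{j,k}=[\tau_k+l-d,\tau_k+l)$ anchored at the existing point $k$; each is expressible by $\Box_{[u,u+d)}$ from $j$ and $\wB_{[l-d,l)}$ from $k$, and one marks $\neg b$ exactly on points carrying both markers $x_{cb}$ and $y_{cb}$. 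Determining $d$ requires testing which unit interval $(t,t+1]$ contains $\tau_k-\tau_j$ for $t\in\{u-l,\dots,u-1\}$, which is the source of the family $\{beg_{db},end_{db}\mid 1\le d\le l\}$ and of the exponential blow-up discussed in Section~\ref{compl}. Two further ingredients you omit would also be needed for correctness: the alternating bits $a_0,a_1$ (without them the over-approximations for adjacent pairs $j,k$ and $k,h$ can intersect and mark $\neg b$ on points already forced to be $b$ by $\mathsf{MARK}_b$, as in Figure~\ref{fig:err}), and the separate treatment of the case $\tau_k-\tau_j>u$, where the region $[\tau_j+u,\tau_k+l)$ contains $\tau_k$ itself and is handled by splitting at $\tau_k$ with the propositions $b^1_\infty,b^2_\infty$. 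So the skeleton of your argument matches the paper, but the central idea that makes Region III expressible without oversampling is missing.
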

\begin{proof}
We prove the lemma for temporal definitions of the form 
$T=\wB[b \leftrightarrow \past_{[l,u)}(a)]$. The proof can be extended to all kinds 
of intervals $\langle l, u \rangle$. 

Note that $T$ is the conjunction of $C_1 = \wB[b \leftarrow \past_{[l,u)}a]$ and 
$C_2 = \wB[b \rightarrow \past_{[l,u)}a]$. Consider a timed word $\rho$ over $\Sigma \cup W$.    
 $\rho$ satisfies $C_1$ iff, 
for all points $j \in dom(\rho)$, if there exists a point  
 $i \in dom(\rho),$  with $\tau_i \in (\tau_j-u, \tau_j-l]$ and $a \in \sigma_i$, 
 then $b \in \sigma_j$.  
  Clearly, such models $\rho$ are such that 
whenever $a \in \sigma_i$, then $b \in \sigma_j$ for all $j \in dom(\rho)$ 
such that $\tau_j \in [\tau_i+l,\tau_i+u)$. 
Let $\mathsf{MARK}_b= 
\wB[a\rightarrow \Box_{[l,u)}b]$. 
Clearly,  $\rho \models \mathsf{MARK}_b$ iff $\rho \models C_1$.

For a word $\rho$ to satisfy $T$, the above conditions are not enough, since they only characterize points 
in the model where $b$ hold. The formula $\mathsf{MARK}_b \in \mtl$ does not characterize points 
where $b$ should not hold. Models satisfying $\mathsf{MARK}_b$ can allow a point where 
$b$ as well as $\neg   \past_{[l,u)}a$ holds.  
Our next goal is therefore, to find a formula $\mathsf{MARK}_{\neg b} \in \mtl$ which is equisatisfiable to  $C_2$. 
Then $\mathsf{MARK}_b \wedge \mathsf{MARK}_{\neg b}$ is the formula in $\mtl$ that is equi-satisfiable to $T$.

 We use Lemma \ref{lemmapast} to characterize the points where $\neg \past_{[l,u)}a$ holds,
 and use this to ensure that such points are marked $\neg b$.  
 Recall that by Lemma \ref{lemmapast}, such points can be classified into three regions. 
 Points lying in Regions I,II are handled by the formulae 
 $\mathsf{MARK}_{first},\mathsf{MARK}_{last}$ given in Lemma \ref{past-b1}.
 So far, we have the conjunction $\mathsf{MARK}_{first} \wedge \mathsf{MARK}_{last} \wedge 
\mathsf{MARK}_{b}$ of formulae in $\mtl$.

Region III consists of all points in $[\tau_j+u, \tau_k+l)$ for any pair of consecutive ``$a$'' points 
$j,k$ ($a \in \sigma_j, \sigma_k$ and for all $j<h<k$, $a \notin \sigma_h$). The difficulty in marking points 
in $[\tau_j+u, \tau_k+l)$ with $\neg b$ is :
\begin{enumerate}
\item    Points $p_1, p_2 \in dom(\rho)$ 
with $\tau_{p_1}=\tau_j+u$ and $\tau_{p_2}=\tau_k+l$ may not be present in $dom(\rho)$;
\item  The length of the region $[\tau_j+u, \tau_k+l)$ may not be an integer. If it were, 
we can pin down points in  $[\tau_j+u, \tau_k+l)$ by anchoring at points $j, k$
since $l,u$ are integers.
\end{enumerate}
 Unless we can pin down these points or mark this region uniquely, we may end up 
marking lesser points than necessary with $\neg b$ or 
may mark a point already marked $b$ with $\neg b$, giving rise to inconsistencies. 
 The rest of the proof is devoted to 
showing how we can indeed pin down the set of points between $\tau_j+u$ and $\tau_k+l$.

 Since we may not have the points $\tau_j+u$ and $\tau_k+l$, we try to get points as close as possible 
 to $\tau_j+u$ and $\tau_k+l$, 
 by considering an over approximation of the interval $[\tau_j+u, \tau_k+l)$. 
  The idea is to express $[\tau_j+u, \tau_k+l)$ as the intersection 
 of two intervals $I^1_{j,k}$ and $I^2_{j,k}$,  both having integer length, and such that 
 it is possible to pin down $I^1_{j,k}$ and $I^2_{j,k}$. 
      For this, we consider the intervals 
 $I^1_{j,k}=[\tau_k+l-d, \tau_k+l)$ and $I_{j,k}^2=[\tau_j+u, \tau_j+u+d)$ where $d=\lceil \tau_k - \tau_j \rceil + (l-u)$. 
 Note that $d$ is the closest integer that is larger than the actual duration of the interval 
  $[\tau_j+u, \tau_k+l)$. Also, $\tau_k+l-d \leq \tau_j+u$ and $\tau_k+l \leq \tau_j+u+d$. 
  Hence, $[\tau_j+u, \tau_k+l) \subseteq I^1_{j,k} \cap I^2_{j,k}$.  
  We now pin down points in the intersection $I^1_{j,k} \cap I^2_{j,k}$ 
 and mark them $\neg b$. Towards getting the points in the intersection, we 
 allow marking points $i \in dom(\rho)$ with fresh witness propositions, obtaining 
 from $\rho$, a simple extension $\rho'$.
 
 \begin{figure*}[ht]
  \begin{center}
  \begin{picture}(45,8)(20,-7)
  \thicklines
  \drawline[AHnb=0,ATnb=0](-15,0)(105,0)
  \drawline[AHnb=0,ATnb=0](-15,-2)(-15,2)
  \drawline[AHnb=0,ATnb=0](105,-2)(105,2)
  \put(-5,4){\tiny $a\wedge a_0$}
  \put (-5,-4){$\tau_j$}
   \drawline[AHnb=0,ATnb=0](-5,-2)(-5,2)
  \put(5,4){\tiny $a \wedge a_0$}
  \put (5,-4){$\tau_k$}
   \drawline[AHnb=0,ATnb=0](5,-2)(5,2)
  \drawline[AHnb=0,ATnb=0](-5,-6.5)(5,-6.5)
  \drawline[AHnb=0,ATnb=0](-5,-5)(-5,-8)
   \drawline[AHnb=0,ATnb=0](5,-5)(5,-8)
  \drawline[AHnb=0,dash={0.25}0	](55.5,-6)(70.5,-6)
  \put(55,-6.5){[}
  \put(55,-8,3){\tiny{$\tau_j+l$}}
  \put(70,-8.3){\tiny{$\tau_j+u$}}
  \put(70,-6.5){)}
  \drawline[AHnb=0,dash={0.25}0	](55.5,0)(55.5,-6.5)
  \drawline[AHnb=0,dash={0.25}0	](70.8,0)(70.8,-6.5)
  \put(65,-10.5){[}
  \put(80,-10.5){)}
   \put(65,-12.5){\tiny{$\tau_k+l$}}
   \put(80,-12.5){\tiny{$\tau_k+u$}}
   \drawline[AHnb=0,ATnb=0	](65.5,0)(65.5,-11)
   \drawline[AHnb=0,ATnb=0	](80.8,0)(80.8,-11)
   \drawline[AHnb=0,dash={0.25}0	](65.5,-10.2)(80.5,-10.2)
  \put(-5,-9){\tiny{$<u-l$}}
  \end{picture}
  \caption{$\tau_k-\tau_j \leq u-l$}
  \label{fig:case1}
  \end{center}
  \end{figure*}
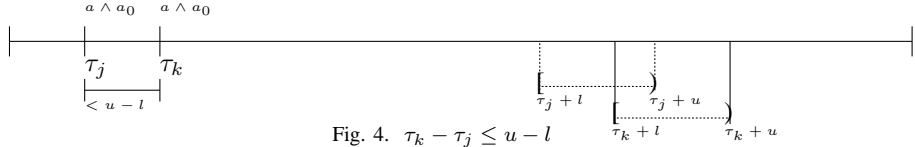

  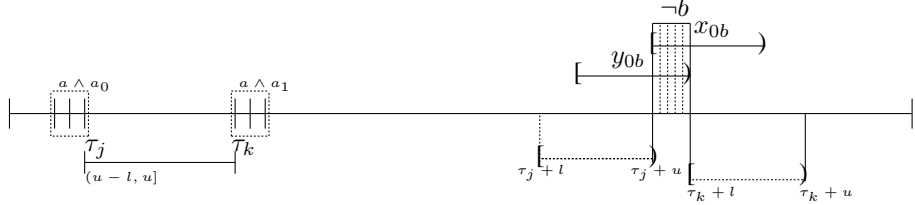
\begin{figure*}[th]
  \begin{center}
  \begin{picture}(45,18)(20,-9)
  \thicklines
   \drawline[AHnb=0,ATnb=0](-15,0)(105,0)
   \drawline[AHnb=0,ATnb=0](-15,-2)(-15,2)
   \drawline[AHnb=0,ATnb=0](105,-2)(105,2)
   
   \put(-8.5,3.5){\tiny $a \wedge a_0$}
   \put (-5,-5){$\tau_j$}
    \drawline[AHnb=0,ATnb=0](-5,-2)(-5,2)
    \drawline[AHnb=0,ATnb=0](-7,-2)(-7,2)
    \drawline[AHnb=0,ATnb=0](-9,-2)(-9,2)
   \put(15.5,3.5){\tiny $a \wedge a_1$}
   \put (14.5,-5){$\tau_k$}
    \drawline[AHnb=0,ATnb=0](15,-2)(15,2)
	\drawline[AHnb=0,ATnb=0](17,-2)(17,2)
	\drawline[AHnb=0,ATnb=0](19,-2)(19,2)
   \drawline[AHnb=0,ATnb=0](-5,-6.5)(15,-6.5)
   \drawline[AHnb=0,ATnb=0](-5,-5)(-5,-8)
    \drawline[AHnb=0,ATnb=0](15,-5)(15,-8)
 \drawline[AHnb=0,dash={0.25}0](19.5,3)(14.5,3)
\drawline[AHnb=0,dash={0.25}0](19.5,3)(19.5,-3)

\drawline[AHnb=0,dash={0.25}0](14.5,3)(14.5,-3)

\drawline[AHnb=0,dash={0.25}0](19.5,-3)(14.5,-3)

   \drawline[AHnb=0,dash={0.25}0](55.5,-6)(70.5,-6)

 \drawline[AHnb=0,dash={0.25}0](-9.5,3)(-4.5,3)
\drawline[AHnb=0,dash={0.25}0](-9.5,3)(-9.5,-3)

\drawline[AHnb=0,dash={0.25}0](-4.5,3)(-4.5,-3)

\drawline[AHnb=0,dash={0.25}0](-9.5,-3)(-4.5,-3)

   \drawline[AHnb=0,dash={0.25}0](55.5,-6)(70.5,-6)

    \put(-5,-9){\tiny{$ (u-l,u]$}}

   \put(55,-6.5){[}
   \put(52.5,-8){\tiny{$\tau_j+l$}} 
   \put(67.5,-8){\tiny{$\tau_j+u$}} 
   \put(70,-6.5){)}
   \drawline[AHnb=0,dash={0.25}0	](55.5,0)(55.5,-6.5)
   \drawline[AHnb=0,ATnb=0	](70.5,12)(70.5,-6.5)
   \drawline[AHnb=0,ATnb=0](70.5,12)(75.5,12)
   \drawline[AHnb=0,dash={0.25}0](71.5,12)(71.5,0)
   \drawline[AHnb=0,dash={0.25}0](72.5,12)(72.5,0)
   \drawline[AHnb=0,dash={0.25}0](73.5,12)(73.5,0)
   \drawline[AHnb=0,dash={0.25}0](74.5,12)(74.5,0)
   \put(75,-9){[}
   \put(90,-9){)}
    \put(75,-11){\tiny{$\tau_k+l$}}
    \put(90,-11){\tiny{$\tau_k+u$}}
    \drawline[AHnb=0,ATnb=0	](75.5,12)(75.5,-9)
    \drawline[AHnb=0,ATnb=0	](90.8,0)(90.8,-9) 
   \drawline[AHnb=0,dash={0.25}0	](75.5,-8.8)(90.5,-8.8)
    \put(71.5,13){$\neg b$}
\put (65,6.75){$y_{0b}$}
\put(60,4.5){[}
\put(74.5,4.5){)}
\put (76,10.75){$x_{0b}$}
\put(70.,8.5){[}
\put(84.5,8.5){)}

         \drawline [AHnb=0,ATnb=0	](60.5,5) (75.5,5)
         \drawline [AHnb=0,ATnb=0	](70.5,9) (85.5,9)
 
  \end{picture}
  \caption{$\tau_k-\tau_j \in (u-l,u]$. The shaded region indicates  $x_{0b} \wedge y_{0b}$. This  region is marked $\neg b$}
  \label{fig:case2}
  \end{center}
  \end{figure*}

  \begin{figure*}[th]
  \begin{center}
  \begin{picture}(46,22)(20,-16)
   \thicklines
    \drawline[AHnb=0,ATnb=0](-15,0)(105,0)
    \drawline[AHnb=0,ATnb=0](-15,-2)(-15,2)
    \drawline[AHnb=0,ATnb=0](105,-2)(105,2)
    \put(-6.5,2){\tiny $a\wedge a_0$}
    \put (-5,-4){$\tau_j$}
    
     \drawline[AHnb=0,ATnb=0](-5,2)(-5,-9)
    \put(48,2){\tiny $a\wedge a_1$}
    \put (50,-4){$\tau_k$}
     \drawline[AHnb=0,ATnb=0](50,2)(50,-9)
    \drawline[AHnb=0,ATnb=0](-5,-9)(50,-9)
    \drawline[AHnb=0,ATnb=0](-5,-5)(-5,-8)
     \drawline[AHnb=0,ATnb=0](50,-5)(50,-8)
    \drawline[AHnb=0,dash={0.25}0](25.5,-5)(40.5,-5)
     \put(5,-12){\tiny{$ > u$}}
    \put(25,-5.5){[}
    \put(25,-7.5){\tiny{$\tau_j+l$}}
    \put(40,-7.5){\tiny{$\tau_j+u$}}
    \put(40,-5.5){)}
    \drawline[AHnb=0,dash={0.25}0	](25.5,0)(25.5,-5.5)
    \drawline[AHnb=0,dash={0.25}0	](40.8,6.5)(40.8,-5.5)
    \drawline[AHnb=0,ATnb=0	](40.5,6.5)(80.5,6.5)
    \put(80,-11){[}
    \put(95,-11){)}
     \put(80,-13){\tiny{$\tau_k+l$}}
     \put(95,-13){\tiny{$\tau_k+u$}}
     \drawline[AHnb=0,ATnb=0	](80.5,6.5)(80.5,-11)
     \drawline[AHnb=0,ATnb=0	](95.8,0)(95.8,-11) 

    \drawline[AHnb=0,dash={0.25}0	](80.5,-10.8)(95.5,-10.8)
     \put(60.5,7.5){$\neg b$}
  
  \end{picture}
  \caption{$\tau_k-\tau_j > u$}
  \label{fig:case3-sp}
  \end{center}
  \end{figure*}
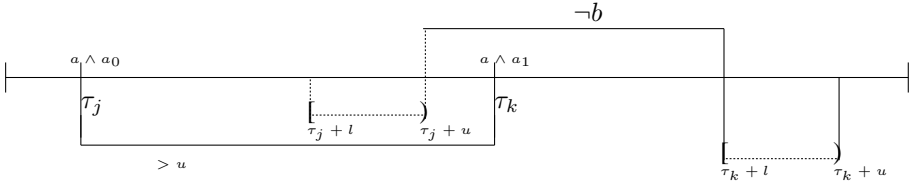

 In the following, we 
 explain the choice of these propositions, the marking scheme 
 to obtain $\rho'$, 
 and 
 formulae in $\mathsf{MTL}[\until_I,\since]$\footnote{$\since$ can be removed from $\mathsf{MTL}[\until_I,\since]$
 obtaining equisatisfiable formula in $\mtl$ modulo simple projections \cite{deepak08}, details in Appendix \ref{since-rem}}
  which enforce these markings.

 \noindent{\underline{Case 1}}:  If $\tau_k-\tau_j \le u-l$ for consecutive points $j,k$ with $a \in \sigma_j,\sigma_k$. 
  Then $[\tau_j+u, \tau_k+l)$ is the empty interval and 
  $d=\lceil \tau_k - \tau_j \rceil + (l-u) \leq 0$ 
    and hence no action need to be taken. Figure \ref{fig:case1} illustrates this case.\\
\noindent{\underline{Case 2}}:  If $\tau_k-\tau_j \in (u-l,u]$. Then the interval $[\tau_j+u, \tau_k+l)$ is non-empty, and 
$1 \leq d=\lceil \tau_k - \tau_j \rceil + (l-u) \leq l$. 
\begin{enumerate}
\item We introduce two propositions $a_0,a_1$ that marks all 
     positions $i \in dom(\rho)$ such that $a \in \sigma_i$ with a unique 
     element from $\{a_0,a_1\}$.
      The position $first_a$ is marked $a_0$;
     if consecutive $a$'s are at a distance $> u-l$, then they are marked by exactly one of $a_i$ and $a_{1-i}$
         respectively, for $i \in \{0,1\}$ such that they alternate; if consecutive $a$'s are at a distance $\leq u-l$, they are both marked  with exactly the same $a_i$, $i \in \{0,1\}$. 
                  A consecutive $a_i,a_{1-i}$ pair ``flags'' 
         attention : they play a role, in marking some interval with $\neg b$.
             The conjunction 
            of the 
            following formulae, denoted $\mathsf{MARK}_{a}$ implements these: 
\begin{enumerate}
\item 
 $\wB((a_0 \vee a_1) \leftrightarrow a) \wedge \wB(\neg a_0 \vee \neg a_1)$ 
 \item 
 $\neg a \wU (a\wedge a_0)$
\item  
$\bigwedge_{i\in\{0,1\}} \wB[F_1 \wedge F_2]$ where \\
$F_1: (a_i \wedge \Box_{[0,u-l]} \neg a) \rightarrow \Box \neg a \vee 
(\neg a \until (a \wedge a_{1-i}))$\footnote{Note that points $j,k$ with consecutive $a$'s, such that $\tau_k-\tau_j > u$ also are marked by $a_i,a_{1-i}$},\\
 $F_2: (a_i \wedge \fut_{[0,u-l]}a ) \rightarrow  \neg a \until (a \wedge a_i)$.
 \end{enumerate}   
 \item To easily identify the intervals 
 $I_{j,k}^1$ and $I_{j,k}^2$, we mark the points $j,k \in dom(\rho)$ 
 with propositions $beg_{db}$ and $end_{db}$. 
The $d$ in suffix is  $d= \lceil \tau_k - \tau_j \rceil + (l-u)$, the  $b$ in suffix is the witness 
proposition for $\past_{[l,u)}a$, while $beg$, $end$ signify the beginning and end 
of respective consecutive $a$ positions.   
To correctly get the $d$, we need to check the closest unit interval
corresponding to $\tau_k-\tau_j$ : for instance, if $\tau_k-\tau_j=(u-l) +0.4$, then 
we know $\tau_k-\tau_j \in (u-l, u-l+1]$. In this case, 
$\lceil \tau_k - \tau_j \rceil=u-l+1$, and hence, $d=1$. 
We need to do this for all the $l-1$ possibilities :  
 $\tau_k-\tau_j \in (t,t+1]$, where $t \in \{u-l,\ldots,u-1\}$. In each case, 
 the symbols marking the respective consecutive $a$'s 
 will be $beg_{t+1+l-u~b}$ and $end_{t+1+l-u~b}$, where
 $t+1=\lceil \tau_k - \tau_j \rceil$.
 
  To summarize, we introduce propositions $\{beg_{db},end_{db} \mid 1 \leq d \leq l\}$
  to mark two consecutive $a$'s that are at a distance in $(u-l,u]$. 
  The $d$ in the suffix is  the closest integer $\geq$ 
  the duration of the interval $[\tau_j+u, \tau_k+l)$.  
  This is used in the next step to mark correctly the intervals $I_{j,k}^1$ and 
  $I_{j,k}^2$, both of which have duration $d$ : 
  Identifying points $j,k$ with $beg_{db}$ and $end_{db}$, 
     $I^1_{j,k}$ is the interval $[\tau_{end_{db}}+l-d,\tau_{end_{db}}+l)$
while $I^2_{j,k}$ is the interval $[\tau_{beg_{db}}+u,\tau_{beg_{db}}+d+u)$.   
   Note that a unique value of $d$ will only satisfy formula 2(a) below: 
 that value is $d=\lceil \tau_k-\tau_j \rceil +(l-u)=t+1+l-u$.

       The following formulae implement this idea by ensuring that  $beg_{db}$ and $end_{db}$ 
 indeed correspond to consecutive points $j,k$ with $a \in \sigma_j,\sigma_k$. 
 For $t \in \{u-l, \dots, u-1\}$, and $d \in \{1, \dots, l\}$,  
  \begin{enumerate}
 \item $\wB(beg_{t+1+l-u~b} \leftrightarrow(a \wedge (\neg a \until_{(t,t+1]} a)))$.
\item  $\wB(end_{db}\leftrightarrow(a \wedge (\neg a \since~ beg_{db})))$.
 \end{enumerate}
  Let $\mathsf{MARK}_{beg,end,d}$ be the conjunction of the above formulae.
  \item The propositions $beg_{db}$ and $end_{db}$ now help us in identifying the relevant 
 points in the intersection of $I_{j,k}^1$ and $I_{j,k}^2$ as follows: 
 Recall that points $j,k$ marked with $beg_{db},end_{db}$ are also marked with one of $a_0,a_1$
 such that  $\{beg_{db},a_i\} \subseteq \sigma_j$ iff $\{end_{db},a_{1-i}\} \subseteq \sigma_k$. 
 We now identify the points in $I_{j,k}^1=[\tau_{end_{db}}+l-d,\tau_{end_{db}}+l)$ by marking them 
 with a proposition $y_{cb}$ iff $a_{1-c} \in \sigma_k$. 
  Likewise, all the points in $I_{j,k}^2=[\tau_{beg_{db}}+u,\tau_{beg_{db}}+d+u)$ are  marked with a proposition $x_{cb}$ iff $a_c \in \sigma_j$. 
  It can be observed now that points in $I^1_{j,k} \cap I^2_{j,k}$ will be marked with both $x_{cb},y_{cb}$. Such points are marked $\neg b$. Figure \ref{fig:case2} illustrates this.
 This is implemented by 
 the conjunction of the following formulae, denoted $\mathsf{MARK}_{x,y,c}$:  
\begin{enumerate}
\item $\bigwedge_{c\in\{0,1\}}\wB((beg_{db} \wedge a_c)\rightarrow \Box_{[u,u+d)} x_{cb})$
\item $\bigwedge_{c\in\{0,1\}}\wB((end_{db} \wedge a_c)\rightarrow \wB_{[l-d,l)} y_{1-c~b}))$
\end{enumerate}
\item Let $\mathsf{MARK}_{\neg b,c}$ denote  $\wB((x_{cb} \wedge y_{cb}) \rightarrow \neg b)$, $c \in \{0,1\}$.  
\end{enumerate} 
 
\noindent{\it {\underline{Case 2 Summary}}}:
 We mark consecutive 
  points $j,k$ having $a$  that are apart by a distance in $(u-l,u]$
 with $a_c,a_{1-c}$, $c \in \{0,1\}$, and with $beg_{db}, end_{db}$ respectively, where 
 $d$ is the closest integer that is $\geq \lceil \tau_k-\tau_j \rceil+l-u$.
 The bit $c \in \{0,1\}$ and the value $d$ help in marking 
 all points in $[\tau_k+l-d,\tau_k+l)$ 
 with $y_{cb}$ and all points in $[\tau_j+u, \tau_j+u+d)$ with $x_{cb}$. 
 Points marked both $x_{cb},y_{cb}$ are then marked $\neg b$.

 \noindent{\underline {Case 3}}: $\tau_k - \tau_j \in (u, \infty)$. Then again, 
 $[\tau_j+u, \tau_k+l)$ is non-empty.\footnote{If $l=0$, case 2 gives an empty interval. 
 Case 3 deals with $>u$ distance. For $a$'s which are $u$ apart, 
 we add the formula $(a \wedge \Box_{[0,u)} \neg a \wedge \fut_{[0,u]}a) \rightarrow 
 \fut_{[0,u]} \neg b$} 
   Then  
 $d=\lceil \tau_k - \tau_j \rceil + (l-u) > l$. Figure \ref{fig:case3-sp} illustrates this case.
\begin{enumerate}
\item We introduce propositions 
  $\{b^1_{\infty}, b^2_{\infty}\}$ to mark consecutive 
  $a$'s that are more than $u$ distance apart. 
  We assert that $\neg b$ holds
  in the $[0,l)$ future of $b^2_{\infty}$;
  also $\neg b$ holds at all points 
  that are at a distance $\geq u$ from $b^1_{\infty}$
 and that lie before $b^2_{\infty}$.
  We first mark such consecutive points $j,k$ with propositions $b^1_{\infty}$ and $b^2_{\infty}$. 
  Let $\mathsf{MARK}_{succ,\infty}$ be the conjunction of the following formulae: 
 \begin{enumerate}
 \item $\wB(b^1_{\infty} \leftrightarrow (a \wedge \neg a \until_{(u,\infty)}a))$
 \item $\wB(b^2_{\infty} \leftrightarrow (a \wedge \neg a \since b^1_{\infty}))$
  \end{enumerate}
\item Next we assert that points in 
 $(\tau_j+u, \tau_k]$ and in $[\tau_k, \tau_k+l)$ be marked $\neg b$.
   This is implemented by the conjunction of the following formulae, denoted      
   $\mathsf{MARK}_{\neg b,\infty}$: 
 \begin{enumerate}
 \item $\wB((b^1_{\infty} \wedge \wF_{[0,u)}b)\rightarrow (\wF_{[0,u)}(b \wedge \neg b \until b^2_{\infty}))$
  \item $\wB((b^1_{\infty} \wedge \wB_{[0,u)} \neg b)\rightarrow (\neg b \wedge \neg b \until b^2_{\infty}))$   
 \item $\wB(b^2_{\infty} \rightarrow \wB_{[0,l)}\neg b)$
  \end{enumerate}
\end{enumerate}
 
  \begin{figure*}[th]
   \begin{center}
   \begin{picture}(45,9)(20,-6)
   \thicklines
    \drawline[AHnb=0,ATnb=0](-15,0)(105,0)
    \drawline[AHnb=0,ATnb=0](-15,-2)(-15,2)
    \drawline[AHnb=0,ATnb=0](105,-2)(105,2)
    \put(-15,4){$a$}
    \put (-15.5,-4){\tiny $3.1$}
    \put(5,4){$a$}
    \put (4.5,-4){\tiny $4.8$}

     \drawline[AHnb=0,ATnb=0](-15,-2)(-15,2)
    \drawline[AHnb=0,ATnb=0](5,-2)(5,2)
    
    \put(25,4){$a$}
    \put (24.5,-4){\tiny $5.9$}
     \drawline[AHnb=0,ATnb=0](25,-2)(25,2)
    \drawline[AHnb=0,ATnb=0	](35.5,2)(35.5,-5)
\put(35.5,-6.5){\tiny$9.1$}
    \drawline[AHnb=0,ATnb=0	](45.5,2)(45.5,-3)
    \drawline[AHnb=0,ATnb=0	](55.5,2)(55.5,-3)
    \drawline[AHnb=0,ATnb=0	](65.5,2)(65.5,-8.5)
\put (64.5,-9){\tiny$11.8$} 
    \drawline[AHnb=0,ATnb=0	](75.5,2)(75.5,-3)
\put(85,-3){\tiny$12.9$}
\put (52.5,6.5){\tiny $10.8$}
\put (74.5,8.5){\tiny $11.9$}
    \drawline[AHnb=0,ATnb=0	](85.5,2)(85.5,-2)

          \drawline [AHnb=0,dash={0.25}0	](45.5,-5) (63,-5)
	\drawline (45.5,-5)(45.5,-5.5)
	\put (44.5,-6.5){\tiny $10.1$} 
          \drawline [AHnb=0,ATnb=0	](45.5,-5) (45.5,6)
          \drawline [AHnb=0,dash={0.25}0	](47.5,-5) (47.5,6)
          \drawline [AHnb=0,dash={0.25}0	](49.5,-5) (49.5,6)
          \drawline [AHnb=0,dash={0.25}0	](51.5,-5) (51.5,6)
          \drawline [AHnb=0,dash={0.25}0	](53.5,-5) (53.5,6)
          \drawline [AHnb=0,dash={0.25}0	](63,0) (63,-5)
          \put(49.5,3){$\neg b$}
          \put(69.5,3){$\neg b$}
          \drawline [AHnb=0,ATnb=0	](55.5,6) (36.5,6)
          \drawline [AHnb=0,ATnb=0	](55.5,6) (55.5,-5)
          \drawline [AHnb=0,ATnb=0	](36.5,-2) (36.5,6)
          \put(36,-4){\tiny $ $}
                    \drawline [AHnb=0,dash={0.25}0	](65.5,-7) (83,-7)
                    \put(75,-9.5){$x_b$}
                    \put(55,-7.5){$x_b$}
                    \put(45,8.5){$y_b$}
                    \put(65,10.5){$y_b$}
                    \drawline [AHnb=0,ATnb=0	](65.5,-7.5) (65.5,8.5)
                    \drawline [AHnb=0,dash={0.25}0	](83,0) (83,-7)
                    \drawline [AHnb=0,ATnb=0	](75.5,8) (57,8)
                    \drawline [AHnb=0,ATnb=0	](75.5,8.5) (75.5,-7.5)
                    \drawline [AHnb=0,ATnb=0	](75.5,8) (57,8)
                    \drawline [AHnb=0,ATnb=0	](57,-2) (57,8)
                    \drawline[AHnb=0,dash={0.25}0](58.5,8)(58.5,0)
                        \drawline[AHnb=0,dash={0.25}0](58.5,8)(58.5,0)
                        \drawline[AHnb=0,dash={0.25}0](59.5,8)(59.5,0)
                        \drawline[AHnb=0,dash={0.25}0](60.5,8)(60.5,0)
                        \drawline[AHnb=0,dash={0.25}0](61.5,8)(61.5,0)
                        \drawline[AHnb=0,dash={0.25}0](57,1)(63,1)
                        \drawline[AHnb=0,dash={0.25}0](57,2)(63,2)
                        \drawline[AHnb=0,dash={0.25}0](57,3)(63,3)
                        \drawline[AHnb=0,dash={0.25}0](57,4)(63,4)
                        \drawline[AHnb=0,dash={0.25}0](57,5)(63,5)
                        \drawline[AHnb=0,dash={0.25}0](57,6)(63,6)
                        \drawline[AHnb=0,dash={0.25}0](57,7)(63,7)                        
                    \put (56.5,-4){\tiny $10.9$}
  					\drawline [AHnb=0,ATnb=0	](63,-2) (63,8)
  					\put (62,-4){\tiny $11.1$}
  					\drawline [AHnb=0,dash={0.25}0	](73.5,8) (73.5,-7)
  					\drawline [AHnb=0,dash={0.25}0	](71.5,8) (71.5,-7)
  					\drawline [AHnb=0,dash={0.25}0	](67.5,8) (67.5,-7)
  					\drawline [AHnb=0,dash={0.25}0	](69.5,8) (69.5,-7)
  					
   \end{picture}
   \caption{Erroneous intersection: $[l,u)=[6,7), j=3.1,k=4.8,h=5.9$.}
   \label{fig:err}
   \end{center}
   \end{figure*}
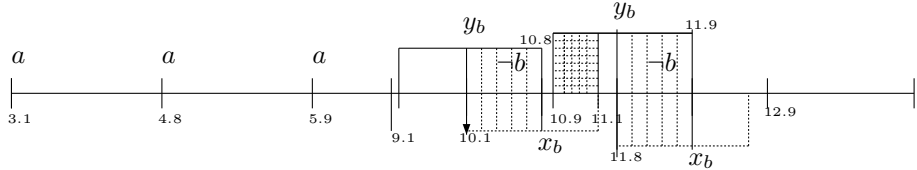

\noindent{\it {\underline{Purpose of Extra Propositions}}}: 
The extra propositions introduced are $X=\{a_0,a_1,x_{0b},x_{1b},y_{0b},y_{1b},b^1_{\infty},b^2_{\infty}\} \cup \{beg_{db},end_{db} \mid 1 \leq d \leq l\}$.
\begin{enumerate}
\item   
First of all, $a_0,a_1$ are chosen 
to enable marking points in $I_{j,k}^1,I_{j,k}^2$ with $x_{0b}, y_{0b}$ or $x_{1b},y_{1b}$, depending on 
whether the point $j$ was marked $a_0$ or $a_1$.  
Consider three consecutive points $j,k,h$ where $a$ holds, with $\tau_k-\tau_j, \tau_h-\tau_k \in [u-l,u]$. 
 Clearly, we are looking at points in $I^1_{j,k},I^2_{j,k}$ and 
$I^1_{k,h},I^2_{k,h}$. If we just had $x_b,y_b$ to mark these intervals,
then we get points in $I_{j,k}^1,I_{k,h}^1$ marked with $y_b$, and 
points in $I_{j,k}^2,I_{k,h}^2$ marked $x_b$. There is a possibility as illustrated by the example below, that 
points marked $x_b$ in $I_{j,k}^2$ intersect with points marked $y_b$ in $I_{h,k}^1$. 
By our technique of marking points with both $x_b,y_b$ as $\neg b$, 
this could give rise to inconsistency. For example, consider $[l,u)=[6,7)$, 
$\tau_j=3.1, \tau_k=4.8, \tau_h=5.9$. Clearly,  $\tau_k-\tau_j, \tau_h-\tau_k \in (1,7]$. 
For $d_1=\lceil \tau_k-\tau_j \rceil+(l-u)=1$, 
the over approximations of the interval $[\tau_j+u, \tau_k+l)=[10.1,10.8)$ are 
$[\tau_k+l-d_1,\tau_k+l)=[9.8,10.8)=I_{j,k}^1$, and  $[\tau_j+u, \tau_j+u+d_1)= 
[10.1,11.1)=I_{j,k}^2$. By construction, points in $[9.8,10.8)=I_{j,k}^1$
are marked $y_b$, points in 
$[10.1,11.1)=I_{j,k}^2$ are marked $x_b$. Clearly, 
points in $[10.1,10.8)$ have both $x_b,y_b$ marked. Again, 
the over approximations for the interval $[\tau_k+u, \tau_h+l)=[11.8,11.9)$ are 
$I^1_{k,h}=[\tau_h+l-d_2,\tau_h+l)=[10.9,11.9)$ and 
$I^2_{k,h}=[\tau_k+u, \tau_k+u+d_2)=[11.8,12.8)$ for $d_2=\lceil \tau_h-\tau_k \rceil+(l-u)=1$.
As per the marking scheme, we would mark 
$[10.9,11.9)$ with $y_b$ and $[11.8,12.8)$ with $x_b$. While this gives us points in 
$[11.8,11.9)$ marked with both $x_b,y_b$, this also gives us points in 
$[10.9,11.1)$ marked with both $x_b,y_b$. We would then mark $\neg b$ for all points in 
$[10.9,11.1)$, giving rise to inconsistency, as 
$[10.9,11.1)$ is marked $b$ by $\mathsf{MARK}_b$.
  However, had we marked 
$[9.8,10.8)=I_{j,k}^1$ with $y_{0b}$, 
$[10.1,11.1)=I_{j,k}^2$ with $x_{0b}$,
 $[10.9,11.9)=I^1_{k,h}$ with $y_{1b}$ and 
 $[11.8,12.8)=I^2_{k,h}$ with $x_{1b}$, 
 the erroneous interval 
$[10.9,11.1)$ is marked with $x_{1b}$ and $y_{0b}$. Thus, using two bits 0,1, we can rule out marking points having $x_{cb},y_{1-c~b}$ 
with $\neg b$.
 The situation of erroneous marking is illustrated in Figure \ref{fig:err}.  

\item Note that it suffices to have only two bits 0,1 and hence propositions 
$x_{0b},y_{0b},x_{1b},y_{1b}$. We do not need  $x_{2b},y_{2b}$. 
Consider any two pairs of points $j,k$ and $h,m$ such that  
$j<k<h<m$, and 
$j$ and $k, h$ and $m$ and $k$ and $h$ are all consecutive with respect to  $a$. i.e, 
there are no points between $j,k$ or $k,h$ or $h,m$ that are marked $a$. Let $\tau_k-\tau_j, \tau_m-\tau_h > u-l$.  
  Assume further that $a_0 \in \sigma_j$ as per our marking scheme.
There are 2 cases :\\
 \underline{Case 1:} $\tau_h - \tau_k \leq u-l$. In this case $k,h$ will be marked as $a_1$ and $j, m$ will be marked as $a_0$. Note that the regions $I^1_{k,h}$ and $I^2_{k,h}$ are empty. 
No  erroneous intersection can happen :  $I^2_{j,k}$ is marked $x_{0b}$ while $I^1_{h,m}$ is marked $y_{1b}$.\\ 
\underline{Case 2:} $\tau_h - \tau_k > u-l$. In this case $j,h$ will be marked as $a_0$ and $k,m$ will be marked as $a_1$. Let $d_1= \lceil \tau_k - \tau_j \rceil +(l-u)$,  
$d_2= \lceil \tau_h - \tau_k \rceil +(l-u)$, 
and $d_3=\lceil \tau_m - \tau_h \rceil +(l-u)$.
\begin{itemize}
 \item Intervals $I^1_{h,m}=[\tau_m+l-d_3, \tau_m+l)$ (marked $y_{0b}$) and $I^2_{j,k}=[\tau_j+u, \tau_j+u+d_1)$ (marked $x_{0b}$) are disjoint: we have 
  $\tau_j+u+d_1 < \tau_k+u < \tau_h+l < \tau_m+l-d_3$. 
 \item Intervals $I^2_{h,m}=[\tau_h+u, \tau_h+u+d_3)$ (marked $x_{0b}$) and $I^1_{j,k}=[\tau_k+l-d_1, \tau_k+l)$ (marked $y_{0b}$) are disjoint:  
 $\tau_h+u \geq \tau_k+(u-l)+l \geq \tau_k+l$.
  \end{itemize}
  This shows that for consecutive pairs of $a$ points $j,k$ and $h,m$ 
  where $\tau_k-\tau_j, \tau_h-\tau_k, \tau_m-\tau_h > u-l$, 
  intervals $I^1_{h,m}$ and $I^2_{j,k}$ (respectively $I^2_{h,m}$ and $I^1_{j,k}$) 
  which are marked $x_{ib},y_{ib}$ will never intersect.  
 \item  The formulae $\mathsf{MARK}_{x,y,c}$ only say where $x_{cb},y_{cb}$ are marked; they do not disallow occurrences of 
 $x_{1-c~b}, y_{1-c~b}$ at those points. 
 We claim that the free occurrences of $x_{1-c~b}, y_{1-c~b}$
 does not create problems. Note that points marked $b$ 
 by $\mathsf{MARK}_b$ and points marked $\neg b$ by 
 $\mathsf{MARK}_{\neg b,c}$, $c \in \{0,1\}$ are disjoint and 
 span $dom(\rho)$. 
Let $p,q$ be consecutive points marked $a$.  
 For every point $p$ with $a \in \sigma_p$, 
 $[\tau_p+l, \tau_p+u)$ is marked $b$ by 
 $\mathsf{MARK}_b$, and $[\tau_p+u, \tau_q+l)$ is marked $\neg b$
 by $\mathsf{MARK}_{\neg b,c}$. In case $p=last_a$, 
 then $[\tau_{last_a}+u, \infty)$ is marked $\neg b$ by $\mathsf{MARK}_{last}$. 
 Thus, inducting on the $a$'s in $\rho$, 
 the union of the points marked $b$ by $\mathsf{MARK}_b$ (call it $B$) and points marked $\neg b$ by 
 $\mathsf{MARK}_{\neg b,c}$ (call it $\bar{B}$) is $dom(\rho)$. 
  
 Thus, there are 2 possibilties for 
 the free occurrence of $x_{1-c~b}, y_{1-c~b}$: 
 \begin{itemize}
  \item $x_{1-c~b}, y_{1-c~b}$ occur freely 
 in $\bar{B}$. The freely occurring $x_{1-c~b}, y_{1-c~b}$ 
results in marking of $\neg b$ by formula $\mathsf{MARK}_{\neg b,~1-c}$; 
 this does not generate inconsistencies, since 
 they are already marked $\neg b$ by $\mathsf{MARK}_{\neg b,c}$. 
 \item   $x_{1-c~b}, y_{1-c~b}$ occur freely 
 in $B$. The freely occurring $x_{1-c~b}, y_{1-c~b}$ 
results in marking of $\neg b$ by formula $\mathsf{MARK}_{\neg b,~1-c}$;
but these points are already marked $b$ by $\mathsf{MARK}_b$. Thus,
at any point $p$ in $B$, 
$\rho,p \nvDash x_{1-c~b} \wedge y_{1-c~b}$, for $c \in \{0,1\}$.
Thus, at all points of $B$, the appearance of $x_{1-c~b}$ and  $y_{1-c~b}$ (if that is the case),  
is mutually exclusive. 
    \end{itemize}
 Thus, free markings of $x_{cb},y_{cb}$ if at all, they appear, do not come in the way 
 of correctly marking points with $b, \neg b$.
  \end{enumerate}
 The formula $\mathsf{MARK}$ in $\mathsf{MTL}[\until_I,\since]$ obtained as a conjunction of 
  $\mathsf{MARK}_b, \mathsf{MARK}_{first}, \mathsf{MARK}_{last},
 \mathsf{MARK}_a,\mathsf{MARK}_{x,y,c}$,
 $\mathsf{MARK}_{beg,end,d},
 \mathsf{MARK}_{succ,\infty},\mathsf{MARK}_{\neg b,c}$ and 
 $\mathsf{MARK}_{\neg b, \infty}$.
   is such that $\rho \models ENF_{\Sigma}(T)$ iff $\rho' \models \mathsf{MARK}$, where $\rho' \setminus X=\rho$.
   A proof of correctness can be found in Appendix \ref{correct}. 
 Using the technique in \cite{deepak08}, we can eliminate the $\since$ modality from 
  $\mathsf{MARK}$ obtaining an equisatisfiable formula $\psi$ in $\mtl$. 
     \end{proof}
 Note that our   reduction  does not  introduce any new punctual modality. 
Hence, we also have  the equivalence modulo simple projection of $\mathsf{MITL}[\until_I,\since_I]$ and 
 $\mathsf{MITL}[\until_I]$.  
\begin{theorem}
\label{theo:main}
For every $\varphi \in \mtlsns$ over $\Sigma$, we can construct $\psi_{fut}$ in $\mtl$ 
over $\Delta = \Sigma \cup \Sigma'$ such that $\varphi = \exists (\Sigma'-\Sigma). \psi_{fut}$.
\end{theorem}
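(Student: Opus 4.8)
The plan is to assemble the reduction from the ingredients developed in Section~\ref{sp-main}, following the eight-step outline given there. First I would use the fact, due to \cite{deepak08}, that a non-punctual past modality $\since_{np}$ can be rewritten using the unrestricted $\since$ together with a non-punctual $\past_{np}$ operator; applying this rewriting to every $\since_{np}$ occurring in $\varphi$ yields an equivalent formula in $\mathsf{MTL}[\until_I, \since, \past_{np}]$ over the same alphabet $\Sigma$. Flattening this formula with a set $W$ of fresh witnesses produces $\varphi_{flat} = \psi \wedge \bigwedge_{i=1}^{k} T_i$, where $\psi \in \mathsf{MTL}[\until_I,\since]$ and each $T_i$ is a temporal definition $\wB(b_i \leftrightarrow \past_{\langle l_i,u_i\rangle} a_i)$; as recorded in Section~\ref{sec-flat}, $\varphi = \exists W.\, ENF_{\Sigma}(\varphi_{flat})$.

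Next I would eliminate the past definitions one at a time. For a $T_i$ whose interval is unbounded, Lemma~\ref{remove-pastinf2} supplies an equivalent $\mtl$-formula over $\Sigma \cup W$; for a bounded interval $\langle l_i,u_i\rangle$, Lemma~\ref{past-b2} supplies a formula $\varphi_i$ over $\Sigma \cup W \cup X_i$ with $ENF_{\Sigma}(T_i) = \exists X_i.\,\varphi_i$, where each $X_i$ is a fresh batch of propositions. Choosing the $X_i$ pairwise disjoint and disjoint from $W$, I can invoke the Boolean Closure Lemma (Lemma~\ref{lem:boolclosedequis}) repeatedly to conjoin these equivalences together with $\psi$, obtaining $ENF_{\Sigma}(\varphi_{flat}) = \exists X.\,(\psi \wedge \bigwedge_{i=1}^{k} \varphi_i)$ for $X = \bigcup_i X_i$. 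Composing the two layers of simple projection — first over $X$, then over $W$ — gives $\varphi = \exists (W \cup X).\,(\psi \wedge \bigwedge_{i=1}^{k} \varphi_i)$.

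At this point the formula $\psi \wedge \bigwedge_i \varphi_i$ still contains $\since$, both inside $\psi$ and inside the $\mathsf{MARK}$-subformulae produced by Lemma~\ref{past-b2}. The final step is to remove $\since$ via the equisatisfiable reduction of $\mathsf{MTL}[\until_I,\since]$ to $\mtl$ modulo simple projections of \cite{deepak08} (see also \cite{formats11}); since that reduction introduces no punctual modality and is itself a simple projection, one more application of Lemma~\ref{lem:boolclosedequis} folds its fresh propositions into the existential prefix, yielding a $\psi_{fut} \in \mtl$ over some $\Delta = \Sigma \cup \Sigma'$ with $\varphi = \exists (\Sigma'-\Sigma).\,\psi_{fut}$.

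The main obstacle is bookkeeping rather than a new idea: one must check that all the fresh proposition batches — the witnesses $W$, the per-definition sets $X_i$, and the auxiliary propositions introduced by the $\since$-elimination — can be kept mutually disjoint so that the hypotheses of the Boolean Closure Lemma hold at every composition step, and that ``equisatisfiable modulo simple projections'' composes transitively along this chain, which is precisely what Lemma~\ref{lem:boolclosedequis} together with the definition of $\exists X.\,$ delivers. A secondary point is that the $\since$-elimination of \cite{deepak08}, applied to a formula already living over an extended alphabet, still behaves as a simple projection with respect to the original $\Sigma$; this is fine because it only rewrites temporal subformulae and adds witness marks, never deleting or merging action points.
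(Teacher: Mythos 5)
Your proposal is correct and follows essentially the same route as the paper: decompose $\since_{np}$ into $\since$ plus $\past_{np}$, run the flattening/ENF pipeline of Section~\ref{sp-main} with Lemmas~\ref{remove-pastinf2} and~\ref{past-b2} glued together by the Boolean Closure Lemma, and finally eliminate $\since$ via the punctuality-free simple-projection reduction of \cite{deepak08}, \cite{formats11}. The bookkeeping points you flag (pairwise-disjoint fresh alphabets and transitivity of $\exists X$-equisatisfiability) are exactly the hypotheses the paper relies on implicitly.
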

\begin{proof}
Follows from the fact that  $\since_{np}$ can be expressed using $\since$ and 
$\past_{np}$ \cite{deepak08}
and elimination of $\since$ modulo simple projections  \cite{deepak08}, \cite{formats11}.
\end{proof}

\subsection{Simple Versus Oversampling Projections: Formulae Size}
\label{compl}
Consider a formula $\varphi \in \mathsf{MTL}[\until_I, \past_{np}]$. 
Assume that the number of past modalities in $\varphi$ is $n$, of which there are  
$n_b$  bounded past modalities and $n_u$ unbounded past modalitties. 
i.e, $n=n_b+n_u$. 
  Flattening $\varphi$ 
  results in a linear increase in the size of $\varphi$. 
  Converting $\varphi_{flat}$ to  $ENF$ gives a constant 
size increase. 
Elimination of unbounded past  
(Lemma \ref{remove-pastinf2})  
also results in a constant increase in size.  
 During elimination of bounded past modalities $\past_{[l,u)}$ (Lemma \ref{past-b2}), 
  we add $l-1$ new formulae resulting in $\mathcal{O}(l)$ extra modalities. 
   Thus, the number of extra modalities introduced after elimination 
   of all the $n_b$ temporal definitions corresponding to bounded past modalities is $\leq n l_{max}$, where 
   $l_{max}$ is the maximal lower bound of all bounded past modalities in $\varphi$.
   Hence, the formula obtained by simple projections, $\psi$ has in the worst case, an exponential increase in  size over $\varphi$. 
 In the case of oversampled projections,   
   it can be seen that both bounded as well as unbounded past modalities 
   contribute to a linear increase in the size of the resultant formulae. 
 In simple projections (Lemma \ref{past-b2}), marking $\neg b$ correctly in $[\tau_j+u, \tau_k+l)$  
 depended on the distance $\tau_k-\tau_j$, resulting in $l-1$ formulae; 
 in the case of oversampling projections (Lemma \ref{past-b1}), this is handled indirectly
   by the introduction of extra integral points  between $j$ and $k$. However, 
the formulae needed to introduce these extra points correctly 
have a constant size.  A more detailed complexity analysis can be found in  Appendix \ref{compl}.

\subsection{Expressiveness}
We wind up this section with a brief discussion 
about the expressive powers of logics 
$\mtlsns$ and $\mtluns$. The following lemma highlights 
that even unary modalities $\fut_I ,\past_I$ with singular intervals 
are more expressive than $\until_{np}, \since_{np}$; likewise, 
non-singular intervals are more expressive than intervals 
of the form $[0, \infty)$.

\begin{lemma}
\label{game:proof}
(i) $\mtlfutpw \nsubseteq \mtluns$, 
 (ii) $\mtlfutp \nsubseteq \mtlsns$, and (iii) $\mitlfp \nsubseteq \mathsf{MTL}[\until_I,\since]$.
\end{lemma}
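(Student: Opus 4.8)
The plan is to prove each non-inclusion by exhibiting a witness formula in the smaller logic together with a family of pairs of timed words that the witness formula distinguishes, but that no formula of the larger logic can distinguish. The standard tool for the second part is an Ehrenfeucht--Fra\"iss\'e (EF) style argument: for logics like $\mtluns$, $\mtlsns$, and $\mathsf{MTL}[\until_I,\since]$, whose only metric constraints are either non-punctual or the trivial $[0,\infty)$ constraint, one builds, for each formula size / modal depth $k$, two timed words $\rho_k$ and $\rho'_k$ that agree on all formulas of depth $\le k$ in the relevant fragment but differ on the chosen witness. I would state an EF game for the appropriate fragment (the non-punctual until/since game, with a bounded number of rounds and moves constrained so that a metric predicate $\until_I$ can only be ``tested'' up to non-punctual resolution), and then design the words so that Duplicator has a winning strategy.

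\emph{For (i)}, the witness is $\fut_{[1,1]} a \in \mtlfutpw$. Pick $\rho_k$ to have an $a$ exactly $1$ time unit after the origin and $\rho'_k$ to have an $a$ slightly off, say at time $1+\varepsilon$, with all other action points placed identically and densely enough that a non-punctual modality $\fut_{np}$ or $\until_{np}$, $\since_{np}$, cannot separate ``at distance exactly $1$'' from ``at distance $1+\varepsilon$''. The key point is that non-punctual intervals have the property that membership is preserved under sufficiently small perturbations of a single timestamp, so no $\mtluns$ formula can pin down the punctual distance; this is made precise by the EF game, where Spoiler's metric choices can always be matched by Duplicator up to $\varepsilon$-closeness. \emph{For (ii)}, the same witness $\fut_{[1,1]}a$ (or a past analogue) works, but now the larger logic also has punctual \emph{until} ($\until_I$) available; since the stated claim is $\mtlfutp \nsubseteq \mtlsns$ and $\mtlsns = \mathsf{MTL}[\until_I,\since_{np}]$ has non-punctual past only, the separating pair must exploit a punctuality that can only be observed ``in the past direction'' --- i.e., reflect the (i)-construction in time, using $\past_{[1,1]}a$ as the witness and making the two words agree on all future-punctual/past-non-punctual behaviour. \emph{For (iii)}, with witness in $\mitlfp$ (genuinely non-punctual but with \emph{both} future and past), against $\mathsf{MTL}[\until_I,\since]$ which has only $[0,\infty)$-constrained since; here the separation is between a metric \emph{past} requirement (a non-punctual bounded past eventuality, e.g. $\past_{(0,1)}a$) and what can be expressed with $\until_I$ plus unconstrained $\since$ --- one leans on the known fact (used throughout the paper via \cite{deepak08,formats11}) that unconstrained $\since$ adds no metric power in the past, so a bounded metric past constraint is not expressible.

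\emph{The main obstacle} is the EF-game / indistinguishability argument for each pair: one must verify that Duplicator can survive $k$ rounds, which requires choosing $\rho_k,\rho'_k$ with enough ``padding'' (many action points at carefully spaced rational times) so that every metric move Spoiler makes lands, on both sides, in a region where the available intervals cannot tell the two words apart, and so that the combinatorial (non-metric) structure is literally isomorphic. Getting the bookkeeping right --- how the padding depends on $k$, $\varepsilon$, and the maximal constants in a size-$k$ formula of the larger logic --- is the delicate part; the choice of witness formulae and the high-level ``punctual vs.\ non-punctual, future vs.\ past'' intuition is routine. I would carry this out by first fixing the game formulation (citing a standard reference for timed EF games if available), then doing case (i) in full detail, and finally noting that (ii) and (iii) follow by the temporal-reflection symmetry (used elsewhere in the paper, cf.\ \cite{formats11}) together with the $\since$-elimination facts, so that only minor modifications of the case-(i) construction are needed.
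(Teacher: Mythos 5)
Your high-level strategy (exhibit a witness formula plus an EF-game indistinguishability argument over a padded pair of words) is the same as the paper's, but the witness formulae you propose do not work, and this is a genuine gap rather than a detail. For (i) you propose $\fut_{[1,1]}a$ (anchored at the initial position). But $\mtluns=\mathsf{MTL}[\until_{np},\since_I]$ retains \emph{punctual past}, and the initial position is definable as $\neg\past_{(0,\infty)}\top$; hence $\fut(a\wedge\past_{[1,1]}(\neg\past_{(0,\infty)}\top))$ is a formula of $\mtluns$ equivalent to your witness at position $1$. Your accompanying justification ("no non-punctual modality $\fut_{np},\until_{np},\since_{np}$ can separate...") silently replaces $\since_I$ by $\since_{np}$, i.e.\ treats the target logic as $\mathsf{MITL}$, which misses the entire point of the lemma: the target logics are only \emph{partially} non-punctual. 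The same reduction kills (ii) and (iii): a bare $\past_I a$ prefixed by an eventuality is expressible by the other-direction modality because the relation $\tau_q-\tau_p\in I$ is symmetric in which endpoint is quantified first, so $\fut(\past_{[1,1]}a)\equiv\wF(a\wedge\fut_{[1,1]}\top)$ and $\fut(\past_{(0,1)}a)\equiv\wF(a\wedge\fut_{(0,1)}\top)$, both already in $\mathsf{MTL}[\until_I]$.

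This is exactly why the paper's witnesses have the form $\fut_{(0,1)}\{a\wedge\neg\fut_{[1,1]}(a\vee b)\}$, $\fut\{b\wedge\neg\past_{[1,1]}(a\vee b)\}$ and $\fut_{(1,2)}[a\wedge\neg\past_{(1,2)}a]$: the metric constraint is \emph{negated} and imposed at a position that is itself only metrically (not uniquely) determined, so it cannot be re-anchored at the origin or flipped to the other temporal direction. Any correct proof must use witnesses of this shape, and the word pairs must then be built with many near-coincident action points (the $\delta$/$\kappa$ perturbation in the paper's $W_a,W_b,W'_b$) so that every single punctual move of Spoiler in the permitted direction has a matching reply, while the \emph{universally quantified} punctual requirement hidden in the negation still distinguishes the two words. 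Your plan as written would stall at step one, since Spoiler never needs to play the game: the proposed witnesses are simply definable in the larger logics.
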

\begin{proof}
The formula $\fut_{(0,1)}\{a \wedge \neg \fut_{[1,1]}(a \vee b)\}$
in $\mtlfutpw$ has no equivalent formula in $\mtluns$. 
Similarly, the formula  $\fut\{b \wedge  \neg \past_{[1,1]}(a \vee b)\}$
in $\mtlfutp$ has no equivalent formula in $\mtlsns$. 
The formula $\fut_{(1,2)}[a \wedge \neg \past_{(1,2)}a] \in \mitlfp$ 
has no equivalent formula in $\mathsf{MTL}[\until_I,\since]$. Details in Appendix \ref{games}. 
\end{proof}

\section{Discussion}
In this paper, we have proposed two notions of equivalence between $\mathsf{MTL}$ formulae (with different sets of propositions), which both preserve satisfiability. The notion $\phi = \exists X. \psi$, denoting {\em equisatisfiability modulo simple projection} denotes that a timed word satisfying $\phi$ can be extended to a timed word with additional propositions $X$ which satisfies $\psi$, and a timed word satisfying $\psi$ can be projected to a timed word satisfying $\phi$. In both cases the set of time stamps of the letters remains identical.
A more elaborate notion, $\phi = \exists \downarrow X . \psi$, denoting {\em equisatisfiability modulo oversampling projection}, is similar but the models of $\psi$ may have additional time points. Thus, during temporal projection we allow oversampling of the original behaviour by adding new time points. Both forms of temporal projections are useful. They often allow formulae of a more complex logic to be effectively reduced in equi-satisfiable manner to formulae of a much simpler logic. This often provides a convenient technique for proving satisfiability.
As a significant use of this technique of temporal projections, in the paper, we have shown the decidability of $\mtlsns$ over finite strictly monotonic timed words. This logic is more expressive than the previously known decidable fragments of $\mathsf{MTL}$ as well as $\mathsf{MITL}$ but less expressive than $\mtlfull$. A symmetric proof would allow showing that $\mtluns$ is also decidable. Our result can also be adapted to weakly monotonic finite timed words (see Appendix \ref{weak}). Thus, we have extended the boundary of known decidable fragments of logic $\mathsf{MTL}$ over timed words. We note that the proof techniques used for showing decidability of $\mathsf{MTL}$ as well as $\mathsf{MITL}$, 
do not seem to generalize easily to the logic $\mtlsns$ considered here.
In proving decidablity of $\mtlsns$, we have given two different proofs. In the first proof, we reduced $\mtlsns$ to $\mathsf{MTL}[\until_I]$ using the notion of oversampled temporal projections. This encoding is relatively simple and results only in linear blowup in formula size. We also gave an alternative reduction using only simple temporal projections, but the reduction turns out to be considerably more complex, and leads to an exponential blow up in formula size.

The technique of temporal projections has been widely used for continuous time $\mathsf{MTL}$. For example, Hirshfeld and Rabinovich \cite{fundinfo04} used it to eliminate non-singular future operator $\fut_{[0,1)}$ in terms of $\past_{[0,1)}$, $\until$ and $\since$. Subsequently, D'souza {\em et al} \cite{deepak08} as well as Kini {\em et al} \cite{formats11} used the technique to remove past operator $\since_I$ from $\gmtlfull$. Their reduction does not carry over to logic $\mtlsns$ over pointwise time which is expressively weak and allows insertion errors.
In this paper, we have extended the technique of temporal projections to pointwise time (timed words). One novel aspect of our formulation is that during temporal projection we allow oversampling of the original behaviour by adding new time points. We have demonstrated that the ability of adding such additional points can considerably simplify the reductions. The expressive power of (the two forms of) temporal projections is an interesting topic of future work.

\newpage
\appendix

\subsection{Proof of Lemma \ref{lem:boolclosedequis}}
\label{proofs-1}
We first define the composition of two simple extensions. 

\noindent{\it \underline{Composition of Two Simple Extensions}}:
Consider $\Sigma,X_1,X_2$ such that $\Sigma\cap X_1 =\emptyset$ and $\Sigma \cap X_2 =\emptyset$ . Let $\Sigma_1 = \Sigma \cup X_1$, $\Sigma_2 = \Sigma \cup X_2$
and $X=X_1\cup X_2$. 

Let $\zeta'_1=(\sigma^1,\tau^1)$ be a $(\Sigma, X_1)$-simple extension  and 
let $\zeta'_2=(\sigma^2,\tau^2)$  be a $(\Sigma,X_2)$-simple extension, such that \\ 
$\zeta'_1 \setminus X_1=\zeta'_2 \setminus X_2$. 
Then the composition of $\zeta'_1$ and $\zeta'_2$ 
denoted $\zeta'_1\oplus \zeta'_2$,  
is a  $(\Sigma,X)$-simple extension $\zeta'=(\sigma', \tau')$ 
such that $\sigma'_i=\sigma^1_i \cup \sigma^2_i$ and 
$\tau_i=\tau^1_i=\tau^2_i$.
Note that $(\zeta'_1 \oplus \zeta'_2) \setminus X=
\zeta'_1 \setminus X_1=\zeta'_2 \setminus X_2$, and 
 $\zeta' \setminus X_2 = \zeta'_1$ and $\zeta' \setminus X_1 = \zeta'_2$. 
Consider the following example:
\begin{itemize}
\item 
Let $\Sigma=\{a,b\}, X_1=\{c\}, X_2=\{d\}$,  
\item $\zeta'_1=(\{a\},0.3)(\{b,c\},0.8)(\{b\},1.1)$, and    
\item $\zeta'_2= (\{a,d\},0.3)(\{b,d\},0.8)(\{b\},1.1)$. Then 
\item 
$\zeta'_1 \setminus X_1=\zeta'_2 \setminus X_2=
(\{a\},0.3)(\{b\},0.8)(\{b\},1.1)$,
\item   
$\zeta'_1 \oplus \zeta'_2=(\{a,d\},0.3)(\{b,c,d\},0.8)(\{b\},1.1)$,
\item $(\zeta'_1\oplus \zeta'_2) \setminus X_2=
(\{a\},0.3)(\{b,c\},0.8)(\{b\},1.1)=\zeta'_1$,
\item  
$(\zeta'_1\oplus \zeta'_2) \setminus X_1=
(\{a,d\},0.3)(\{b,d\},0.8)(\{b\},1.1)=\zeta'_2$. 
\end{itemize}
We use the following easy lemma in the proof:
\begin{lemma}
\label{obs-simple}
Consider $\Sigma,X_1,X_2$ such that $\Sigma\cap X_1 =\emptyset$ and $\Sigma \cap X_2 =\emptyset$ . Let $\Sigma_1 = \Sigma \cup X_1$, $\Sigma_2 = \Sigma \cup X_2$
and $X=X_1\cup X_2$. 
Then $\zeta' = (\zeta' \setminus X_2) \oplus (\zeta' \setminus X_1)$. 
\end{lemma}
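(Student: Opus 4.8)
The statement is a direct unfolding of the definitions of simple projection and of the composition $\oplus$, carried out pointwise; there is nothing delicate. I would first make explicit the standing assumptions under which this auxiliary lemma is applied (inside the proof of Lemma~\ref{lem:boolclosedequis}): namely that $\zeta'$ is a $(\Sigma,X)$-simple extension and that $X_1\cap X_2=\emptyset$. Then the argument has two parts: checking that the right-hand side is even well defined (i.e. that the two projections are composable), and checking that the composition reproduces $\zeta'$ label-by-label. Note first that $dom(\zeta'\setminus X_2)=dom(\zeta'\setminus X_1)=dom(\zeta')$, and that $\oplus$ preserves domains, so all three timed words in the statement range over the same index set; it therefore suffices to compare the $i$-th label and $i$-th timestamp for each $i$.

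\textbf{Composability.} Write $\zeta'=(\sigma',\tau')$, so that each $\sigma'_i\subseteq\Sigma\cup X$ and $\sigma'_i\cap\Sigma\neq\emptyset$. Erasing the symbols of $X_2$ from every label keeps each label meeting $\Sigma$, so $\zeta'\setminus X_2$ is a $(\Sigma,X_1)$-simple extension; symmetrically $\zeta'\setminus X_1$ is a $(\Sigma,X_2)$-simple extension. Moreover $(\zeta'\setminus X_2)\setminus X_1$ and $(\zeta'\setminus X_1)\setminus X_2$ are both obtained from $\zeta'$ by erasing $X_1\cup X_2=X$ from every label, hence both equal $\zeta'\setminus X$; in particular they agree, so by the definition of composition $(\zeta'\setminus X_2)\oplus(\zeta'\setminus X_1)$ is defined. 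By that same definition its $i$-th label is $(\sigma'_i\setminus X_2)\cup(\sigma'_i\setminus X_1)$ and its $i$-th timestamp is the common value $\tau'_i$.

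\textbf{Reconstructing $\zeta'$ and where the hypothesis bites.} It remains to observe that $(\sigma'_i\setminus X_2)\cup(\sigma'_i\setminus X_1)=\sigma'_i\setminus(X_1\cap X_2)$, which, since $X_1\cap X_2=\emptyset$, equals $\sigma'_i$. Thus the composition has exactly the labels and timestamps of $\zeta'$, giving $\zeta'=(\zeta'\setminus X_2)\oplus(\zeta'\setminus X_1)$. The only step that relies on a hypothesis is this last set identity: a symbol of $\sigma'_i$ lying in $X_1\cap X_2$ would be erased by \emph{both} projections and hence lost from the union, so disjointness of $X_1$ and $X_2$ is genuinely used. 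Apart from this observation the proof is pure bookkeeping, and I do not expect any real obstacle.
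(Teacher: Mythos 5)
Your proof is correct; the paper states this as an ``easy lemma'' and gives no proof at all, and your definitional unfolding --- composability via $(\zeta'\setminus X_2)\setminus X_1=(\zeta'\setminus X_1)\setminus X_2=\zeta'\setminus X$, followed by the pointwise identity $(\sigma'_i\setminus X_2)\cup(\sigma'_i\setminus X_1)=\sigma'_i\setminus(X_1\cap X_2)=\sigma'_i$ --- is exactly the intended argument. You are also right to surface the hypotheses the statement leaves implicit (that $\zeta'$ is a $(\Sigma,X)$-simple extension and that $X_1\cap X_2=\emptyset$); the latter is genuinely necessary, since a symbol lying in $X_1\cap X_2$ would be erased by both projections and lost from the composition.
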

\noindent \underline{Proof of Lemma \ref{lem:boolclosedequis}}:
\begin{proof}
Assume $\varphi_1 = \exists X_1. \psi_1$, $\varphi_2 = \exists  X_2. \psi_2$. 
\begin{itemize}
\item[(a)]
Then, for $i=1,2$, 
and any timed word $\rho_i$ over $\Sigma$, 
such that $\rho_i \models \varphi_i$, 
we have a 
$(\Sigma, X_i)$-simple extension $\rho'_i$ such that $\rho'_i \models \psi_i$ and 
$\rho'_i \setminus X_i=\rho_i$.
\item[(b)]  For 
any timed word $\rho'_i$, $\rho'_i \models \psi_i$ implies 
$\rho'_i$ is a $(\Sigma,X_i)$-simple extension  
such that $\rho'_i \setminus X_i \models \varphi_i$.
\end{itemize}
Consider $\varphi_1 \wedge \varphi_2$, a formula built over $\Sigma$.  Also, $\psi_1 \wedge \psi_2$ 
is a formula built from  $\Sigma \cup X_1 \cup X_2$. 
Let $X=X_1 \cup X_2$.
\begin{enumerate}
\item Let $\zeta'$ be a timed word over $\Sigma \cup X$  
such that $\zeta' \models \psi_1 \wedge \psi_2$. Then 
$\zeta' \models \psi_i$ for $i=1,2$.  
Since $\psi_i$ is a formula built from $\Sigma_i$, and 
$X_1 \cap X_2=\emptyset$, we have 
\begin{itemize}
\item $\zeta' \models \psi_1 \rightarrow \zeta'\setminus X_2 \models \psi_1$, and 
\item $\zeta' \models \psi_2 \rightarrow \zeta'\setminus X_1 \models \psi_1$.
\item Call $\zeta' \setminus X_1=\zeta'_2$ and $\zeta' \setminus X_2=\zeta'_1$.
\end{itemize}
Note that $\zeta'_1$ is a $(\Sigma,X_1)$-simple extension 
and $\zeta'_2$ is a $(\Sigma,X_2)$-simple extension. 
This gives,  by (b) above that $\zeta'_1 \setminus X_1 \models \varphi_1$ 
and $\zeta'_2 \setminus X_2 \models \varphi_2$.
 By Lemma \ref{obs-simple}, we have $\zeta'_1 \setminus X_1=\zeta'_2\setminus X_2$, call 
 it some timed word $\zeta$ over $\Sigma$. Then $\zeta \models \varphi_1 \wedge \varphi_2$.
 Also, $\zeta=(\zeta'_1\setminus X_1)=(\zeta' \setminus X_2)\setminus X_1=\zeta' \setminus X$.
 
\item  Now let $\zeta$ be a timed word over $\Sigma$ such that $\zeta \models \varphi_1 \wedge \varphi_2$. 
 We have to show that there is a $(\Sigma, X)$-simple extension $\zeta'$ 
 such that $\zeta' \models \psi_1 \wedge \psi_2$ such that $\zeta' \setminus X=\zeta$. 
  
  Since   $\varphi_1 = \exists X_1. \psi_1$, $\varphi_2 = \exists  X_2. \psi_2$, we know that 
 for any word $\zeta$ over $\Sigma$ satisfying $\varphi_1 \wedge \varphi_2$,
 $\zeta \models \varphi_i$.
 By (a) above, $\zeta \models \varphi_i$ 
   implies there exists  $(\Sigma,X_i)$-simple  extensions  
  $\zeta'_i$ such that $\zeta'_i \models \psi_i$, with $\zeta'_i \setminus X_i=\zeta$.
     Then the composition $\zeta'=\zeta'_1 \oplus \zeta'_2$, of  
 $\zeta'_1$ and  $\zeta'_2$ is well-defined.
  Clearly, $\zeta'$ is a $(\Sigma, X)$-simple extension obtained by composing
 the $(\Sigma, X_1)$-simple extension $\zeta'_1$ and   
 the $(\Sigma, X_2)$-simple extension $\zeta'_2$ 
  such that   $\zeta' \setminus X = \zeta$. 
 
 Since $X_1\cap X_2=\emptyset$, 
 and $\psi_1$ is built from from $\Sigma \cup X_1$ and 
 $\psi_2$ from $\Sigma \cup X_2$,  $\zeta'_1 \oplus \zeta'_2$   
 will not interfere in the satisfiability of either $\psi_1$ or $\psi_2$, in a way different from 
 $\zeta'_1$ and $\zeta'_2$:  
 Assume the contrary. 
 That is, $\zeta' \nvDash \psi_1 \wedge \psi_2$. That is, $\zeta' \nvDash \psi_1$ 
 or $\zeta' \nvDash \psi_2$. Let $\zeta' \nvDash \psi_1$. 
 If so, then $\zeta' \setminus X_2 \nvDash \psi_1$ since $\psi_1$ has no symbols 
 from $X_2$ (by assumption $X_1$ and $X_2$ are disjoint). 
 But $\zeta' \setminus X_2= \zeta'_1$, and 
 we know $\zeta'_1 \models \psi_1$, contradicting $\zeta' \nvDash \psi_1$. Hence, 
 $\zeta' \models \psi_1 \wedge \psi_2$. 
  \end{enumerate}    
   The following example illustrates what might go wrong when 
   $X_1 \cap X_2 \neq \emptyset$. 
        Consider,   
   $\Sigma=\{a,c,d\}$, $X_1=\{b,e\}$ and $X_2=\{b,f\}$. 
   Note that $X_1 \cap X_2=\{b\}$. 
   
     Consider formulae 
   $\psi_1=b \wedge \Box (b \leftrightarrow \fut c) \wedge \Box \bigvee \Sigma$ and 
   $\psi_2= b \wedge \Box (b \leftrightarrow \fut a) \wedge \Box \bigvee \Sigma$. 
   Also, let $\varphi_1=\fut c$ and $\varphi_2=\fut a$. 
   Let $\zeta$ be the word $(d,0.1)(c,0.3)(d,0.7)(a,0.9)$ over $\Sigma$. 
  Clearly, $\zeta \models \varphi_1 \wedge \varphi_2$.
   
    Consider $\zeta'_1=(\{d,e,b\},0.1)(\{c\},0.3)(\{e,d\},0.7)(\{a\},0.9)$, a $(\Sigma, X_1)$-simple extension 
   and the 
     $(\Sigma, X_2)$-simple extension
      $\zeta'_2=(\{d,f,b\},0.1)(\{c,b\},0.3)(\{f,b,d\},0.7)(\{a\},0.9)$. 
 Then,  $\zeta'_1 \models  \psi_1$, $\zeta'_2 \models \psi_2$,  
 $\zeta'_1 \setminus X_1=\zeta'_2 \setminus X_2=\zeta$. 
 However,  \\ $\zeta'=    
    (\{d,b,e,f\},0.1)(\{c,b\},0.3)(\{d,b,e,f\},0.7)(\{a\},0.9)$, \\ 
    the composition of $\zeta'_1$ and $\zeta'_2$ is such that  $\zeta' \nvDash (\psi_1 \wedge \psi_2)$. 
       \end{proof}

 \subsection{Proof of Lemma \ref{lem:gen}}
 \label{proofs-2}
 \begin{proof}
 The proof follows by structural induction on $\varphi$. 
\begin{itemize} 
\item Let $\rho$ be a timed word over $\Sigma$ such that $\rho \models \varphi$. 
We have to show that for all $(\Sigma,X)$-oversampled behaviour $\rho'$ 
such that $\rho' \downarrow X=\rho$ holds,  $\rho' \models ONF_{\Sigma}(\varphi)$.   

Consider a  $(\Sigma,X)$-oversampled behaviour $\rho'$, 
such that $\rho' \downarrow X=\rho$. Then, there exists a strictly increasing function 
$f:\{1,2,\dots,n\} \rightarrow \{1,2,\dots,m\}$ such that $n=|dom(\rho)|$, $m=|dom(\rho')|$, and  
\begin{itemize}
\item $f(1)=1$, $\sigma_1=\sigma'_1 \cap \Sigma$, $\tau_1=\tau'_1$, and 
\item $f(n)=m$, $\sigma_n=\sigma'_m \cap \Sigma$, $\tau_n=\tau'_m$, and 
 \item For $1 \leq i \leq n-1$, $f(i)=j$ and $f(i+1)=k$ iff 
 \begin{itemize}
 \item $\sigma_i=\sigma'_j \cap \Sigma$,  
  and $\tau_i=\tau'_j$,
 \item $\sigma_{i+1}=\sigma'_k \cap \Sigma$, 
 and   $\tau_{i+1}=\tau'_k$, 
 \item For all $j < l < k$, $\sigma'_l \subseteq X$. 
 \end{itemize}
\end{itemize}
  
By applying structural induction on depth of $\varphi$, we  show that 
$\rho \models \varphi \rightarrow \rho' \models ONF_{\Sigma}(\varphi)$.
 For depth 0, the base case trivially holds for atomic propositions. For example if $\varphi=a \in \Sigma$, then 
 $ONF_{\Sigma}(\varphi)=a \wedge act$. Clearly, $\rho, 1 \models a$ iff $\rho', f(1) \models ONF_{\Sigma}(a)$. 
 
 Assume the result for formulae $\varphi$ of depth $\leq n-1$. Consider $\varphi$ as a formula of depth $n$. 
  Lets  discuss the case  
of formulae of the form $\varphi=\psi_1 \until_I \psi_2$ where $\psi_1$ and $\psi_2$ have depth $\leq n-1$.
        
  If $\rho,i \models \psi_1 \until_I \psi_2$, 
then there exists $j > i$ where $\psi_2$ holds, and all points in between $i$ and $j$ satisfy $\psi_1$. Also,
  $t_j - t_i \in I$. By the above, there exists a point $f(j) > f(i)$ such that $\sigma'_{f(j)} \models ONF_{\Sigma}(\psi_2)$ (by induction hypothesis), and $\sigma'_{f(j)} \models act$ (definition of $f$).   
Let $\{i_1, \dots, i_q\}$ be the set of points between $f(i)$ and $f(j)$. 
For all $i < l < j$,  $f(l) \in \{i_1, \dots, i_q\}$.  Also, $\sigma_{f(l)}\models ONF_{\Sigma}(\psi_1)$. However, there are points 
$i_j \in   \{i_1, \dots, i_q\}$ such that $i_j \neq f(l)$ for any $i < l < j$. These points 
are such that $\sigma'_{i_j} \cap \Sigma =\emptyset$. Now if we look at points between 
$f(i)$ and $f(j)$, then we have 
\begin{itemize}
\item For all points $k$ such that $f(i) < k < f(j)$, we have $\sigma'_k \models ONF_{\Sigma}(\psi_1)$,  
or $\sigma'_k \cap \Sigma = \emptyset$.  \\
i.e, $(\sigma'_k \cap \Sigma \neq \emptyset) \rightarrow 
 \sigma'_k \models ONF_{\Sigma}(\psi_1)$.  
 \item Recall that if $\sigma'_k \cap \Sigma \neq \emptyset$, then $\sigma'_k \models act$
\end{itemize}
  The above conditions give us \\
  $\rho', f(i) \models (act \rightarrow ONF_{\Sigma}(\psi_1))\until_I (act \wedge ONF_{\Sigma}(\psi_2))$. 
Also, since $\rho'$ is a $(\Sigma,X)$-oversampled behaviour, $act$ holds good at the start and end points. 
$\rho' \models act$ iff $act$ holds good at the starting point. $\Box \bot$ holds good only at the last point; $\bot$ stands for $false$. Clearly, $\rho \models (\psi_1 \until_I \psi_2)$ implies 
$\rho' \models     (act \rightarrow ONF_{\Sigma}(\psi_1)\until_I (act \wedge ONF_{\Sigma}(\psi_2) \wedge act \wedge (\Box \bot \rightarrow act)$.
 The proof for past modaility is analogous.
   
 \item Let $\rho'$ be a $(\Sigma,X)$-oversampled behaviour 
 such that $\rho' \models ONF_{\Sigma}(\varphi)$. 
 We have to show that $\rho' \downarrow X \models \varphi$.   
    In a manner similar to the above, by structural induction of $\varphi$, we 
    can show that $\rho' \downarrow X \models \varphi$. 
     \end{itemize}     
     \end{proof}      

 \subsection{Proof of Lemma \ref{onf1}}
\label{proofs:onf1}
\begin{proof}
Follows from Lemma \ref{lem:gen} and  equivalence 
 of $\zeta$ and $ONF_{\Sigma}(\zeta)$. 
 \end{proof}
\subsection{Proof of Lemma \ref {lem:boolclosedequis-2}}
\label{proofs-3}
We first define the composition of two oversampled behaviours. 

\noindent{\it \underline{Composition of Oversampled Behaviours}}: 
Let $\rho_1=(\sigma^1,\tau^1)$ be a $(\Sigma,X_1)$-oversampled behaviour and  
$\rho_2=(\sigma^2,\tau^2)$ be a $(\Sigma,X_2)$-oversampled behaviour such that 
$\rho_1 \downarrow X_1=\rho_2 \downarrow X_2$.  
This condition says that the points in $\rho_1$ where propositions of $\Sigma$ hold 
is exactly same as the points in $\rho_2$ where propositions of $\Sigma$ hold; moreover 
the same propositions of $\Sigma$ hold at these points of $\rho_1$ and $\rho_2$. 
Let $\Sigma_1=\Sigma \cup X_1$ and $\Sigma_2=\Sigma \cup X_2$.
We define the composition of $\rho_1$ and $\rho_2$ denoted 
$\rho_1 \boxplus \rho_2$ to be 
all $(\Sigma,X_1\cup X_2)$-oversampled behaviours 
$\rho$ such that 
$\rho \downarrow X_1 = \rho_2$ and $\rho\downarrow X_2 = \rho_1$.
  Note that $\rho_1 \boxplus \rho_2$ is guaranteed to exist only when $X_1 \cap X_2=\emptyset$. 
  The following example illustrates that when $X_1 \cap X_2\neq \emptyset$, $\rho_1 \boxplus \rho_2$ 
 may not exist.    
  
Consider $\Sigma=\{a,b\}, X_1=\{c,e\},X_2=\{d,e\}$. 
Let $\rho_1=(\{a,c\},0.1)(\{e\},0.3)(\{b,e,c\},1)$ be a $(\Sigma,X_1)$-oversampled behaviour and 
 $\rho_2=(\{a\},0.1)(\{e\},0.3)(\{b,e,d\},1)$ be a $(\Sigma,X_2)$-oversampled behaviour. 
 Then $\rho_1 \downarrow X_1=
(\{a\},0.1)(\{b\},1)=\rho_2 \downarrow X_2$. 
Assume that $\rho \in \rho_1 \boxplus \rho_2$. Then, 
$\rho \downarrow X_1= \rho_2$. 
However, $\rho \downarrow X_1$ will not contain 
any position $i$ which is marked just with $e$, since 
such a position will be eliminated during oversampled projection with respect to $X_1$. 
Thus, there can be no such $\rho$, which 
after oversampling projections with respect to $X_1$ will give $\rho_2$.
A similar problem happens 
when trying to show that $\rho \downarrow X_2=\rho_1$.

We now give an example to illustrate the composition 
of two oversampled behaviours. Let $\Sigma{=}\{a\}, X_1{=}\{c\}, X_2{=}\{d\}$,
$\rho_1{=}(\{a\},0.1)(\{c\},0.5)$ and $\rho_2{=}(\{a\},0.1)(\{d\},0.5)(\{d\},0.5)$. 
$\rho_1 \boxplus \rho_2$ consists of:
\begin{itemize}
\item $(\{a\},0.1)(\{c\},0.5)(\{d\},0.5)(\{d\},0.5)$
\item $(\{a\},0.1)(\{d\},0.5)(\{d\},0.5)(\{c\},0.5)$
\item $(\{a\},0.1)(\{d\},0.5)(\{c\},0.5)(\{d\},0.5)$
\end{itemize}
Clearly, when the words $\rho_1, \rho_2$ are weakly monotonic, 
$\rho_1 \boxplus \rho_2$ can consist of more than one word;
however, when $\rho_1, \rho_2$  are strictly monotonic,
$\rho_1 \boxplus \rho_2$ is a unique word. Our proof applies to both 
weakly monotonic and strictly monotonic words.
We use the following easy lemma in the proof:
\begin{lemma}
\label{obs}
Let $X_1 \cap X_2=\emptyset$, and $X_1 \cup X_2=X$. 
Let $\rho$ be a $(\Sigma,X)$-oversampled behaviour, and 
let $\Sigma_1=\Sigma \cup X_1$ and $\Sigma_2=\Sigma \cup X_2$.  
Then $\rho \in [(\rho \downarrow X_2)] \boxplus  
[(\rho \downarrow X_1)]$.
\end{lemma}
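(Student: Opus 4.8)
The plan is to unwind the definitions of the two $\downarrow$-projections and of $\boxplus$, and to check that every ingredient is legitimately formed; once that bookkeeping is in place, membership of $\rho$ falls out by definition. First I would pin down the reinterpretations of the ambient behaviour. Since $\rho$ is a $(\Sigma,X)$-oversampled behaviour with $X = X_1\cup X_2$, its first and last positions carry a symbol of $\Sigma$, hence also of $\Sigma\cup X_1$ and of $\Sigma\cup X_2$; therefore $\rho$ may be read as a $(\Sigma\cup X_1,X_2)$-oversampled behaviour, so that $\rho_1 = \rho\downarrow X_2$ is defined, and symmetrically as a $(\Sigma\cup X_2,X_1)$-oversampled behaviour, so that $\rho_2 = \rho\downarrow X_1$ is defined. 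I would then observe that $\rho_1$ is in fact a $(\Sigma,X_1)$-oversampled behaviour and $\rho_2$ a $(\Sigma,X_2)$-oversampled behaviour: deleting the positions whose label is disjoint from $\Sigma\cup X_1$ leaves the first and last positions of $\rho$ in place (they carry a $\Sigma$-symbol), and erasing $X_2$ from them does not remove their $\Sigma$-symbols, so the endpoints of $\rho_1$ are still $\Sigma$-action points; the argument for $\rho_2$ is identical.

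Next I would check that the composition $\rho_1\boxplus\rho_2$ is defined, i.e. that $\rho_1\downarrow X_1 = \rho_2\downarrow X_2$. The key claim is that projecting $\rho$ away from $X_2$ and then from $X_1$ has exactly the same effect as projecting $\rho$ away from $X = X_1\cup X_2$ in one step, so that both sides equal $\rho\downarrow X$. A position $i$ of $\rho$ survives $(\rho\downarrow X_2)\downarrow X_1$ iff, after erasing $X_2$, it still carries a $\Sigma$-symbol, i.e. iff $\sigma_i\cap\Sigma\neq\emptyset$ (using $\Sigma\cap X_2=\emptyset$); this is precisely the survival condition for $\rho\downarrow X$, and on the surviving positions both operations leave exactly $\sigma_i\cap\Sigma$. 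Here is where the hypothesis $X_1\cap X_2=\emptyset$ is used: erasing $X_2$ from a label removes none of its $X_1$-symbols, so the two-stage erasure deletes exactly $X$, and a position labelled only by a mixture of $X_1$- and $X_2$-symbols survives the first stage but is discarded by the second, in either order. Hence $(\rho\downarrow X_2)\downarrow X_1 = \rho\downarrow X = (\rho\downarrow X_1)\downarrow X_2$, so $\rho_1\boxplus\rho_2$ is legitimately formed.

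Finally, membership is immediate: $\rho$ is a $(\Sigma,X_1\cup X_2)$-oversampled behaviour, and by construction $\rho\downarrow X_2 = \rho_1$ and $\rho\downarrow X_1 = \rho_2$, which are exactly the three defining conditions for $\rho\in\rho_1\boxplus\rho_2 = (\rho\downarrow X_2)\boxplus(\rho\downarrow X_1)$.

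I expect the only delicate point to be the staged-projection computation of the second paragraph: one must keep track of the fact that the ``action alphabet'' against which non-action points are deleted changes from $\Sigma\cup X_1$ in the first stage to $\Sigma$ in the second, and verify that no $\Sigma$-position (and no symbol that is still needed) is lost along the way. This is exactly the place where disjointness of $X_1$ and $X_2$ is essential, and where the counterexample preceding the lemma shows the statement fails without it; everything else is routine unwinding of the definitions.
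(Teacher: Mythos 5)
Your proof is correct. The paper states this as an ``easy lemma'' and omits the proof entirely, and your argument is exactly the routine verification the authors intend: check that the two projections yield genuine $(\Sigma,X_1)$- and $(\Sigma,X_2)$-oversampled behaviours, that the staged projections in either order coincide with $\rho\downarrow X$ (so the composition is defined), and then read off membership from the definition of $\boxplus$.
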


\noindent \underline{Proof of Lemma \ref {lem:boolclosedequis-2}}:
\begin{proof}

Given $\varphi_1=\exists \downarrow X_1. \zeta_1$ and  
  $\varphi_2 = \exists \downarrow X_2. \zeta_2$. We know that 
  by definition, 
  \begin{itemize}
  \item[(a)] For any $(\Sigma,X_i)$-oversampled behaviour $\rho'_i$, \\
  $\rho'_i \models \zeta_i \rightarrow (\rho'_i \downarrow X_i)\models \varphi_i$.
  \item[(b)] For any timed word $\rho_i$ over $\Sigma$ such that $\rho_i \models \varphi_i$, there exists 
  a $(\Sigma,X_i)$-oversampled behaviour $\rho'_i$ such that $\rho'_i \models \zeta_i$  and $\rho'_i   \downarrow X_i=\rho_i$.
    \end{itemize}
  
  We now want to show that 
  $\varphi_1 \wedge \varphi_2 = \exists \downarrow X. 
  (\zeta_1 \wedge \zeta_2)$.
  \begin{enumerate}
 \item Let $\rho$ be a timed word over $\Sigma$ such that $\rho \models \varphi_1 \wedge \varphi_2$. 
 Since $\rho \models \varphi_i$, we have by (b) above, 
  $(\Sigma,X_i)$-oversampled behaviours $\rho'_i$ such that $\rho'_i \models \zeta_i$ and 
  $\rho'_i   \downarrow X_i=\rho$, for $i=1,2$.  
 Hence, $\rho'_1 \boxplus \rho'_2$ is welldefined; 
  let $\rho' \in \rho'_1\boxplus \rho'_2$. 
                 
           Since  $\zeta_i$ is in the oversampled normal form with respect to $\Sigma_i$, 
    by Lemma \ref{onf1},
   we have 
      $\zeta_1=\forall \downarrow. \zeta_1$ and 
    $\zeta_2=\forall \downarrow. \zeta_2$. We already have 
       $\rho'_i \models \zeta_i$, for $i=1,2$. Hence, 
     \begin{itemize}
       \item  any $(\Sigma_1,X_2)$-oversampled behaviour $\rho''$  
     such that $\rho'' \downarrow X_2=\rho'_1$ 
     will also satisfy $\zeta_1$. 
     \item any $(\Sigma_2,X_1)$-oversampled behaviour $\rho'''$  
     such that $\rho''' \downarrow X_1=\rho'_2$ 
     will also satisfy $\zeta_2$.
     \item By definition of $\boxplus$, we know that $\rho' \in \rho'_1\boxplus \rho'_2$ 
     is such that $\rho' \downarrow X_2=\rho'_1$ and 
     $\rho' \downarrow X_1=\rho'_2$.
     \item Picking $\rho'=\rho''=\rho'''$, we have $\rho' \models \zeta_1$ and $\rho' \models \zeta_2$. 
     \end{itemize}
     Hence $\rho' \in \rho'_1 \boxplus \rho'_2$ satisfies $\zeta_1 \wedge \zeta_2$.
    Further, \\ $\rho' \downarrow X= 
  \{[\rho' \downarrow X_1]\downarrow X_2\}=
\{\rho'_2 \downarrow X_2\}=\rho$. 
  
   \item
    Conversely, let $\rho'$ be a $(\Sigma,X)$-oversampled behaviour, such that 
    $\rho' \models \zeta_1 \wedge \zeta_2$. Then $\rho' \models \zeta_i$ for $i=1,2$.
        Again, since  $\zeta_i$ is in the oversampled normal form with respect to $\Sigma_i$, 
    by Lemma \ref{onf1},
   we have 
      $\zeta_1=(\forall \downarrow). \zeta_1$ and 
    $\zeta_2=(\forall \downarrow). \zeta_2$.
     We already have $\rho' \models \zeta_i$ for $i=1,2$. Hence,  
     \begin{itemize}
     \item $\rho' \models \zeta_1 \rightarrow \rho' \downarrow X_2 \models \zeta_1$.
    \item $\rho' \models \zeta_2 \rightarrow \rho' \downarrow X_1 \models \zeta_2$.
     \item  Let $\rho'_1=\rho' \downarrow X_2$ and 
    $\rho'_2=\rho' \downarrow X_1$. Then $\rho'_1 \models \zeta_1$ and 
    $\rho'_2 \models \zeta_2$.
    \item By (a) above, we have   $\rho'_1 \downarrow X_1 \models \varphi_1$ and 
      $\rho'_2 \downarrow X_2 \models \varphi_2$.
   \item By Lemma \ref{obs}, $\rho' \in \rho'_1 \boxplus \rho'_2$. Hence, by definition of $\boxplus$,  
    $\rho'_1 \downarrow X_1=
      \rho'_2 \downarrow X_2$. Call it $\rho$, a timed word over $\Sigma$. Clearly, 
      $\rho \models \varphi_1 \wedge \varphi_2$ and $\rho=\rho' \downarrow X$. 
       \end{itemize}
     
          \end{enumerate}

\end{proof}

\subsection{Proof of Lemma \ref{lemmapast}}
\label{proofs-6}
\begin{proof}
We prove the lemma for intervals of the form $[l,u)$.  The proof can be extended for other type of intervals also.
Assume that $\rho, i \models \past_{[ l, u )}\alpha$. We then show that $\neg(\tau_i < \tau_{first_{\alpha}}+l)$ and 
$\neg(\tau_i \geq  \tau_{last_{\alpha}} +u)$ and $\neg(\tau_i \in [\tau_j+u, \tau_k+l))$ for consecutive points $j,k$ where $\alpha$ holds.
\begin{enumerate}
\item   Let $\tau_i < \tau_{first_{\alpha}}+l$.  
$\rho, i \models \past_{[ l, u )}\alpha$ implies that there is a point $i'$ such that $\tau_{i'} \in (\tau_i-u, \tau_i-l]$, such that $\rho, i' \models \alpha$. Then, $\tau_{i'} \leq \tau_i-l <\tau_{first_{\alpha}}$, contradicting 
that $first_{\alpha}$ is the first point where $\alpha$ holds.   
\item Let $\tau_i \geq \tau_{last_{\alpha}} +u$. Again, $\rho, i \models \past_{[ l, u )}\alpha$ implies that there is a point $i'$ such that 
$\tau_{i'} \in (\tau_i-u, \tau_i-l]$ such that $\rho,i' \models \alpha$. We then have $\tau_{i'} > \tau_i -u \geq \tau_{last_{\alpha}}$, contradicting 
that $last_{\alpha}$ is the last point where $\alpha$ holds. 
\item Assume that there exist consecutive points $j < k$ where $\alpha$ holds. Also, let $\tau_{i} \in [\tau_j+u, \tau_k+l)$.  
$\rho, i \models \past_{[ l, u )}\alpha$ implies that there exists a point $i'$ such that $\tau_{i'} \in (\tau_i-u, \tau_i-l]$ and $\rho,i' \models \alpha$.
Also, $\tau_i-u \in [\tau_j, \tau_k+(l-u))$  and $\tau_i-l \in [\tau_j+(u-l),\tau_k)$. This gives $\tau_j < \tau_{i'} < \tau_k$ contradicting the assumption that 
$j,k$ are consecutive points where $\alpha$ holds. 
\end{enumerate}
Conversely, assume that $\neg(\tau_i < \tau_{first_{\alpha}}+l)$ and $\neg(\tau_i \geq \tau_{last_{\alpha}}+u)$ and $\neg(\tau_i \in [\tau_j+u, \tau_k+l))$ for consecutive points $j,k$ where $\alpha$ holds. Then, 
$\tau_i \in [\tau_{first_{\alpha}}+l, \tau_{last_{\alpha}}+u)$. We show that $\rho,i \models \past_{[l,u)}\alpha$.  
\begin{enumerate}
\item If $\tau_{first_{\alpha}}=\tau_{last_{\alpha}}$, then $\tau_i-u < \tau_{first_{\alpha}} \leq \tau_i -l$. Clearly, 
 $\alpha$ holds in $(\tau_i -u, \tau_i-l]$, and hence $\rho,i \models \past_{[l,u)}\alpha$.   
\item If $\tau_{first_{\alpha}} < \tau_{last_{\alpha}}$, and $\tau_i \in [\tau_{first_{\alpha}}+l, \tau_{last_{\alpha}}+u)$. 
 By the condition $\neg(\tau_i \in [\tau_j+u, \tau_k+l))$ for consecutive points $j,k$ where $\alpha$ holds, we have 
 for all consecutive points $j < k$ where $\alpha$ holds, $\tau_i \notin [\tau_j+u, \tau_k+l)$. Combining this 
 with  $\tau_i \in [\tau_{first_{\alpha}}+l, \tau_{last_{\alpha}}+u)$, 
  we have 
 $\tau_i \in [\tau_k+l, \tau_{last_{\alpha}}+u)$ or $\tau_i \in [\tau_{first_{\alpha}}+l, \tau_j+u)$ 
  for $k \leq last_{\alpha}$ and $j \geq first_{\alpha}$. 
 
   If $j=first_{\alpha}$, and if $\tau_i \in [\tau_{first_{\alpha}}+l, \tau_{first_{\alpha}}+u)$, and as seen in the first case, $\rho, i \models \past_{[l,u)}\alpha$. Similar is the case when $k=last_{\alpha}$. 
  Assume now that $j > first_{\alpha}$ and $k < last_{\alpha}$.
 Considering $j'$ as the immediate point before $j$ where $\alpha$ holds, (there is certainly such a point $j'$, $j'$ could be $first_{\alpha}$) we have by assumption
 $\tau_i \notin [\tau_{j'}+u, \tau_j+l)$. This combined with 
  $\tau_i \in [\tau_{first_{\alpha}}+l, \tau_j+u)$ 
  gives $\tau_i \in [\tau_j+l, \tau_j+u)$. 
  Similarly, considering $k'$ as the immediate next point after $k$ where $\alpha$ holds (there is certainly one such point, $k'$ could be $last_{\alpha}$) we have 
  by assumption $\tau_i \notin [\tau_k+u, \tau_{k'}+l)$. This combined with 
  $\tau_i \in  [\tau_k+l, \tau_{last_{\alpha}}+u)$ gives $\tau_i \in [\tau_k+l, \tau_k+u)$. Hence, we have $\rho,i \models \past_{[l,u)}\alpha$. 
  \end{enumerate}
\end{proof}

\subsection{Proof of Lemma \ref{remove-pastinf}}
\label{proofs-4}
\begin{proof}
Let $\rho'$ be a $(\Sigma,W)$-oversampled behaviour.  
 Let $\alpha=(act \rightarrow (\neg a \wedge \neg b))$.
Consider the following formulae in $\mtl$:
 \begin{enumerate}
  \item $\psi_1: [\wB \alpha \vee \{\alpha \wU[(a \wedge act) \wedge \wB_{[0,l)}(act \rightarrow \neg b)] \}]$
\item $\psi_2: \wB[(a \wedge act) \rightarrow \Box_{[l,\infty)}(act \rightarrow b)]$.
 \end{enumerate}
 Let $\psi=\psi_1 \wedge \psi_2$.  
 We claim that $\rho' \models ONF_{\Sigma}(T)$ iff $\rho' \models \psi$. 
   Assume $\rho' \models ONF_{\Sigma}(T)$. 
Assume the contrary that $\rho' \models \neg \psi_1$. Then, either there is a point marked $act \wedge b$
 before the first occurrence of $a \wedge act$, or there 
is a point marked $act \wedge b$ in the $[0,l)$ future of the first $a \wedge act$. 
Both of these imply $\neg ONF_{\Sigma}(T)$ giving contradiction.

 Now assume that  $\rho' \models \neg \psi_2$. Then 
some point $act$ in the $[l, \infty)$ future of a certain $a\wedge act$ 
is marked $\neg b$, which again contradicts $ONF_{\Sigma}(T)$.
 Hence $\rho' \models \psi$.  The converse can be proved in a similar way.
Note that $\psi_1 \wedge \psi_2$ increases a constant number of modalities compared to $ONF_{\Sigma}(T)$.


\end{proof}

\subsection{Proof of Correctness of Lemma \ref{past-b1}}
\label{os-proof}
\begin{proof}
We give a proof of correctness on the construction of $\rho''$ and the formula 
   $\mathsf{MARK}$,  showing that 
   $ONF_{\Sigma}(T)=\exists \downarrow X.  ONF_{\Sigma_i}(\mathsf{MARK})$. 
   We start with a $(\Sigma,W)$-oversampled behaviour $\rho'$ over $\Sigma \cup W$.  
   We induct on the $a$'s in $\rho'$, and show that 
  a point $p$ of $\rho'$ is marked $b$ iff $\rho', p \models \past_{[l,u)}a$. 
  \begin{itemize}       
   \item Given any point $q$ of $\rho'$ marked $a$,  $\mathsf{MARK}_b$ marks all points in $[\tau_q+l,\tau_q+u)$  with $b$.  
         \item Lets look at the first $a$ of $\rho'$. Recall that 
    the point where $a$ holds for the first time is called $first_a$. 
    The formula $\mathsf{MARK}_{first}$ ensures that 
    all points of $\rho'$ that are at a distance $[0,l)$ from $first_a$ are marked $\neg b$. Also, all points 
    in $[0,\tau_{first_a}]$ are also marked $\neg b$. Thus, $\mathsf{MARK}_{first}$ accounts for all points 
    in $[0, \tau_{first_a}+l)$, while $\mathsf{MARK}_b$ marks all points in $[\tau_{first_a}+l,\tau_{first_a}+u)$ 
    with $b$. 
 \item Consider a point $j$ in $dom(\rho')$ such that $j > first_a$, $a \in \sigma_j$ and assume that 
all the $a$'s in $[0,\tau_j]$ have been accounted for:  that is,
all points in $[0, \tau_j+u)$ of $\rho'$ have been marked with $b$ or $\neg b$ correctly. 
This is the inductive hypothesis. 
Now consider the next consecutive $a$ occurring after  $j$, call that point $k$.
If $\tau_k - \tau_j \leq u-l$, then $\tau_k+l \leq \tau_j+u$, and 
by $\mathsf{MARK}_b$, all points in $[\tau_k+l,\tau_k+u)$ will be marked $b$. Hence, 
we are done accounting for $[0, \tau_k+u)$.
Hence, assume $\tau_k - \tau_j > u-l$. In this case, $\tau_k+u > \tau_k+l > \tau_j+u$.    
$\mathsf{MARK}_b$ marks all points in $[\tau_k+l,\tau_k+u)$ with $b$; 
we need to reason that points in $[\tau_j+u, \tau_k+l)$ will be marked $\neg b$. 
\begin{itemize}
 \item 
 The formulae $\mathsf{MARK}_{j,k}$, $\mathsf{MARK}_{beg,end}$ 
 mark  points $j,k$ respectively with $b_s,b_e$, and points $\tau_j+u, \tau_k+l$ respectively        
  with $beg_b$ and $end_b$. Also,   $\mathsf{MARK}_{beg,end}$
  marks $( \tau_j+u, \tau_j+u+1)$ as well as 
  $(\tau_j+u-1, \tau_j+u)$  with 
  $\neg beg_b$. As discussed in Lemma \ref{past-b1}, 
  we must assert that all other remaining points $beg_b$ and $end_b$  do not occur.
 The formula  $\mathsf{MARK}_c$ first marks all integer points with $c$. 
   We then identify the points between $b_s$ and $b_e$ 
  by uniquely marking the closest integral point before $b_s$ with $c_{b_s}$ and 
  and the closest integral point before $b_e$ with $c_{b_e}$. 
  Recall that $b_s$ and $b_e$ were marked at $\tau_j$ and $\tau_k$; 
  thus, $c_{b_s}$ and $c_{b_e}$ get marked respectively at points $\lfloor \tau_j \rfloor$ 
  and $\lfloor \tau_k \rfloor$.
  We then assert that $beg_b$ can occur at a point $t$ 
  iff there is a $c_{b_s}$ in $(t-u-1, t-u]$.  
  Thus, given that $c_{b_s}$ is marked at 
  $\lfloor \tau_j \rfloor$, $beg_b$ is marked only 
  in $[\lfloor \tau_j \rfloor+u, \lfloor \tau_j \rfloor+u+1)$.
  However, by formula $\mathsf{MARK}_{beg,end}$,
  we disallow $beg_b$ in $(\tau_j+u, \tau_j+u+1)$
  and $(\tau_j+u-1, \tau_j+u)$.
   Thus, we obtain a unique marking for $beg_b$.
  In a similar way, we obtain a unique marking for $end_b$.
  Note that the oversampled behaviour $\rho''$
  now has these markings. The formula 
  $\mathsf{MARK}_{\neg b}$ now marks all points of $\rho'$ (or all points 
    marked $act$ in $\rho''$) between $beg_b$ and $end_b$ with $\neg b$. 
    This takes care of the interval we were interested in: the interval 
    $[\tau_j+u, \tau_k+l)$.                        
      \item Thus, we have now accounted for all points of $\rho'$ in $[0, \tau_k+u)$.   
   \end{itemize}      
   \item We are now left with the remaining part $[\tau_k+u, \tau_{|dom(\rho')|}]$. 
   If $k \neq last_a$, we can extend the reasoning above to the next consecutive position 
   after $k$, which is marked an $a$. In this way, we can account for 
   all points of $\rho'$ in $[0, \tau_{last_a}+u)$. We just need to reason for $[\tau_{last_a}+u, \tau_{|dom(\rho')|}]$. 
  Consider the point $last_a$. 
   The formula $\mathsf{MARK}_{last}$ marks all points of $\rho'$ in the interval 
   $[\tau_{last_a}+u, \tau_{|dom(\rho')|}]$ with $\neg b$. 
   \end{itemize}      
 The above argument shows that all points of $\rho'$ are marked $b$ or $\neg b$ correctly. 
 The $(\Sigma \cup W, X)$-oversampled behaviour $\rho''$                 
   reflects these markings. When we do an oversampled projection 
   of $\rho''$ with respect to $X=\{b_e,b_s,beg_b,end_b,c,c_{b_s},c_{b_e}\}$, we are left with $\rho'$, 
   where at all positions, we have the correct marking with respect to $b$ or $\neg b$. 
  Clearly,  a point $p$ of $\rho'$ is marked $b$ iff $\rho', p \models \past_{[l,u)}a$.
   Hence,  $\rho' \models ONF_{\Sigma}(T)$ iff
   $\rho''  \models  ONF_{\Sigma \cup W \cup X}(\mathsf{MARK})$.

  Conversely, if we start with a $(\Sigma \cup W,X)$-oversampled behaviour $\rho''$ satisfying 
  $ONF_{\Sigma \cup W \cup X}(\mathsf{MARK})$, then all points 
  $p$ of   $\rho''$ marked $act$ will be marked $b$ iff 
  $\past_{[l,u)}a$ holds good at $p$. Then $\rho'' \downarrow X$ will 
  give a word $\rho'$ over $\Sigma \cup W$ that satisfies 
  $ONF_{\Sigma}(T)$.                         
 \end{proof}
\subsection{Extending Lemma \ref{past-b1} to weakly monotonic timed words}
\label{weak}
Note that for weakly monotonic words, we need to specify the exact location of $beg_b$ and $end_b$ for a fixed  time-stamp.
Recall that we mark the time stamp $\tau_j+u$ with $beg_b$, for a pair $j,k$ of consecutive $a$'s at distance 
$>u-l$.  
\begin{itemize}
\item Since there are several occurrences of the same time stamp, we want $beg_b$ to the first 
symbol of the repeating time stamp $\tau_j+u$ while dealing with   
intervals $\langle l,u)$.
We then add an extra formula $Fweak_{beg_b}=\wB (\Box_{[0,0]}\neg beg_b)$ which says that 
$beg_b$ is not after any symbol $\alpha$ having the same time stamp as $beg_b$.
  
\item Likewise, while dealing with intervals 
$\langle l,u]$,  $beg_b$ should always be the last symbol at its timestamp $\tau_j+u$. 
$Lweak_{beg_b}=\wB(beg_b \rightarrow \Box_{[0,0]}\bot)$ which says that there are no symbols 
$\alpha$ after $beg_b$ sharing the same time stamp as $beg_b$.
\item In a similar way, the position of $end_b$ depends on the left parantheses of the interval. Recall that we mark $end_b$ 
at $\tau_k+l$.  
If the interval is of the form $[l, u \rangle$, then we want  
$end_b$ to be the first symbol with time stamp $\tau_k+l$.
Similarly, if the interval is of the form $(l, u \rangle$, 
then we want $end_b$ to be the last symbol at time stamp $\tau_k+l$.
This can be done similarly as above.
\end{itemize}

\subsection{Proof of Lemma \ref{remove-pastinf2}}
 \label{proofs-5}
 \begin{proof}
 
 The temporal definition $T$ is the conjunction of  
 $C_1 = \wB[b \leftarrow \past_{[l, \infty)}a]$ and $C_2 =  \wB[b \rightarrow \past_{[l, \infty)}a]$. 
Models $\rho$ satisfying $C_1$ are those where 
 $b$ holds at all points $i$ such that $a$ holds somewhere 
 from the beginning of $\rho$ till $\tau_i-l$, that is in the prefix
  $[0, \tau_i-l]$ of $\rho$. Clearly,   
either there is no point marked $a$ in the model, in which case  
$\wB\neg a$ holds, or, 
whenever there is point $i$ marked $a$, then $b$ holds at all points 
in $[\tau_i+l, \infty)$. 
Thus, $C_1$ is equivalent to $\psi_1=\wB(\neg a) \vee 
\wB[a\rightarrow \Box_{[l,\infty)}b]$.

Models satisfying $C_2$ are those in which points where $\neg \past_{[l, \infty)}a$ hold must be marked $\neg b$. 
Clearly, all points in $[0,l)$ must be marked $\neg b$. Also,  if $i$ is the point where $a$ holds for the first time, 
then all points in $[\tau_i, \tau_i+l)$ should be marked $\neg b$. Thus, the formula 
$\psi_2=\wB(\neg a\wedge  \neg b) \vee (\neg a\wedge  \neg b)  \wU (a \wedge \wB_{[0,l)} \neg b)$ is equivalent to $C_2$. 
We thus have a formula $\psi_1 \wedge \psi_2 \in \mtl$ equivalent to $T$.
 \end{proof}

 \subsection{Proof of Correctness for Lemma \ref{past-b2}}
\label{correct}
\begin{proof}
The proof of correctness proceeds in similar lines as Lemma \ref{past-b1}. 
 We give a proof of correctness on the construction of $\rho'$ and the formula 
   $\mathsf{MARK}$,  showing that 
   $ENF_{\Sigma}(T)=\exists X. \mathsf{MARK}$. 
   We start with a timed word $\rho$ over $\Sigma \cup W$.    
   We induct on the $a$'s in $\rho$, and show that 
  a point $p$ of $\rho$ is marked $b$ iff $\rho, p \models \past_{[l,u)}a$. 
 \begin{itemize}       
   \item Given any point $q$ of $\rho$ marked $a$,  $\mathsf{MARK}_b$ marks all points in $[\tau_q+l,\tau_q+u)$  with $b$.  
         \item Lets look at the first $a$ of $\rho$. Recall that 
    the point where $a$ holds for the first time is called $first_a$. 
    The formula $\mathsf{MARK}_{first}$ ensures that 
    all points of $\rho$ that are at a distance $[0,l)$ from $first_a$ are marked $\neg b$. Also, all points 
    in $[0,\tau_{first_a}]$ are also marked $\neg b$. Thus, $\mathsf{MARK}_{first}$ accounts for all points 
    in $[0, \tau_{first_a}+l)$, while $\mathsf{MARK}_b$ marks all points in $[\tau_{first_a}+l,\tau_{first_a}+u)$ 
    with $b$. 
 \item Consider a point $j$ in $dom(\rho)$ such that $j > first_a$, $a \in \sigma_j$ and assume that 
all the $a$'s in $[0,\tau_j]$ have been accounted for:  that is,
all points in $[0, \tau_j+u)$ of $\rho$ have been marked with $b$ or $\neg b$ correctly. 
This is the inductive hypothesis. 
Now consider the next consecutive $a$ occurring from $j$, call that point $k$.
If $\tau_k - \tau_j \leq u-l$, then $\tau_k+l \leq \tau_j+u$, and 
by $\mathsf{MARK}_b$, all points in $[\tau_k+l,\tau_k+u)$ will be marked $b$. Hence, 
we are done accounting for $[0, \tau_k+u)$.
Hence, assume $\tau_k - \tau_j \in (u-l,u]$. In this case, $\tau_k+u > \tau_k+l > \tau_j+u$.    
$\mathsf{MARK}_b$ marks all points in $[\tau_k+l,\tau_k+u)$ with $b$; 
we need to reason that points in $[\tau_j+u, \tau_k+l)$ will be marked $\neg b$. 
\begin{itemize}
\item We start marking points of $\rho$ with new propositions, obtaining 
a simple extension $\rho'$ of $\rho$.  
We start marking points where $a$ holds good in $\rho$ with 
propositions in $\{a_0,a_1\}$. 
\item Assume that point $j$ is marked $a_0$, while $k$ is marked $a_1$ by formula 
$\mathsf{MARK}_a$. Let $d=\lceil \tau_k-\tau_k \rceil +l-u$, the 
closest integer $\geq$ the duration of the interval $[\tau_j+u, \tau_k+l)$.
Formula $\mathsf{MARK}_{beg,end,d}$ marks $j$ with $beg_{db}$ and point $k$ with $end_{db}$. 
Identifying point $j$ as $beg_{db}$ and point $k$ with $end_{db}$, 
all points in $I^2_{j,k}=[\tau_{end_{db}}+l-d,\tau_{end_{db}}+l)$ are marked $x_{0b}$ 
and all points in $I^1_{j,k}=[\tau_{beg_{db}}+u,\tau_{beg_{db}}+u+d)$ are marked $y_{0b}$.
The points in $I^1_{j,k} \cap I^2_{j,k}$ are marked $\neg b$ by $\mathsf{MARK}_{\neg b,0}$.  
\item Since $[\tau_j+u, \tau_k+l) \subseteq I^1_{j,k} \cap I^2_{j,k}$, 
we have clearly marked all points in  
$[\tau_j+u, \tau_k+l)$ with $\neg b$. Also, points 
in $[\tau_j+u, \tau_k+l)$ are not handled by formula 
$\mathsf{MARK}_b$, since these points are not in the $[l,u)$-future 
of any point marked $a$. Thus, points handled by 
$\mathsf{MARK}_{\neg b,0}$ and 
$\mathsf{MARK}_b$ are disjoint. 
\item Recall the discussion in Lemma \ref{past-b2} regarding free occurrences 
of $x_{1b},y_{1b}$ : as noted earlier, 
if $\{x_{1b},y_{1b}\} \subseteq \sigma_p$ for any $p \in  I^1_{j,k} \cap I^2_{j,k}$, 
there is no problem, since these points are anyway marked $\neg b$;
if $\{x_{1b},y_{1b}\} \subseteq \sigma_p$, for $p \notin  I^1_{j,k} \cap I^2_{j,k}$, 
then either they lie in some $I^1_{h,m} \cap I^2_{h,m}$ corresponding 
points $h, m$ such that $\tau_m-\tau_h \in (u-l,u]$, 
or $p$ is a point handled by 
$\mathsf{MARK}_b$. In the former case, there is no problem, while in the latter case, 
we get an inconsistent simple extension $\rho'$ from $\rho$. Since we 
work only on consistent simple extensions,
we rule out simple extensions 
where of the latter form.  
\item Thus, to summarize, 
we have accounted for all points $[0, \tau_k+u)$, being marked by one of $b, \neg b$
in consistent simple extensions. 
\end{itemize}

\item We are now left with the remaining part $[\tau_k+u, \tau_{|dom(\rho)|}]$. 
   If $k \neq last_a$, we can extend the reasoning above to the next consecutive position 
   after $k$, which is marked an $a$. In this way, we can account for 
   all points of $\rho$ in $[0, \tau_{last_a}+u)$. We just need to reason for $[\tau_{last_a}+u, \tau_{|dom(\rho)|}]$. 
  Consider the point $last_a$. 
   The formula $\mathsf{MARK}_{last}$ marks all points of $\rho$ in the interval 
   $[\tau_{last_a}+u, \tau_{|dom(\rho)|}]$ with $\neg b$. 
   \end{itemize}

 The above argument shows that all points of $\rho$ are marked $b$ or $\neg b$ correctly. 
 The $(\Sigma \cup W, X)$-simple extension $\rho'$                 
   reflects these markings. When we do a simple projection 
   of $\rho'$ with respect to $X$, we are left with $\rho$,
   the timed word over $\Sigma \cup W$ satisfying $ENF_{\Sigma}(T)$.
    On this $\rho$,  at all positions, we have the correct marking with respect to $b$ or $\neg b$. 
  Clearly,  a point $p$ of $\rho$ is marked $b$ iff $\rho, p \models \past_{[l,u)}a$.
   Hence,  $\rho \models ENF_{\Sigma}(T)$ iff
   $\rho'  \models \mathsf{MARK}$.

  Conversely, if we start with a timed word $\rho'$ over $\Sigma \cup W \cup X$ satisfying 
  $\mathsf{MARK}$, then any point 
  $p$ of   $\rho'$ will be marked $b$ iff 
  $\past_{[l,u)}a$ holds good at $p$. Then $\rho' \setminus X$ will 
  give a word $\rho$ over $\Sigma \cup W$ that satisfies 
  $ENF_{\Sigma}(T)$ iff $\rho'$ is a $(\Sigma \cup W,X)$-simple extension.                         

\end{proof}

\subsection{Simple Versus Oversampling Projections: Formulae Size}
\label{compl}
Consider a formula $\varphi \in \mathsf{MTL}[\until_I, \past_{np}]$. 
First we discuss the case of eliminating $\past_{np}$ by simple projections. 
Assume that the number of past modalities in $\varphi$ is $n$, of which there are  
$n_b$  bounded past modalities and $n_u$ unbounded past modalities. 
i.e, $n=n_b+n_u$. 
\begin{enumerate}
\item The first step is flattening, resulting in $\varphi_{flat}$. This only increases the size 
of the formula linearly in $n$. Converting $\varphi_{flat}$ to $ENF$ again increases the size 
by a constant number; thus, $ENF_{\Sigma}(\varphi_{flat})$ has a size increase of $\mathcal{O}(n)$ with respect to $\varphi$.
\item Let us first look at the $n_u$ unbounded past modalities. 
By Lemma \ref{remove-pastinf2}, the elimination 
of each temporal definition involving an unbounded past modality 
results in adding 2 formulae $\in \mtl$, 
and hence, in 3 extra modalities.  Thus, after elimination of all 
the $n_u$ temporal definitions, we get a formula whose size is increased by $\mathcal{O}(n)$.
\item Now let us look at the elimination of the temporal definitions 
corresponding to the $n_b$ bounded past modalities. 
\item Lemma \ref{past-b2}
deals with this.  Look at formula 2(a) (in Case 2) introduced by 
 Lemma \ref{past-b2}. This results in $l-1$ new formulae, and hence results in $\mathcal{O}(l)$ extra modalities.
   Thus, the number of extra modalities introduced after elimination 
   of all the $n_b$ temporal definitions corresponding to bounded past modalities is $\leq n l_{max}$, where 
   $l_{max}$ is the maximal lower bound of all bounded past modalities in $\varphi$.  
Assuming constants are encoded in binary,  $\mathcal{O}(n_b l_{max})$ is  pseudo polynomial; hence, the formula obtained by simple projections, $\psi_1$ has in the worst case, an exponential increase in  size over $\varphi$. 
Just to illustrate,  $l_{max}=10^{10}$ will really blow up! 
\item Note that Lemma \ref{past-b2} can further be optimized by changing the formula
2(a), 2(b), 3(a) and 3(b) in Case 2. 
Recall that formula 2(a)
is $\wB(x_{t+1+l-u~b} \leftrightarrow(a \wedge (\neg a \until_{(t,t+1]} a)))$, with $t \in \{u-l, \dots, u-1\}$, 
2(b) is $\wB(y_{db}\leftrightarrow(a \wedge (\neg a \since x_{db})))$, 
while formula 3(a) is  $\bigwedge_{c\in\{0,1\}}\wB((x_{db} \wedge a_c)\rightarrow \Box_{[u,u+d)} x_c)$ and
formula 3(b) is  
 $\bigwedge_{c\in\{0,1\}}\wB((y_{db} \wedge a_c)\rightarrow \wB_{[l-d,l)} y_{1-c}))$, where  $d \in \{1, \dots, l\}$. 
 The  ``bounding'' interval 
   between two consecutive $a$'s was considered as a unit interval here : 
   we were considering the interval lengths to lie in $(u-l, u-l+1], (u-l+1, u-l+2]$ and so on till
   $(u-1,u]$. This resulted in $l-1$ formulae. Had we chosen intervals of size  2 instead of 1,  
   we would 
   have considered the intervals as $(u-l, u-l+2], (u-l+2, u-l+4], \dots, 
   (u-2,u]$, resulting in $\frac{l}{2}$ formulae. 
      In general, we could have chosen as ``period'' any $\mu$ that gives rise to $\frac{l}{\mu}$ formulae. Clearly,
      since Case 2 in Lemma \ref{past-b2} considers $\tau_k-\tau_j \in (u-l,u]$, 
      the maximum period we can consider is $u-l$ i.e, $1 \leq \mu \leq u-l$. When $\mu=u-l$, we get $\frac{l}{u-l}$ formulae. In this case, replacing 2(a),2(b),3(a),3(b), we get  
  \begin{itemize}
\item  2(a) by
 ${\wB(x_{t+1+l-u~b} \leftrightarrow (a \wedge (\neg a \until_{(t,t+u-l]} a)))}$ for 
 ${t\in \{u-l,2(u-l), \ldots ,\frac{l(u-l)}{u-l}\}}$, 
\item 2(b) by $\wB(y_{\kappa b}\leftrightarrow(a \wedge (\neg a \since x_{\kappa b})))$
  \item 3(a) by  
  $\bigwedge_{c\in\{0,1\}}\wB((x_{\kappa b} \wedge a_c)\rightarrow \Box_{[u,u+\kappa (u-l))} x_c)$  
  \item 3(b) by \\
   $\bigwedge_{c\in\{0,1\}}\wB((y_{\kappa b} \wedge a_c)\rightarrow \wB_{[l-\kappa (u-l),l)} y_{1-c}))$ 
   where $\kappa\in \{1,\ldots,\frac{l}{u-l}\}$. 
   \end{itemize}
In this case, we get an increase of $\mathcal{O}(\frac{nl}{u-l})$ over the size of $\varphi$, 
as opposed to an increase of  $\mathcal{O}(n l_{max})$. Asymptotically, this is not a big saving, so we can stick to $\mu=1$.
\end{enumerate}
Now we discuss the case of  oversampled projections. 
Lemma \ref{remove-pastinf} discussed the case of unbounded past modalities and Lemma \ref{past-b1} the case of bounded past modalities. In both cases, it can be seen that the resultant formulae 
had an increase of size by a constant number, while eliminating 
each temporal definition. Thus, the total increase of size in the resultant formula $\psi_2 \in \mtl$ is only $\mathcal{O}(n)$.

\subsection{Eliminating $\since$ from $\mathsf{MTL}[\until_I,\since]$}
\label{since-rem}
Given a formula $\varphi \in \mtlu$ over $\Sigma$, we first flatten the formula to obtain  
formula $\varphi_{flat}$ over $\Sigma \cup W$.  In this section, we elaborate \cite{formats11}, \cite{deepak08} 
on removing the temporal definitions 
of the form $[r \leftrightarrow (c \since f)]$ from $\varphi_{flat}$, using future operators.
We use the short form $\nex \varphi$ to denote $false \until \varphi$.

$[r \leftrightarrow (c \since f)]$ will be replaced by a conjunction $\nu_r$ of the following future formulae:
\begin{itemize}
 \item $\varphi_1: \wB(f \rightarrow \nex r)$
 \item $\varphi_2 : \neg r$
 \item $\varphi_3 : \wB[(r \wedge c) \rightarrow \nex r]$
 \item $\varphi_4 : \wB[r \wedge (\neg c \wedge \neg f) \rightarrow \nex \neg r]$
 \item $\varphi_5 : \wB[(\neg r \wedge   \neg f) \rightarrow \nex \neg r]$
 \end{itemize}

For example, consider the formula \\
$\varphi=(a \wedge (b \wedge (c \until_{(1,2)}[(d \since e) \wedge f])))$ built from $\Sigma=\{a,b,c,d,e,f\}$.

The flattened version $\varphi_{flat}=(a \wedge b \wedge w_2) \wedge T_1 \wedge T_2$, where 
$T_1=\wB[(d \since e) \leftrightarrow w_1]$ and $T_2= \wB[w_2 \leftrightarrow c \until_{(1,2)}[w_1 \wedge f]]$.
$\varphi_{flat}$ is built from $\Sigma \cup W$, where $W=\{w_1,w_2\}$.  

Replace $T_1$  with $\nu_{w_1}$ to obtain 
the formula \\
$\psi=
(a \wedge b \wedge w_2) \wedge \nu_{w_1} \wedge  T_2 \in \mtl$. $\psi$ is also built from  $\Sigma \cup W$  and is  
equivalent to $\varphi_{flat}$. 
It can be seen that $\varphi=\exists W. \varphi_{flat}=\exists W.\psi$.


\subsection{Proof of Lemma \ref{game:proof}}
\label{games}
We prove that the $\mtluns, \mtlsns$ are strictly less expressive than $\mtlfull$ using EF Games. 
We omit the game strategies here and give the candidate formula and pair of words.

\noindent (i) $\mtlfutpw \nsubseteq \mtluns$\\
We consider a formula in $MTL^{pw}[\fut_I]$,  $\varphi= \fut_{(0,1)}\{a \wedge \neg\fut_{[1,1]}(a\vee b)\}$. For an $n$-round game, 
consider the words $w_1=W_aW_b$ and $w_2=W_aW'_b$ with 
\begin{itemize}
 \item $W_a=(a, \delta)(a, 2 \delta) \dots (a,i\delta-\kappa)\underline{(a,i \delta)} \dots (a, n \delta)$
\item $W_b=(b, 1+\delta)(b, 1+2\delta) \dots (b,1+i\delta-\kappa)\underline{(b, 1+i\delta)} \dots (b, 1+n\delta)$
\item $W'_b=(b, 1+\delta)(b, 1+2\delta) \dots (b, 1+(i-1)\delta)(b,1+i\delta-\kappa)(b, 1+i\delta)(b, 1+(i+1)\delta) \dots (b, 1+n\delta)$ 
 \end{itemize}

$w_1 \models \varphi$, but $w_2 \nvDash \varphi$. 
The underlined $b$ in $W_b$ shows that there is a $b$ at distance 1 from $a$; however,
this is not the case with $W'_b$. 
The key observation for duplicator's win in an $\until_{NS}, \since_I$ game is that (a) any non-singular future move of spoiler can be mimicked by the duplicator from $W_aW_b$ or $W_a W'_b$ (b)
for any singular past move made by spoiler on $W_aW_b$, duplicator 
has  a reply from $W_aW'_b$. The same holds for any singular past move of spoiler made from 
$W_aW'_b$.\\

\noindent (ii) $\mtlfutp \nsubseteq \mtlsns$\\
 We consider a formula in $MTL^{pw}[\fut_I]$, $\phi' =  \fut \{b\wedge \neg \past_{[1,1]}(a \vee b)\}$. 
   We show that there is no way to express this formula in  $\mtlsns$. This is symmetrical to (i). For an $n$ round game, 
   consider the words 
   $w_1=W_aW_b$ and $w_2=W'_aW_b$ with 
\begin{itemize}
 \item $W_a=(a, \delta)(a, 2 \delta) \dots  (a, (i-1)\delta)(a, i\delta - \kappa)(a,i \delta)\dots (a, n \delta)$
 \item $W'_a=(a, \delta)(a, 2 \delta) \dots  (a, (i-1)\delta)(a,i \delta)\dots (a, n \delta)$
\item $W_b=(b, 1+\delta)(b, 1+2\delta) \dots (b, 1+(i-1)\delta)\underline{(b,1+i\delta-\kappa)}
(b, 1+i\delta) \dots (b, 1+n\delta)$
 \end{itemize}
   
 $w_1 \nvDash \varphi', w_2 \models \varphi'$.   
 The underlined $b$ in $W_b$ shows that there is an $a$ at past distance 1 
 in $W_a$, but not in $W'_a$. 
 The key observation for duplicator's win 
 in an $n$-round $\until_I, \since_{NS}$ game is that (a) any non-singular past move by spoiler from $W_a, W_b$ 
 or from $W'_a, W_b$ can be answered by duplicator, (b) for any singular future move 
   made by spoiler on $W_a, W_b$, duplicator 
has  a reply from $W'_a, W_b$. The same holds for any singular future move of spoiler made from 
$W'_a, W_b$.\\

\noindent (iii) $\mitlfp \nsubseteq \mathsf{MTL}[\until_I, \since]$.
We consider the $\mitlfp$ formula $\varphi''=\fut_{(1,2)}[a \wedge \neg \past_{(1,2)}a]$, and show that there is no  
way to express it using $\until_I, \since$. For an $n$ round game, consider the words
$w_1=W_1W_2$ and $w_2=W_1W'_2$ with
\begin{itemize}
 \item $W_1=(a, 0.5+\epsilon) \dots (a, 0.5+n\epsilon)(a,0.9+\epsilon)\dots(a,0.9+n\epsilon)$
 \item $W_2=\underline{(a, 1.5)}(a,1.6+\epsilon)(a, 1.6+2 \epsilon) \dots (a, 1.6+n \epsilon)$
 \item $W'_2=(a,1.6+\epsilon)(a, 1.6+2 \epsilon) \dots (a, 1.6+n \epsilon)$
\end{itemize}
for a very small $\epsilon >0$. Clearly, $w_1 \models \varphi'', w_2 \nvDash \varphi''$. 
The underlined $a$ in $W_2$ shows the $a$ in (1,2) which has no $a$ 
in $\past_{(1,2)}$. The key observation for duplicator's win 
 in an $n$-round $\until_I, \since$ game is that (a) when spoiler picks any position 
 in $W_1$, duplicator can play copy cat, (b) when spoiler  picks $(a, 1.5)$ in $W_2$ 
 as part of a future $(0,1)$ move from $W_1$, duplicator picks $0.9+n \epsilon$ in $W'_2$. All until, since 
 moves from the configuration $[(a.1.5),(a,0.9+n\epsilon)]$ are symmetric.

\end{document}